\documentclass[12pt]{article}

\usepackage{graphicx}
\usepackage{amssymb}

\setlength{\oddsidemargin}{0.25in}
\setlength{\textwidth}{6.0in}
\setlength{\topmargin}{0in}
\setlength{\textheight}{8.0in}

\newcommand{\blackslug}{\hbox{\hskip 1pt
        \vrule width 4pt height 8pt depth 1.5pt\hskip 1pt}}
\newcommand{\myQED}{\hfill \blackslug}
\newcommand{\la}{\langle}
\newcommand{\ra}{\rangle}
\newcommand{\mss}{\hspace*{0.05in}}
\newcommand{\mbs}{\hspace*{0.2in}}

\newtheorem{observation}{Observation}

\newenvironment{proof}
    {\pagebreak[1]{\narrower\noindent {\bf Proof:\nopagebreak}}}%
    {\myQED}

    {\myQED}

\newtheorem{lemma}{Lemma}

\newtheorem{definition}{Definition}

\begin{document}

\begin{center}
{\large \bf
Viable Algorithmic Options for \\
Creating and Adapting Emergent Software Systems}

\vspace*{0.2in}

Todd Wareham \\
Department of Computer Science \\
Memorial University of Newfoundland \\
St.\ John's, NL Canada \\
(Email: {\tt harold@mun.ca}) \\

\vspace*{0.1in}

Ronald de Haan \\
Institute for Logic, Language, and Computation \\
Universiteit van Amsterdam \\
Amsterdam, The Netherlands \\
(Email: {\tt me@ronalddehaan.eu}) \\
\vspace*{0.1in}

\today
\end{center}

\begin{quote}
{\bf Abstract}:
Given the complexity of modern software systems,
it is of great importance that such systems
be able to autonomously modify themselves, i.e., self-adapt, 
with minimal human supervision. It is critical that this adaptation both results
in reliable systems and
scales reasonably in required memory and runtime to non-trivial systems.
In this paper, we apply computational complexity analysis to evaluate 
algorithmic options for the reliable creation and adaptation of emergent 
software systems
relative to several popular types of exact and approximate
efficient solvability. We show that neither problem is 
solvable for all inputs when no restrictions are placed
on software system structure.
This intractability continues to hold 
relative to all examined types of efficient exact and approximate solvability
when software systems are restricted to run (and hence can be verified
against system requirements) in polynomial time.
Moreover, both of our problems when so restricted remain intractable under
a variety of additional restrictions on software system structure,
both individually and in many combinations. That being said, we also give
sets of additional restrictions that do yield tractability for both
problems, as well as circumstantial evidence that emergent software
system adaptation is computationally easier than emergent software
system creation.
\end{quote}

\section{Introduction}

\label{SectIntro}

Given the complexity of modern software systems and the increasing need to 
modify such systems to handle both unplanned changes in system requirements and 
varying operating environments, such systems must
be able to autonomously modify themselves, i.e., self-adapt, 
with minimal human supervision \cite{CdL+09,KR+15,ST09}. Over the last 25 years,
a great deal of research has been done on self-adaptive software systems, and 
a number of such systems based on various types of adaptation controllers
(e.g., MAPE-K feedback loop \cite{IW15}, model-predictive \cite{AP+18}, 
general control-theoretic \cite{FM+17}) have been created (see also 
\cite{CdL+09,dLG+17,dLG+13}). Certifying that adaptation does not
cause the resulting systems to violate functional and/or non-functional system 
requirements, i.e., validation and verification (V\&V), is 
a major research challenge \cite[page 21]{dLG+13}. Such V\&V must be done
quickly enough that the proposed adaptations are still valid (to handle
rapidly-changing operating environments). This is further complicated by the
requirement that enough computational effort be put into the 
adaptation space search process to ensure that the adapted system optimizes 
system performance as much as possible (to satisfy system user expectations).

A valuable complement to algorithm development work 
%such as that reported in \cite{dLG+17,dLG+13,WI+12}. 
would be to establish what the general 
algorithmic options are for software system creation and adaptation algorithms 
that are guaranteed to satisfy specified functional and non-functional
system requirements while optimizing adapted system performance. This can  be
done using the tools and techniques of computational complexity analysis
\cite{DF13,GJ79,Gol08}. The results of such analyses can be used not only 
to establish those situations in which known algorithms are the best possible 
but also to guide the development of new algorithms (by highlighting relative 
to which types of efficient solvability such algorithms can and cannot exist).

A good test case for the utility of such complexity analyses would be
a 
%recently-proposed 
type of self-adaptive systems called emergent software
systems \cite{FP17,PG+16}. Such systems initially self-assemble from a provided library 
of software components to satisfy basic system requirements, and then 
continuously self-adapt with the assistance of a learning algorithm that 
uses collected system environment events
and performance metrics to optimize a reward function such as
system runtime or memory usage. An emergent web server based on a
library of 30 components has been constructed under this paradigm and seems
to function well \cite{PG+16}. However, as noted on \cite[page 14]{FP17},

\begin{quote}
\ldots [the] learning algorithm, which is based on an exhaustive exploration 
phase, is not designed to scale up to large systems with thousands of 
compositions, but rather serves as a proof of concept and useful baseline 
against which to compare more sophisticated algorithms. This is an active 
research area for which we continue to develop more efficient and scalable 
solutions \ldots
\end{quote}
 
\noindent
Ongoing work has focused on exploiting such strategies as
making components encode single responsibilities and dividing large systems into
distributed collections of smaller (and hopefully more manageable) subsystems
\cite{FP+18,PF19}. It is precisely at this current stage of new algorithm
development that computational complexity analyses might be most useful. 

\subsection{Previous Work}

\label{SectPrevWork}

Computational complexity analyses have been done previously for 
component-based software
system creation by component selection \cite{PO99,PWM03} and component selection
with adaptation \cite{BBR05}, with \cite{BBR05,PWM03} having the additional 
requirement that the number of components in the resulting software system be 
minimized. Given the intractability of all of these problems, subsequent work 
has focused on efficient approximation algorithms for component selection. 
Though it has been shown that efficient algorithms that produce software systems
whose number of components is within a constant multiplicative factor of optimal
are not possible in general \cite{NH08}, efficient approximation algorithms 
are known for a number of special cases \cite{FBR04,HM+07,NH08}. All of these 
analyses assume that any component can be composed with any other component, 
in the sense that function definitions not given in a component $c$ can be obtained
by composition of $c$ with any other components that have the required function definitions,
i.e., component composition is not regulated using component interfaces.
Moreover, none of the formalizations used include specification of the internal
structure of software systems, system requirements, and components that are 
detailed enough to allow investigation of restrictions on these aspects that 
could make component selection or component selection with adaptation tractable.

Only one analysis to date has incorporated a component model which allows 
investigation of the tractability effects of restrictions on system requirement,
component internal, and software system structure, namely that in
in \cite{WS15} (subsequently reprinted as \cite{WS16}; see also \cite{WS22} for the full version with proofs of results). The focus in this work
was on exact polynomial-time solvability and fixed-parameter tractability of 
component-based software system creation and adaptation, where system adaptation
is in response to changes in the functional system requirements. The
authors showed that both
of these problems are intractable in general and remain so under a variety of
restrictions on system requirement, component, and software system structure. 
This was done relative to radically restricted components (single {\tt 
if-then-else} blocks) and software systems (two-level reactive systems). While
components in this model cannot arbitrarily compose in the sense described 
above, component composition was implicitly regulated by the the combination of 
two-level system structure and a parameter restricting the number of types of components in a
system, and there was no notion of a component interface, let alone interfaces providing 
state variables or functions with input
parameters and/or returned values.

\subsection{Summary of Results}

In this paper, we present the first computational complexity analyses of the
problems of emergent software system creation and adaptation. These problems can 
be stated informally as follows (and are described in more detail in Section \ref{SectForm}):

\begin{description}
\item[{\sc Emergent Software Creation} (ESCreate):] Derive (possibly by using a system
        environment function $Env()$) an emergent 
        software system $S$ relative to given libraries $L_{int}$ and $L_{comp}$ of 
        software interfaces and components
        that satisfies a given set $R$ of software requirements and has the best possible
        value for a specified reward function $Rew()$.
\item[{\sc Emergent Software Adaptation} (ESAdapt):] Given an emergent software system
        $S$ relative to given libraries $L_{int}$ and $L_{comp}$ of software interfaces and
        components that satisfies a given set of software system requirements $R$, derive 
        (possibly by using a system environment function $Env()$) an
        emergent software system $S'$ relative to $L_{int}$ and $l_{comp}$
        that satisfies $R$ and has the best possible
        value for a specified reward function $Rew()$.
\end{description}

\noindent
We consider the following types of efficient solvability (described in more detail in
Section \ref{SectResDesSolv}):

\begin{enumerate}
\item{
Polynomial-time exact solvability, such that a polynomial-time algorithm 
produces the correct output for a given input either (a) all the time 
\cite{GJ79} or (b) when such an output is known to exist (promise
solvability).
}
\item{
Polynomial-time approximate solvability, such that a polynomial-time algorithm
produces the correct output for a given input either (a) in all but a small 
number of cases \cite{HW12} or (b) with a high probability \cite{MR10}, or, in 
the case of a problem that requires an output that optimizes some cost measure,
(c) produces an output for a given input whose cost is within some arbitrarily 
small fraction of optimal \cite{AC+99}.
}
\item{
Effectively polynomial-time exact restricted solvability (e.g., 
fixed-parameter (fp-)\allowbreak tractability \cite{DF13}), such that an
algorithm produces the correct output for a given input in what is
effectively polynomial time when certain aspects of that input are of
restricted value, e.g., the number of components in the given library or
any valid assembled software system is small.
}
\end{enumerate}

\noindent
Relative to these types of solvability and various conjectures that are widely
believed to be true within the Computer Science community, e.g. $P \neq NP$
\cite{For09,GJ79}, we prove the following results (Section \ref{SectRes}):

\begin{itemize}
\item Neither ESCreate nor ESAdapt is solvable by an algorithm (regardless of
       runtime or memory usage) that gives the correct output for every input,
       i.e., both of these problems are unsolvable in the same sense as 
       Turing's classic Halting Problem \cite{Tur36}
       (Sections \ref{SectResESCUnsolv} and \ref{SectResESAUnsolv}, respectively).
\item Neither ESCreate nor ESAdapt is solvable by a polynomial-time algorithm
       in the sense (1a) above and ESCreate is not solvable by a polynomial-time
       algorithm in sense (1b) above, even when software systems are
       restricted to run in and hence can be verified against requirements in
       polynomial time (Sections \ref{SectResESC_PTExactSolv} and 
       \ref{SectResESA_PTExactSolv}, respectively).
\item Neither ESCreate nor ESAdapt is solvable by a polynomial-time algorithm
       in the senses (2a), (2b), or (2c) above, even when software systems are
       restricted to run in and hence can be verified against requirements in
       polynomial time (Sections \ref{SectResESC_PTApproxSolv} and 
       \ref{SectResESA_PTApproxSolv}, respectively).
\item{ Neither ESCreate nor ESAdapt is fixed-parameter tractable in sense (3)
        above relative to restrictions of the values of the following aspects
        of the input:

\begin{itemize}
\item The number of software component interfaces in $L_{int}$.
\item The number of software components in $L_{comp}$.
\item The maximum number of components implementing an interface.
\item The maximum number of interfaces provided by a component.
\item The maximum number of interfaces required by a component.
\item The maximum number of components in a software system.
\end{itemize}

\noindent
This fixed-parameter intractability holds both when software systems are restricted to run 
in and hence can be verified against requirements in polynomial time and relative to
many combinations of the aspects listed above, often when aspects are restricted to
small constant values. That being said, there are several combinations of 
aspect-restrictions that do yield fixed-parameter tractability
(Sections \ref{SectResESC_FPSolv} and \ref{SectResESA_FPSolv}, respectively).
}
\end{itemize}

\noindent
All of the results above hold for ESCreate (except those related to cost-
\linebreak inapproximability)
relative to any choices of $Env()$ and $Rew()$ and for ESAdapt (and ESCreate for 
cost-inapproximability) relative to any choice of $Env()$ and one of two specified
reward functions, namely $Rew_{\#comp}(S)$ (the number of components in a software
system) and $Rew_{CodeB}(S)$ (the total size of the interface and component code
is a software system). This, in combination with the unresolved fixed-parameter status
of certain aspect-combinations at this time, suggests that emergent software adaptation may 
in general be easier to do than emergent software creation (Section \ref{SectDiscImpRW}).

The list above may, at first glance, be read as 
saying that emergent software system creation and adaptation are not possible 
under {\em any} circumstances. However, this bleak interpretation is very much contrary 
to our intent. We consider only a small subset of possible restrictions 
on emergent software systems (Table \ref{TabPrm}), and our analysis, though 
complete with respect to some of these restrictions (Tables 
\ref{TabSPCA_ESCreate}--\ref{TabSPCA_ESAdapt2}), is still incomplete with 
respect to the whole subset, let alone the universe of possible restrictions.
The successes in real-world emergent software systems to date show that
tractability is indeed possible in some circumstances. The key issue now is to 
determine in detail those circumstances in which tractability does and does not hold.
Our results should thus be seen not as final statements but rather interim guidelines on 
how to
address this issue. We see the involvement of software engineers in this process
as essential (Section \ref{SectDiscCav}). To this end, we have tried to make
the reasoning used to derive our results (both in general (Section
\ref{SectResDesSolv}) and in our proofs) accessible to software 
engineers, to make plain to them the rather limited circumstances under which 
our results hold and thus enable them to break these results by suggesting
additional restrictions relative to which real-world emergent software system creation and
adaptation are provably tractable.

Before we close out this subsection, two issues with respect to the results 
listed above should be noted. First, though
certain notations (e.g., our conception of software system requirements) and some of the 
general ideas underlying proof techniques developed in \cite{WS15,WS22} are re-used in this paper,
none of the results for component-based software system creation derived in \cite{WS15,WS22} carry
over. This is because the restrictions on overall system structure and the number 
of types of components in a system critical to the proofs of those results in \cite{WS15,WS22} 
have no 
analogues in problems ESCreate and ESAdapt investigated here. Second, all results 
in this paper are derived relative to the classic Turing machine model of computation, and 
hence do not directly address issues of efficient solvability or unsolvability under other 
models of computation such as quantum computers. That being said, as will be discussed in 
Section \ref{SectDiscCav}, our results indirectly imply certain consequences for the efficient 
solvability of ESCreate and ESAdapt under such alternative models of computation.

\subsection{Organization of the Paper}

Our paper is organized as follows. In Section \ref{SectForm}, we summarize the 
emergent software system model given in \cite{FP17,PG+16} and formalize 
the problems of emergent software system creation and adaptation. In Section 
\ref{SectRes}, we first in Section \ref{SectResDesSolv} describe several 
popular conceptions of efficient solvability and then in Sections 
\ref{SectResESC} and \ref{SectResESA} assess the efficient solvability of 
emergent software system creation and adaptation, respectively, relative to each
of these conceptions. In order to focus in the main text on the implications of
our results, proofs of several of these results are given in an appendix. Our 
results are summarized and discussed in Section \ref{SectDisc}. Finally, our 
conclusions and directions for future work are given in Section \ref{SectConc}.

\section{Formalizing Emergent Software Creation and \\ Adaptation}

\label{SectForm}

In this section, we first review the basic entities in the model of emergent
software given in \cite{FP17,Por14,PG+16} --- namely, software system 
requirements, interfaces and components, component-based software systems, and 
emergent software systems. We then formalize two computational problems
associated with emergent software system creation and adaptation.

\begin{figure}[t]
\centering
\begin{tabular}{| c || c | c | c | c | c || c |}
\hline
req.\ & $x_1$ & $x_2$ & $x_3$ & $x_4$ & $x_5$ & output \\
\hline\hline
$r_1$ & T & T & T & T & T & $2$ \\
\hline
$r_2$ & T & F & F & F & T & $1$ \\
\hline
$r_3$ & F & F & F & F & F & $2$ \\
\hline
$r_4$ & F & F & F & F & F & $2$ \\
\hline
$r_5$ & T & T & T & F & T & $3$ \\
\hline
\end{tabular} \\

{\tt \small
\begin{tabular}{l l}
 & \\
\multicolumn{2}{c}{(a)} \\
 & \\
interface intSystem \{                      & interface intProc1 \{ \\
~ void systemMain(Input I)                  & ~ void callProc1(Input I) \\
\}                                          & \} \\
 & \\
interface intProc2 \{                       & interface intProc3 \{ \\
~ void callProc2(Input I)                   & ~ void callProc3(Input I) \\
\}                                           & \} \\
 & \\
interface intProc \{                         & \\
~ void proc(Input I)                        & \\
\}                                           & \\
\multicolumn{2}{c}{(b)} \\
\end{tabular}
}
\caption{An Example Emergent Software System (Adapted from
           \cite[Figure 1]{WS15}). 
           (a) Software requirements $R = \{r_1, r_2, r_3, r_4, r_5\}$ defined on
           Boolean variables $X = \{x_1, x_2, x_3, x_4, x_5\}$ and output-set
           $O = \{1, 2, 3\}$. (b) A sample component interface library $L_{int}$ consisting
           of five interfaces. 
}
\label{FigExESS1}
\end{figure}

\begin{figure}[p]
\centering
{\tt \small
\begin{tabular}{l l}
component System1                           & component System2 \\ 
\mbs provides intSystem                      & \mbs provides intSystem \\
\mbs requires intProc1, intProc2, intProc3 \{ \mbs & \mbs requires intProc1 \{ \\
\mss void systemMain(Input I) \{                & \mss void systemMain(Input I) \{ \\
\mbs if $v_I(x_1)$ then callProc1(I)         & \mbs callProc1(I) \\
\mbs elsif $v_I(x_5)$ then callProc2(I)      & \mbs \}\} \\
\mbs else callProc3(I)                       & \\
\mss \}\}                                         & \\
 & \\
component Proc1                             & component Proc2 \\
\mbs provides intProc1                       & \mbs provides intProc2 \\
\mbs requires intProc \{                      &\mbs requires intProc \{ \\
\mss void callProc1(Input I) \{                 & \mss void callProc2(I) \{ \\
\mbs proc(I)                                 & \mbs proc(I) \\
\mss \}\}                                         & \mss \}\} \\
 & \\
component Proc3                             & component Base \{ \\
\mbs provides intProc3                       & \mss void main(Input I) \{ \\
\mbs requires intProc \{                      & \mbs systemMain(Input I) \\
\mss void callProc3(Input I) \{                 & \mss \}\} \\
\mbs proc(I)                                 & \\
\mss \}\}                                         & \\
 & \\
component ProcA                             & component ProcB \\
\mbs provides intProc \{                      & \mbs provides intProc \{ \\
\mss void proc(Input I) \{                      & \mss  void proc(Input I) \{ \\
\mbs if $v_I(x_4)$ then output 2             & \mbs if not $v_I(x_2)$ then output 1 \\
\mbs elsif not $v_I(x_3)$ then output 1      & \mbs elsif not $v_I(x_4)$ then output 2 \\
\mbs elsif $v_I(x_5)$ then output 3          & \mbs else output 3 \\
\mbs else $a_1$                              & \mss \}\} \\
\mss \}\} \\                                      & \\
& \\
component ProcC                             & component ProcD \\
\mbs provides intProc \{                      & \mbs provides intProc \{ \\
\mss void proc(Input I) \{                      & \mss void proc(Input I) \\
\mbs if $v_I(x_4)$ then output 2             & \mbs output 2 \\
\mbs else output 2                           & \mss \}\} \\
\mss \}\}                                       & \\
& \\
\multicolumn{2}{c}{(c)} \\
\end{tabular}
}
\caption{ An Example Emergent Software System (Cont'd). 
           (c) A sample software component library $L_{comp}$
           consisting of ten components.
}
\label{FigExESS2}
\end{figure}

\begin{figure}[t]
\begin{center}
\includegraphics[width=5.0in]{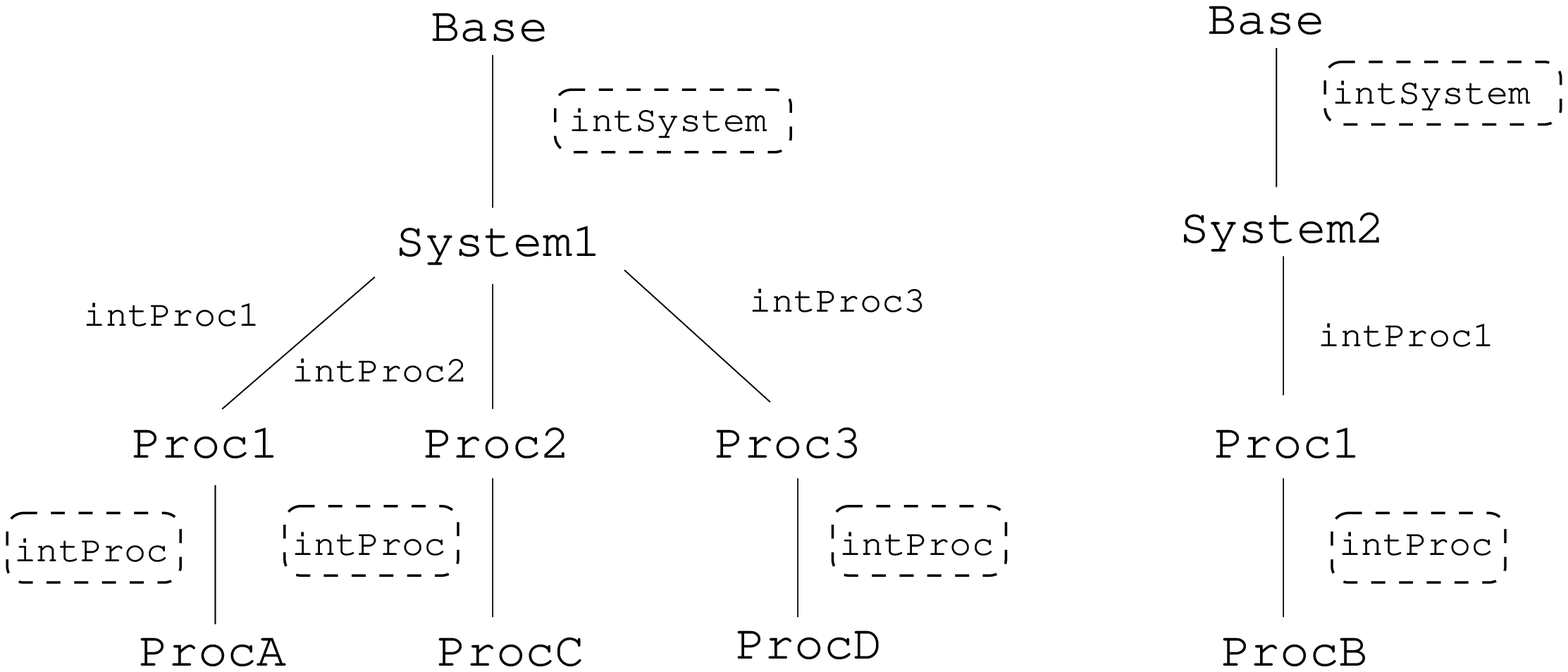}
\end{center}
\caption{ An Example Emergent Software System (Cont'd). 
           (d) Component wiring trees corresponding to two valid component-based
           software systems for $R$ in part (a) based on $L_{int}$, $L_{comp}$,
           and base component {\tt Base} $\in L_{comp}$ given in
           parts (b) and (c). Note that interfaces {\tt intSystem} and {\tt intProc} can each
           be implemented by multiple components (namely, {\tt System1} and {\tt System2} 
           ({\tt intSystem}) and {\tt ProcA}, {\tt ProcB}, {\tt ProcC}, and {\tt ProcD}
           ({\tt intProc})) and are hence key in allowing component choices yielding
           different software systems; this is acknowledged by putting dashed boxes
           around these interfaces in the component wiring trees diagrams.
}
\label{FigExESS3}
\end{figure}

The basic entities in our model are formalized as follows:

\begin{itemize}
\item{
\textbf{Software system requirements}: The requirements will be a set $R = 
\{r_1, r_2,$ \linebreak  $\ldots, r_{|R|}\}$ of input-output pairs where each 
pair $r_j = (i_j,o_j)$ consists of an input $i_j$ defined by a particular sequence 
of truth-values $i_j = \langle v_{i_j}(x_1), v_{i_j}(x_2), \ldots,$ \linebreak
$v_{i_j}(x_{|X|})\rangle$, 
$v_{i_j}(x_k) \in \{True, False\}$, relative to each of the Boolean variables $x_k$ 
in set $X = \{x_1, x_2, \ldots, x_{|X|}\}$ and an output $o_j$ from set $O = 
\{o_1, o_2, \ldots, o_{|O|}\}$. As such, these are functional requirements 
describing wanted system input-output behaviors and correspond to the 
pre-specified abstract goal of the system \cite[page 3]{PF19}.
An example set of software system requirements is given in part (a) of 
Figure \ref{FigExESS1}.
}
\item{
\textbf{Interfaces and Components}: We use the runtime component model 
underlying the Dana programming language as specified in \cite{Por14,PG+16}. In
particular, following \cite[Page 335]{PG+16}, let an \textbf{interface} be a set
of function prototypes, each comprising a function name, return type and 
parameter types, and a set of transfer fields, typed pieces of state that 
persist across alternate implementations of the interface during runtime 
adaptation. A \textbf{component} has one or more provided interfaces and zero or
more required interfaces; the component has implementations for all functions 
specified in the provided interfaces, and these implementations in turn call 
upon functions and transfer fields specified in the required interfaces. We will
assume that all available interfaces and components are stored in libraries 
$L_{int}$ and $L_{comp}$, respectively.  Example interface and component 
libraries relative to the software system requirements in part (a) of Figure \ref{FigExESS1}
are given shown in parts (b) and (c) of Figures \ref{FigExESS1} and \ref{FigExESS2},
respectively.

Note that the interfaces and components in Figures\ref{FigExESS1} and \ref{FigExESS2} (and indeed all subsequent
interfaces and components specified in this paper) are described using a notation
approximating that of the Dana programming language, e.g., \cite[Figure 2]{PG+16}. 
This is done to ensure both that the proofs of results given here do not invoke programming 
language features more powerful than those available in Dana and that these results are 
hence applicable to emergent software systems as described in \cite{FP17,Por14,PG+16}, which
are written in Dana. 
}
\item{
\textbf{Component-based software systems}: We use the model of component-based 
software systems specified in \cite{FP17,Por14,PG+16}. In particular, given 
interface and component libraries $L_{int}$ and $L_{comp}$ and a base component
$c \in L_{comp}$ implementing a $main$ function, a \textbf{valid component-based
software system} $S$ consists of a set of component-choices from $L_{comp}$ 
including $c$ that not only implements all required interfaces of $c$ but also 
recursively implements all required interfaces of those component-choices and 
their sub-component-choices, if any. The interface-connections between these 
components are called \textbf{wirings}. In order to allow data transfer
fields to hold different values relative to different implementations of
an interface by the same component, these implementations are done 
relative to copies of that component. Moreover, in cases where a 
component provides multiple interfaces, different implementations of
that component relative to two of those interfaces are done relative to
reduced copies of that component, both of which only contain code for and
thus provide only those services in the component specified by their
respective implementing interfaces. 

Any valid component-based software system has an associated directed vertex- and
arc-labeled tree in which the vertices are components, the arcs are the 
wirings, and the vertices and arcs are labeled with the names of the associated
components and interfaces from $L_{comp}$ and $L_{int}$, respectively; let this
tree be called the \textbf{component wiring tree $T$ associated with $S$}. 
The component wiring trees of two valid component-based software systems
relative to $R$, $L_{int}$, $L_{comp}$, and base component {\tt Base} $\in L_{comp}$ given in parts (a)--(c) of Figures
\ref{FigExESS1} and \ref{FigExESS2} are given in Figure \ref{FigExESS3}. 
Note that such a $T$ has a single root vertex (namely, base component $c$), 
exactly one directed path from this root to any vertex,
and is labeled such that a component vertex-label does not occur more
than once on any directed path from the root to a leaf, i.e., there
are no recursive dependencies \cite[page 10]{FP17}. This prevents the 
possibility of infinite-depth software systems resulting from the 
interface-component implementation sequence between two same-label components
on such a path being repeated an infinite number of times. 

Given a set $R$ 
of software system requirements, a valid component-based software system $S$ is
a \textbf{working component-based software system relative to $R$} if for each 
input-output pair $(i_j, o_j) \in R$, the output of $S$ given input $i_j$ is $o_j$. 
For example, the software system on the left in Figure \ref{FigExESS3}
satisfies all requirements in $R$ given in Figure\ref{FigExESS1}(a)  and hence is
a working component-based software system relative to $R$, but the software system
on the right is not (because it produces different outputs (3, 1, 1, and 2, respectively)
for requirements $r_1$, $r_3$, $r_4$, and $r_5$).
}
\item{
\textbf{Emergent software system}: As defined in \cite{FP17,PG+16}, an
\textbf{emergent \linebreak (component-based) software system} is one that, given initial 
functional software systems requirements $R$, interface and component libraries
$L_{int}$ and $L_{comp}$, and a base component $c \in L_{comp}$, self-assembles and 
self-adapts as necessary to optimize system performance as its running environment changes 
over time. For a software system $S$, a system's running environment and 
performance are quantified in terms of events and metrics whose values are sampled at 
discrete times during system operation \cite[pages 10--11]{FP17}; aspects of system
performance used by a learning algorithm to guide both self-assembly and
self-adaptation are in turn summarized in a reward function. As currently 
implemented \cite{FP17},
the initial self-assembly phase creates a list of all working software systems relative
to the given $L_{int}$, $L_{comp}$, and $R$, where each system is described by a unique
ID string that lists all components in the system and their interconnections. During
the subsequent self-adaptation phase, the learning algorithm searches over this list to find
appropriate alternatives to the currently-running system that might help optimize system 
performance \cite[Section 3.2.1]{FP17}.

In this paper, we
will assume that a system's running environment and performance are sampled using
an environment function $Env()$ that maps a given system onto a collection of
events and metric-values and that the reward function $Rew()$ maps the latest
accumulated performance metric values for a system onto a single positive integer value. For
simplicity, we shall further assume that $Env()$ and $Rew()$ are computable in time 
polynomial in the size of $S$, $Rew(S)$ is optimized by minimization, i.e., smaller values 
of $Rew(S)$ are preferred, and $Rew()$ is used to choose among but does not alter the
set of possible working software systems relative to a given $L_{int}$, $L_{comp}$,
and $R$.
}
\end{itemize}

We can now formalize computational problems corresponding to emergent 
software adaptation as conceived in \cite{FP17,PG+16}:

\vspace*{0.1in}

\noindent
{\sc Emergent Software Creation } (ESCreate)  \\
{\em Input}: Software system requirements $R$, interface and component libraries
              $L_{int}$ and $L_{comp}$, a base component $c \in L_{comp}$, and
              reward and environment functions $Rew()$ and $Env()$.\\
{\em Output}: A working component-based software system $S$ based on $c$ relative to 
               $L_{int}$, $L_{comp}$, and $R$ that has the 
               smallest value of $Rew(S)$ over all working systems based on $c$
               relative to $L_{int}$, $L_{comp}$, and $R$, if any working system 
               exists, and special symbol $\bot$ otherwise.

\vspace*{0.1in}

\noindent
{\sc Emergent Software Adaptation } (ESAdapt)  \\
{\em Input}: Software system requirements $R$, interface and component libraries
              $L_{int}$ and $L_{comp}$, a working component-based software 
              system $S$ based on component $c \in L_{comp}$ relative to $R$, 
              $L_{int}$, and $L_{comp}$, and reward and environment functions 
              $Rew()$ and $Env()$.\\
{\em Output}: A working component-based software system $S'$ based on $c$ 
               relative to $L_{int}$, $L_{comp}$, and $R$ that has the
               smallest possible value of $Rew(S')$ over all working systems based on $c$
               relative to $L_{int}$, $L_{comp}$, and $R$.

\vspace*{0.1in}

\noindent
Problem ESCreate corresponds to the initial creation of a working emergent
software system, while ESAdapt corresponds to each subsequent modification of the system to 
optimize the reward function as the system's running environment changes over time. 
These problems do not correspond directly to emergent system creation and 
adaptation as currently implemented, in that ESCreate only returns one rather than all
working software systems relative to $L_{int}$, $L_{comp}$, $R$, ESCreate optimizes
the performance of this returned system relative to $Rew()$, and ESAdapt only has access to
$S$ and not the complete list of working software systems relative to $L_{int}$,
$L_{comp}$, and $R$.\footnote{
An additional difference and possible source of confusion is that the term ``adaptation''
in \cite{FP17} refers to the process of changing the components and interconnections in the 
currently-running software system to those in the candidate selected by the learning 
algorithm. In this paper, we shall instead use ``adaptation'' to refer to the overall 
self-adaptation process process described by ESAdapt, which may involve multiple episodes 
of adaptation in the sense of \cite{FP17} to achieve an optimal system under $Rew()$. 
}
However, as will be discussed in Section \ref{SectDiscImpRW}, complexity
results for ESCreate and ESAdapt can still be used to investigate both current and potential 
implementations of emergent software systems.

Two additional notes are in order about our definitions of ESCreate and ESAdapt.
First, the given $L_{int}$, and $L_{comp}$ in an
input of ESCreate may not allow a working software system relative to the given $R$
but there is always a working software system for any input of ESAdapt --- namely, $S$.
Second, $Rew()$ and $Env()$ are part of the input for both ESCreate and ESAdapt 
and must be included in any instance of these problems, i.e., it cannot be the case that
$Env()$ and/or $Rew()$ are empty. That being said, $Env()$ (as well as $S$
in the case of ESAdapt) need not necessarily be used by any algorithm solving these 
problems but are provided as part of the problem inputs to make results derived here
relevant to emergent systems as described in \cite{FP17,PG+16} (see Section \ref{SectDisc}
for further discussion on the latter point).

In this paper, we will consider the following 
versions of $Rew()$:

\begin{itemize}
\item $Rew_{\#comp}(S) =$ the number of components in $S$.
\item $Rew_{CodeB}(S) =$ the size of the codebase of $S$, i.e., the
                        total number of lines of code in the 
                        interfaces and components comprising $S$.
\end{itemize}

\noindent
Relative to a particular reward function $Rew_X()$, we refer to our problems 
above as ESCreate under $Rew_X()$ and ESAdapt under $Rew_X()$, respectively.
Note that results derived relative to $Rew_{\#comp}()$ and $Rew_{CodeB}()$ have 
broad applicability as the values of these functions correlate with the values 
of at least some of the reward functions studied to date in emergent software 
systems, e.g., system response time \cite[page 11]{FP17}. As none of our
result proofs rely on $Env()$, we need not specify its form further.

Let us now illustrate problems ESCreate and ESAdapt relative to the example emergent 
software system described in Figures \ref{FigExESS1}--\ref{FigExESS3}:

\begin{itemize}
\item{
{\bf Valid Software Systems relative to $L_{int}$ and $L_{comp}$}:
As each interface {\tt proc1}, {\tt proc2}, and {\tt proc3} in software systems
based on component {\tt System1} can be implemented with any of the components
{\tt ProcA}, {\tt ProcB}, {\tt ProcC}, or {\tt ProcD}, there are $4^3 = 64$ valid software 
systems relative to the general structure on the left in part (d) of Figure \ref{FigExESS3}.
By analogous reasoning relative to component {\bf System2} and interface {\tt proc1},
there are 4 valid software systems relative to the general structure on the right. Hence,
there are in total 68 valid software systems relative to $L_{int}$ and $L_{comp}$.
}
\item{
{\bf Working Software Systems relative to $R$}:
The reader can verify that
(i) there are no working software systems relative to $R$ that incorporate component 
{\tt System2} and (ii) all working software systems relative to $R$ that incorporate
component {\tt System1} must implement interface {\tt Proc1} with component {\tt ProcA}
and interface {\tt Proc3} with either of the components {\tt ProcC} or {\tt ProcD}, and
that interface {\tt Proc2} (as its implementing component code is never executed) can
be implemented by any of the components {\tt ProcA, ProcB, ProcC} or {\tt ProcD}.
Thus, the implementations of {\tt Proc1}, {\tt Proc2}, and {\tt Proc3} in
{\tt System1} that yield working software systems are
{\tt (ProcA, ProcA, ProcC)}, 
{\tt (ProcA, ProcA, ProcD)}, 
{\tt (ProcA, ProcB, ProcC)}, 
{\tt (ProcA, ProcB, ProcD)}, 
{\tt (ProcA, ProcC, ProcC)}, 
{\tt (ProcA, ProcC, ProcD)}, 
{\tt (ProcA, ProcD, ProcC)}, and
{\tt (ProcA, ProcD, ProcD)}, 
giving in total 8 working software system relative to $R$.
}
\item{
{\bf Output of ESCreate and ESAdapt under different $Rew()$}: As all working software systems
relative to $R$
have 8 components, ESCreate under $Rew_{\#comp}()$ can return any of them;
however, as the system with the smallest codebase implements both {\tt Proc2} and
{\tt Proc3} with {\tt ProcC}, it is this system that would be selected by 
ESCreate under $Rew_{CodeB}()$. By analogous reasoning, ESAdapt under 
$Rew_{\#comp}$ can return any of the 8 working software systems given system $S$ 
drawn from the same set; however, ESAdapt under $Rew_{CodeB}()$ must (regardless of the 
choice of the given $S$) return the system that implements both {\tt Proc2} and 
{\tt Proc3} with {\tt ProcC}.
}
\end{itemize}

\noindent
This concludes our formalization of emergent software system creation and adaptation. 
A reasonable conjecture at this point is that
ESAdapt will be easier to solve than ESCreate, given that the former is given a
working system as part of its input. This will be assessed below.

\section{Results}

\label{SectRes}

In this section, we will use computational complexity analysis to
assess viable algorithmic options for efficient emergent software system
creation and adaptation. This will be done relative to various
desirable types of efficient solvability described in Section
\ref{SectResDesSolv}. The results of our analyses for emergent
software creation and adaptation are given in Sections \ref{SectResESC}
and \ref{SectResESA}, respectively. 

It turns out that both ESCreate and ESAdapt are unsolvable in the most
general possible case --- that is, neither ESCreate nor ESAdapt have algorithms
that always return the correct output for an input incorporating any possible choices of
$Env()$ and $Rew()$ (in the case of ESCreate) or any possible choice of $Env()$ and
either of $Rew_{\#comp}()$ or $Rew_{CodeB}()$ (in the case of ESAdapt) and in which there 
are no restrictions on the form, size, or running times of $L_{int}$ and $L_{comp}$, their 
member interfaces and components, or any software systems created using $L_{int}$ and 
$L_{comp}$ (see Sections \ref{SectResESCUnsolv} and \ref{SectResESAUnsolv},
respectively). Hence, the remainder of our analyses will be done relative to restricted 
versions of these problems in which candidate component-based software systems run and 
hence can be verified against given system requirements in polynomial time.

\subsection{Types of Efficient Solvability}

\label{SectResDesSolv}

Consider the following desirable forms of solvability:

\begin{enumerate}
\item{
\textbf{Polynomial-time exact solvability}: An exact 
polynomial-time algorithm is a deterministic algorithm whose runtime is 
upper-bounded by $c_1|x|^{c_2}$, where $|x|$ is the size of the input $x$
and where $c_1$ and $c_2$ are constants, and
is always guaranteed to produce the correct output for all inputs. A problem 
that has such an algorithm is said to be \textbf{polynomial-time 
tractable}. Polynomial-time tractability is desirable because runtimes increase
slowly as input size increases, and hence allow the solution of larger inputs.

It is possible that the computational difficulty of a problem may be inflated
 in general by inputs that have no solutions, and hence force any algorithm to
exhaustively consider all possible candidate solutions. In such cases, it is 
useful to assess whether a problem is \textbf{polynomial-time exact promise 
solvable} --- that is, whether that problem is exactly solvable in polynomial 
time on those inputs which are guaranteed to have solutions, where these
guarantees are known as \textbf{promises}.
}
\item{
\textbf{Polynomial-time approximate solvability}: A polynomial-time 
approximation algorithm is an algorithm that runs in polynomial time in an 
approximately correct (but acceptable) manner for all inputs. There are a number
of ways in which an algorithm can operate in an approximately correct manner. 
Three of the most popular ways are as follows:

\begin{enumerate}
\item {\bf Frequently Correct (Deterministic)} \cite{HW12}: Such an algorithm 
       runs in polynomial time and gives correct solutions for all but a very 
       small number of inputs. In particular, if the number of inputs for each 
       input-size $n$ on which the algorithm gives the wrong or no answer 
       (denoted by the function $err(n)$) is sufficiently small (e.g., $err(n) 
       = c$ for some constant $c$), such algorithms may be acceptable.
\item {\bf Frequently Correct (Probabilistic)} \cite{MR10}: Such an algorithm 
       (which is typically probabilistic) runs in polynomial time and gives 
       correct solutions with high probability.  In particular, if the 
       probability of correctness is $\geq 2/3$ (and hence can be boosted by 
       additional computations running in polynomial time to be correct with 
       probability arbitrarily close to 1 \cite[Section 5.2]{Wig07}), such 
       algorithms may be acceptable.
\item {\bf Approximately Optimal} \cite{AC+99}: Such an algorithm $A$ runs in
       polynomial time and gives a solution $A(x)$ for an input $x$ whose value
       $val(A(x))$ is guaranteed to be within a multiplicative factor $f(|x|)$ of
       the value $v_{OPT}(x)$ of an optimal solution for $x$, i.e., 
       $|v_{OPT}(x) - val(A(x))| \leq f(|x|) \times v_{OPT}(x)$ for any input $x$
       for some function $f()$. A problem with such an algorithm is said to be 
       polynomial-time $f(|x|)$-approximable. In particular, if $f(|x|)$ is a 
       constant very close to 0 (meaning that the algorithm is always
       guaranteed to give a solution that is either optimal or very close to 
       optimal), such algorithms may be acceptable.
\end{enumerate}
}
\item{
\textbf{Effectively polynomial-time exact restricted solvability}: Even if
a problem is not solvable in any of the senses above, a restricted version
of that problem may be exactly solvable in close-to-polynomial time.
Let us characterize restrictions on problem inputs in terms of a 
set $K = \{k_1, k_2, \ldots, k_{|K|} \}$ of aspects of the input. 
For example, possible restrictions on the inputs ESCreate could
be the number of given software requirements, the number of components
in $L_{comp}$, and the maximum number of components in a working
software system relative to $L_{int}$, $L_{comp}$, and $R$ (see also
Table \ref{TabPrm} in Section \ref{SectResESC_FPSolv}). Let each such
aspect be called a {\bf parameter}.

One of the most
popular ways in which an algorithm can operate in close-to-polynomial
time relative to restricted inputs is {\bf fixed-parameter (fp-) \linebreak tractability} 
\cite{DF99}. Such an algorithm runs in time that is non-polynomial purely in 
terms of the parameters in $K$, {\em i.e.}, in time $f(K)|x|^c$ where $f()$ is 
some function, $|x|$ is the size of input $x$, and $c$ is a constant. A problem
with such an algorithm for parameter-set $K$ is said to be \textbf{fixed-parameter
(fp-)tractable relative to $K$}. Fixed-parameter tractability
generalize polynomial-time exact solvability by allowing the leading constant 
$c_1$ of the input size in the runtime upper-bound of an algorithm to be a 
function of $K$. Though such algorithms run in non-polynomial time in general, 
for inputs in which all the parameters in $K$ have very small constant values 
and $f(K)$ thus collapses to a possibly large but nonetheless constant 
value, such algorithms (particularly if $f()$ is suitably well-behaved,
({\em e.g}, $(1.2)^{k_1 + k_2}$) may be acceptable.
}
\end{enumerate}

In the following two subsections, we shall evaluate the algorithmic options
for ESCreate and ESAdapt, respectively,  relative to each of these types of 
solvability. Our unsolvability proofs will use reductions between pairs
of problems, where a reduction from a problem $\Pi$ to a problem $\Pi'$ is
essentially an efficient algorithm $A$ for solving $\Pi$ which uses a 
hypothetical algorithm for solving $\Pi'$. Reductions are useful by the
following logic:
 
\begin{itemize}
\item If $\Pi$ reduces to $\Pi'$ and $\Pi'$ is efficiently solvable by
       algorithm $B$ then $\Pi$ is efficiently solvable (courtesy of
       the algorithm $A'$ that invokes $A$ relative to $B$).
\item If $\Pi$ reduces to $\Pi'$ and $\Pi$ is not efficiently solvable 
       then $\Pi'$ is not efficiently solvable (as otherwise, by the logic
       above, $\Pi$ would be efficiently solvable, which would be a
       contradiction).
\end{itemize}

\noindent
We will use the following three types of reducibility:

\begin{definition}
\cite[Section 3.1.2]{Gol08}
Given decision problems $\Pi$ and $\Pi'$, i.e., problems whose answers
are either ``Yes'' or ``No'', $\Pi$ {\em polynomial-time (Karp) reduces
to} $\Pi'$ if there is a polynomial-time computable function $f()$ such that for
any instance $x$ of $\Pi$, the answer to $\Pi$ for $x$ is ``Yes'' if and only 
if the answer to $\Pi'$ for $f(x)$ is ``Yes''.
\end{definition}

\begin{definition}
\cite[Section 3.1.2]{Gol08}
Given search problems $\Pi$ and $\Pi'$, i.e., problems whose answers
are actual solutions rather than just  ``Yes'' or ``No'', $\Pi$ 
{\em polynomial-time (Levin) reduces to} $\Pi'$ if there is a pair of 
polynomial-time functions $f()$ and $g()$ such that for any instance $x$ of 
$\Pi$, the answer to $\Pi$ for $x$ is $g(x,y)$ if and only if the answer to
$\Pi'$ for $f(x)$ is $y$.
\end{definition}

\begin{definition}
\cite{DF99}\footnote{
Note that this definition given here is actually Definition 6.1 in 
\cite{vRB+19}, which modifies that in \cite{DF99} to accommodate
parameterized problems with multi-parameter sets.
}
Given parameterized decision problems $\Pi$ and $\Pi'$, $\Pi$ {\em parameterized
reduces to} $\Pi'$ if there is a function $f()$ which transforms instances
$\langle x, K \rangle$ of $\Pi$ into instances $\langle x', K' \rangle$ of
$\Pi'$ such that $f()$ runs in $f'(K)|x|^c$ time for some function $f'()$ and
constant $c$, $k' = g_{k'}(K)$ for each $k' \in K$ for some function $g_{k'}()$,
and for any instance $\langle x, K \rangle$ of $\Pi$, the answer to $\Pi$ for 
$\langle x, K\rangle$ is ``Yes'' if and only if the answer to $\Pi'$ for 
$f(\langle x, K \rangle)$ is ``Yes''.
\end{definition}

\noindent
Our reductions will be from versions of the following problems:

\vspace*{0.1in}

\noindent
{\sc Turing Machine Halting} (TM Halting) \cite{Tur36}  \\
{\em Input}: A Turing Machine $M$ and a binary string $x$. \\
{\em Question}: Does $M$ halt when given $x$ as input?

\vspace*{0.1in}

\noindent
{\sc Dominating set} \cite[Problem GT2]{GJ79} \\
{\em Input}: An undirected graph $G = (V, E)$ and a positive integer $k$. \\
{\em Question}: Does $G$ contain a dominating set of size $k$,
              i.e., is there a subset $V' \subseteq V$, $|V'| = k$, such
              that for all $v \in V$, either $v \in V'$ or there is at least one
              $v' \in V'$ such that $(v, v') \in E$?

\vspace*{0.1in}

\noindent
{\sc Optimal Dominating set} ({\sc Dominating set}$^{OPT}$) \\
{\em Input}: An undirected graph $G = (V, E)$. \\
{\em Output}: A dominating set in $G$ of minimum size.

\vspace*{0.1in}

\noindent
For each vertex $v \in V$ in a graph $G$, let the complete neighbourhood 
$N_C(v)$ of $v$ be the set composed of $v$ and the set of all vertices in $G$ 
that are adjacent to $v$ by a single edge, i.e., $v \cup \{ u ~ | ~ u ~ \in 
V ~ \rm{and} ~ (u,v) \in E\}$. We assume below for each instance of 
{\sc Dominating set} an arbitrary ordering on the vertices of $V$ such that
$V = \{v_1, v_2, \ldots, v_{|V|}\}$. Note that only the first of the three
problems above is provably unsolvable (indeed, unsolvable in the sense that 
there can be no algorithm period that returns the correct output for every 
input \cite[Section 9.2.4]{HMU01}). Versions of the others are only known to be
unsolvable relative to the types of efficient solvability listed at the start 
of this subsection modulo the conjectures $P \neq NP$ and 
$FPT \neq W[2]$; however, this is not a problem in practice as both of 
these conjectures are widely believed within computer science to be true 
\cite{DF13,For09}.

As we shall often see in the following two sections, a single reduction may
imply multiple results. For example, with respect to the third of the
solvability options described above, additional and 
sometimes stronger fp-tractability and intractability results can
often be derived using the following three lemmas.

\begin{lemma}
\cite[Lemma 2.1.30]{War99}
If problem $\Pi$ is fp-tractable relative to parameter-set $K$ then $\Pi$ is
       fp-tractable for any parameter-set $K'$ such that $K \subset K'$.
\label{LemPrmProp1}
\end{lemma}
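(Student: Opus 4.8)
The plan is to observe that the very algorithm witnessing fp-tractability relative to $K$ already witnesses fp-tractability relative to any superset $K'$, essentially for free. First I would unpack the definition given in Section~\ref{SectResDesSolv}: since $\Pi$ is fp-tractable relative to $K$, there is an algorithm $A$, a function $f()$, and a constant $c$ such that for every input $x$, algorithm $A$ correctly solves $\Pi$ on $x$ in time at most $f(K)|x|^c$, where $f(K)$ denotes $f$ evaluated on the tuple of values that the aspects in $K$ take on $x$.

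Next I would argue that this same $A$ serves as an fp-tractable algorithm for $\Pi$ relative to $K'$. Writing $K' = K \cup (K' \setminus K)$, the values of the aspects in $K$ on any input $x$ form a sub-tuple of the values of the aspects in $K'$ on $x$; hence one may define a function $f'()$ on the aspects of $K'$ by setting $f'(K') = f(K)$, i.e., $f'$ simply discards the values of the aspects in $K' \setminus K$. Then $A$ runs in time at most $f'(K')|x|^c$ on every input $x$, which is exactly the form required by the definition of fp-tractability relative to $K'$, so $\Pi$ is fp-tractable relative to $K'$.

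The only point requiring (minimal) care is the bookkeeping in the previous step, namely justifying that ``a function of the aspects in $K$'' can legitimately be regarded as ``a function of the aspects in $K'$ that ignores the extra coordinates''; but this is immediate once a parameter-set is taken, as in Section~\ref{SectResDesSolv}, to be a finite collection of input aspects. So there is no genuine obstacle here: the lemma is a direct consequence of the definition of fixed-parameter tractability, and its role in the paper is purely to let a single fp-tractability result be inherited by every larger parameter-set (dually, any fp-intractability result for $K$ is then what must be established for the smallest parameter-sets, the stronger statement).
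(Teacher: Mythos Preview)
Your argument is correct and is the standard elementary proof of this fact. Note, however, that the paper does not actually supply its own proof of this lemma: it is simply quoted from \cite[Lemma~2.1.30]{War99} and used as a black box, so there is no in-paper proof to compare against. Your write-up is exactly the kind of justification one would expect for this citation, and the closing remark about the dual role with Lemma~\ref{LemPrmProp2} is apt.
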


\begin{lemma}
\cite[Lemma 2.1.31]{War99}
If problem $\Pi$ is fp-intractable relative to parameter-set $K$ then $\Pi$ is
       fp-intractable for any parameter-set $K'$ such that $K' \subset K$.
\label{LemPrmProp2}
\end{lemma}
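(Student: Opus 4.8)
The plan is to obtain this lemma immediately as the contrapositive of Lemma~\ref{LemPrmProp1}. I would suppose, for contradiction, that $\Pi$ were fp-tractable relative to some parameter-set $K'$ with $K' \subset K$. Applying Lemma~\ref{LemPrmProp1} with $K'$ playing the role of the smaller set and $K$ the role of the larger set --- which is legitimate precisely because $K' \subset K$ --- would then yield that $\Pi$ is fp-tractable relative to $K$, contradicting the hypothesis that $\Pi$ is fp-intractable relative to $K$. Hence no such $K'$ can exist, which is exactly the claim.

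For completeness I would also record the direct argument, since it makes transparent why enlarging a parameter-set never hurts and therefore shrinking one never helps. If $\Pi$ admits an algorithm with runtime bounded by $f(K')\,|x|^{c}$ for the smaller set $K'$, then on any instance the tuple of parameter values indexed by $K'$ is a sub-tuple of the tuple of parameter values indexed by $K$ (this is the content of $K' \subset K$). Consequently $f(K')$ can be viewed as a function $\widehat{f}(K)$ of the full parameter-set --- one that simply ignores the coordinates in $K \setminus K'$ --- and the very same algorithm then runs in time $\widehat{f}(K)\,|x|^{c}$, witnessing fp-tractability of $\Pi$ relative to $K$ and again contradicting fp-intractability relative to $K$.

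I do not anticipate any genuine obstacle here: the statement is a bookkeeping observation about which quantities the function in the fp-tractability runtime bound is permitted to depend on. The only point that needs to be stated with care is the meaning of $K' \subset K$ --- namely, that the parameters (and hence their values on each fixed instance) of $K'$ form a sub-collection of those of $K$ --- since it is this containment, and nothing more, that lets a runtime bound depending only on $K'$ be re-read as a runtime bound depending on $K$.
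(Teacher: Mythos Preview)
Your argument is correct and is precisely the standard one: Lemma~\ref{LemPrmProp2} is the contrapositive of Lemma~\ref{LemPrmProp1}, and your direct unpacking of why a bound $f(K')|x|^c$ can be read as a bound $\widehat{f}(K)|x|^c$ is exactly the content of that contrapositive. Note, however, that the paper does not actually supply its own proof of this lemma --- it is stated with a citation to \cite[Lemma~2.1.31]{War99} and used as a black box --- so there is no in-paper argument to compare against; your proposal simply fills in the (routine) details that the paper omits.
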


\begin{lemma}
\cite[Lemma 2.1.35]{War99}.
If problem is $NP$-hard when all parameters in \linebreak parameter-set $K$ have constant
values then $\Pi$ cannot be fp-tractable relative to any subset of $K$ unless
$P = NP$.
\label{LemPrmProp3}
\end{lemma}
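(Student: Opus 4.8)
The plan is to establish the statement by a direct contradiction argument: assume $\Pi$ is fp-tractable relative to some parameter-set $K' \subseteq K$, and derive $P = NP$ from the $NP$-hardness hypothesis. First I would unpack the fp-tractability assumption using the definition from Section~\ref{SectResDesSolv}: there is an algorithm $A$ and a function $f'()$ such that $A$ correctly solves $\Pi$ in time $f'(K')|x|^{c}$ for some constant $c$, where the quantities in $K'$ are instantiated to the parameter-values of the input $x$. Next I would unpack the hypothesis: there is an assignment of constant values to every parameter in $K$ (indeed, the argument works for any assignment for which hardness is asserted) such that the restriction $\Pi_{\mathrm{const}}$ of $\Pi$ to instances realising those values is $NP$-hard.

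The key step is the observation that, because $K' \subseteq K$, every instance of $\Pi_{\mathrm{const}}$ also has constant values on all parameters in $K'$ --- namely, the corresponding sub-tuple of the fixed constants. Hence, when $A$ is run on instances of $\Pi_{\mathrm{const}}$, the leading factor $f'(K')$ is $f'$ evaluated at a fixed tuple of constants and so collapses to a single constant $d$; thus $A$ decides $\Pi_{\mathrm{const}}$ in time $d|x|^{c}$, i.e., in polynomial time, so $\Pi_{\mathrm{const}} \in P$. Since $\Pi_{\mathrm{const}}$ is $NP$-hard, a polynomial-time algorithm for it yields polynomial-time algorithms for every problem in $NP$, giving $P = NP$. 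Taking the contrapositive, if $P \neq NP$ then no such $K'$ exists, which is exactly the claim. (One could alternatively first invoke Lemma~\ref{LemPrmProp2} to reduce to the case $K' = K$, but the direct argument above already handles arbitrary subsets of $K$ in one shot.)

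I do not expect a serious obstacle here; the proof is essentially careful bookkeeping. The one point that needs care is making precise the sense in which restricting all parameters in $K$ to constants also ``fixes'' the parameters in $K'$: this holds because $K' \subseteq K$, so a restriction pinning down the values of all parameters in $K$ \emph{a fortiori} pins down those in $K'$, and the family of instances on which the $NP$-hardness reduction lands is therefore a subfamily of the instances on which $f'(K')$ is a constant. A second, even more minor, point is to confirm that $c$ and $d$ are genuine constants independent of $|x|$ --- immediate from the definitions of fp-tractability and of $\Pi_{\mathrm{const}}$ --- so that $d|x|^{c}$ is a bona fide polynomial bound.
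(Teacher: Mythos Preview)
Your argument is correct and is exactly the standard proof of this folklore lemma: restricting to the constant-parameter slice makes the $f'(K')$ factor a constant, collapsing the fp-algorithm to a genuine polynomial-time algorithm for an $NP$-hard problem. Note that the paper itself does not supply a proof here---it simply cites \cite[Lemma 2.1.35]{War99}---so there is nothing further to compare against; your write-up is what one would expect the cited proof to look like.
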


\noindent
There are a variety of techniques for creating a reduction from a problem $\Pi$ to
a problem $\Pi'$ (\cite[Section 3.2]{GJ79}; see also \cite[Chapters 3 and 6]{vRB+19}). 
One of these techniques is component design, in which an instance of $\Pi'$ constructed
by a reduction is 
structured as mechanisms that generate candidate solutions for the given instance
of $\Pi$ and check these candidates to see if any are actual solutions. We have already
seen in the example software systems given in Figure \ref{FigExESS3} how interfaces
with different implementing components (in that case, interfaces {\tt intSystem} and
{\tt intProc}) can be used to generate choices when constructing a component-based software 
system.  In subsequent subsections, we will use this and other features of interfaces and 
components under the Dana runtime model as described in Section \ref{SectForm} 
to structure mechanisms that generate candidate solutions (i.e., valid component-based
software systems corresponding to vertex-sets of size $k$ in a given graph $G$) and check 
these candidates to see if they are actual solutions (e.g., working component-based software
systems relative $R$ corresponding to dominating sets of size $k$ in $G$) in many of the 
reductions underlying our results for problems ESCreate and ESAdapt.

\subsection{Results for Emergent Software Creation}

\label{SectResESC}

Many of the results derived in this section for ESCreate will actually be 
derived relative to the following problem:

\vspace*{0.1in}

\noindent
{\sc Component-based Software Creation } (CSCreate)  \\
{\em Input}: Software system requirements $R$, interface and component libraries
              $L_{int}$ and $L_{comp}$, and a base component $c \in 
              L_{comp}$. \\
{\em Question}: Is there a working component-based software system $S$ based on
              $c$ relative to $L_{int}$, $L_{comp}$, and $R$?

\vspace*{0.1in}

\noindent
Note that each input for ESCreate has a corresponding input to CSCreate (namely, the 
input to ESCreate without $Rew()$ and $Env()$). Moreover, any algorithm $A$ 
that solves ESCreate under some $Rew()$ can be also used to solve CSCreate (namely,
if $A$ run on the given input $x$ for CSCreate produces a working system, output ``Yes'', 
otherwise output ``No''). This yields the following useful observation.

\begin{observation}
For any choice of $Rew()$ and $Env()$, if there is an algorithm $A$ of solvability 
type $T$ for ESCreate under $Rew()$ than
there is an algorithm $A'$ of solvability type $T$ for CSCreate.
\label{ObsESCreateSolvability}
\end{observation}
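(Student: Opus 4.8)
The plan is to show that any algorithm $A$ for ESCreate under a fixed reward function $Rew()$ can be wrapped, with only constant-size preprocessing and a trivial post-processing test, into an algorithm $A'$ for CSCreate, and that this wrapping is cheap enough and structure-preserving enough that $A'$ inherits whichever solvability type $T$ (polynomial-time exact or promise, polynomial-time approximate in any of the three senses, or fixed-parameter tractable relative to any parameter-set $K$) the algorithm $A$ enjoys.

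First I would describe the transformation. Recall that ``ESCreate under $Rew()$'' still takes the environment function as part of its input, so $A$ is required to behave correctly for every valid $Env()$. Given an instance $x = \langle R, L_{int}, L_{comp}, c\rangle$ of CSCreate, let $Env_0()$ be any fixed valid (nonempty, polynomial-time computable) environment function, form $x' = \langle R, L_{int}, L_{comp}, c, Env_0()\rangle$, run $A$ on $x'$, output ``Yes'' if $A$ returns a working component-based software system based on $c$, and output ``No'' if $A$ returns $\bot$. Correctness is immediate from the definition of ESCreate: $A$ returns $\bot$ precisely when no working system based on $c$ relative to $L_{int}$, $L_{comp}$, $R$ exists, which is exactly when the CSCreate answer on $x$ is ``No''; otherwise $A$ exhibits such a system and the answer is ``Yes''. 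Note that $x\mapsto x'$ is injective, computable in time linear in $|x|$, and increases input size by an additive constant, and that it changes none of the parameters of interest in Section~\ref{SectResESC_FPSolv} (number of interfaces in $L_{int}$, number of components in $L_{comp}$, maximum number of components implementing an interface, maximum number of interfaces provided or required by a component, maximum number of components in a software system), since attaching $Env_0()$ affects none of these.

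Next I would check, case by case, that $A'$ has solvability type $T$ whenever $A$ does. For polynomial-time exactness (and its promise variant), $A'$ adds only linear-time pre- and constant-time post-processing, so it runs in polynomial time and is correct exactly on those CSCreate inputs whose image under $x\mapsto x'$ is handled correctly by $A$. For the probabilistic ``frequently correct'' notion, $A'$ reports the outcome of a single run of $A$, so its success probability equals $A$'s and is $\geq 2/3$. For the deterministic ``frequently correct'' notion, injectivity of $x\mapsto x'$ together with the additive-constant size bound gives $err_{A'}(n)\leq err_{A}(n+O(1))$, so $err_{A'}$ stays in the same growth class as $err_{A}$. For fixed-parameter tractability relative to a parameter-set $K$, $x\mapsto x'$ is a parameterized reduction (it runs in polynomial time and preserves each parameter up to an additive constant), so an $f(K)|x|^c$-time algorithm for ESCreate under $Rew()$ relative to $K$ yields one for CSCreate relative to $K$. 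Finally, the ``approximately optimal'' notion (2c) is vacuous for the decision problem CSCreate, so that case of the claim is trivial.

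The argument has no real obstacle; the only care needed is bookkeeping --- verifying that the chosen $Env_0()$ is a legal input in the sense of Section~\ref{SectForm} (nonempty and polynomial-time computable), and, for solvability type (2a), tracking the injectivity and the $+O(1)$ size change of the padding so that the error-count bound is genuinely preserved rather than inflated.
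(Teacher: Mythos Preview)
Your proposal is correct and follows essentially the same approach as the paper: pad a CSCreate instance with the fixed $Rew()$ and an environment function, run $A$, and output ``Yes'' or ``No'' according to whether $A$ returns a working system or $\bot$. The paper treats this observation as essentially self-evident and gives only a two-sentence justification before stating it, whereas you have supplied the careful case-by-case verification (injectivity and size bounds for type (2a), parameter preservation for fp-tractability, vacuity of (2c) for decision problems) that the paper leaves implicit.
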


\subsubsection{Unsolvability of Unrestricted Emergent Software Creation}

\label{SectResESCUnsolv}

We start off by considering if problem ESCreate is solvable in the most
general possible case --- that is, if ESCreate has an algorithm that always
returns the correct output for an input incorporating any possible choices of
$Env()$ and $Rew()$ and in which there are no restrictions on the form, size,
or running times of $L_{int}$ and $L_{comp}$, their member interfaces and
components, or any software systems created using $L_{int}$ and $L_{comp}$. It 
turns out that such an algorithm cannot exist.

\begin{description}
\item[{\bf Result A.1}] For any choice of $Rew()$ and $Env()$, ESCreate is unsolvable.
\end{description}
\begin{proof}
Consider the following polynomial-time Karp reduction from {\sc TM Halting} to
CSCreate: given an instance $I = \la M, x\ra$ of {\sc TM Halting}, construct 
an instance $I' = \la R, L_{int}, L_{comp}, c\ra$ of CSCreate in
which $X = \{x_1\}$ and $O = \{1\}$, there is a single input-output pair $r$ in 
$R$ such that for $r = (True, 1)$, $L_{int}$ consists of the single interface

\begin{center}
\begin{verbatim}
    interface base {
         void main(Input I)
    }
\end{verbatim}
\end{center}

\noindent
and $L_{comp}$ consists of the single component

\begin{center}
\begin{verbatim}
    component Base provides base {
         void main(Input I) {
             <CODEM(x)>
             output 1
         }
    }
\end{verbatim}
\end{center}

\noindent 
where {\tt <CODEM(x)>} is the Dana code simulating the computation of
$M$ on input $x$. As Dana contains both loops and conditional statements,
it can readily simulate $M$ on input $x$ using code that is of size polynomial
in the sizes of the given descriptions of $M$ and $x$. Finally, let $c$ be 
component {\tt Base} in $L_{comp}$. Note that the instance of CSCreate
described above can be constructed in time polynomial in the size of the
given instance of {\sc TM Halting}. To conclude the proof, observe that the
only possible component-based system for the constructed instance of CSCreate
based on $c$ is that consisting of {\tt Base} itself, and that this system 
satisfies the sole input-output constraint in $R$ if and only if $M$ halts on 
input $x$ for the given instance of {\sc TM Halting}.
It is known that {\sc TM Halting} cannot have an algorithm that is correct for
all possible $\langle M, x\rangle$ instances \cite[Section 9.2.4]{HMU01}, and 
hence is unsolvable. Hence, the reduction above implies in turn that CSCreate
cannot have an algorithm either. The unsolvability result for ESCreate then 
follows by contradiction from Observation \ref{ObsESCreateSolvability}.
\end{proof}

\vspace*{0.1in}

\noindent
This result is especially disconcerting as it holds relative to not just
some choices but {\tt every} possible
choice of $Env()$ and $Rew()$ (this is because  the proof of this result ignores these
functions entirely). However, it is ultimately not surprising, given the
computational power inherent in the Dana programming language and the 
folklore result that a number of problems in software engineering, e.g.,
checking if a software system satisfies a set of given requirements, are known 
to be unsolvable as a consequence of Rice's Theorem \cite[Section 9.3.3]{HMU01}.

That being said, restricted versions of ESCreate may yet have correct and
even efficient algorithms. One reasonable such restriction is that any candidate
component-based software system $S$ created from a given $L_{int}$ and 
$L_{comp}$ runs in time polynomial in the input size $|I|$ and hence can be 
checked against the system requirements in $R$ in time polynomial in the size of
$R$ (as $|I| < |R|$), i.e., created software systems not only operate but can 
also be verified quickly. Indeed, such a restriction is implicit in the
requirement that emergent software systems be autonomously verifiable at
runtime \cite[page 5]{FP17}. In the remainder of our analyses in this paper, 
we will assume ESCreate and CSCreate to be so restricted, and 
will denote these restricted versions as ESCreate$_{poly}$ and CSCreate$_{poly}$, 
respectively.
 
\subsubsection{Polynomial-time Exact Solvability of Restricted Emergent Software Creation}

\label{SectResESC_PTExactSolv}

We now consider if ESCreate$_{poly}$ is efficiently solvable in the first of the senses listed at the
start of Section \ref{SectResDesSolv} --- namely, polynomial-time exact solvability and
polynomial-time exact promise solvability.
One might initially think that, given the somewhat radical nature of the
restriction on ESCreate proposed at the end of the previous subsection, 
ESCreate so restricted is now efficiently solvable in both of these senses. 
However, this turns out not to the case. 

These intractability results are shown using the following reduction. This reduction
creates valid component-based systems with component wiring trees of the form shown in Figure
\ref{FigRed2} in which the
multiply implemented interfaces {\tt cond1, cond2, \ldots condk} are used
to create valid software systems corresponding to all possible
vertex-sets of size $k$ in the graph $G$ in the given instance of
{\sc Dominating set}. As each input-output pair in the constructed $R$ corresponds to a 
vertex-neighbourhood in $G$, the code in component {\tt Base} ensures that working software 
systems correspond to dominating sets of size $k$ in $G$. 

\begin{figure}[t]
\begin{center}
\includegraphics[width=3.5in]{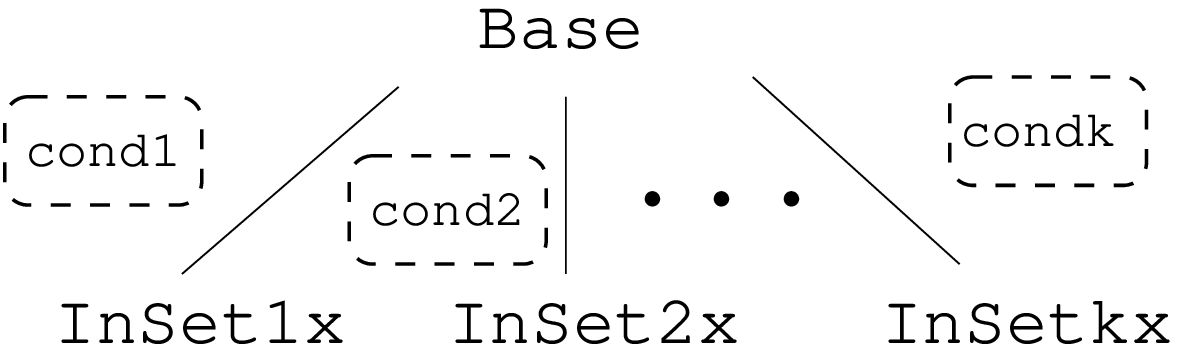}
\end{center}
\caption{General structure of valid software systems created by the reduction 
          in the proof of Lemma \ref{LemRed_DS_CSCreate1}. Note that index $x$ in
          {\tt InSetjx}, $ 1 \leq j \leq k$, is such that $1 \leq x \leq |V|$.
          Following the convention
          in Figure \ref{FigExESS3}, interfaces with multiple implementing
          components are enclosed in dashed boxes.
}
\label{FigRed2}
\end{figure}

\begin{lemma}
{\sc Dominating set} polynomial-time Karp reduces to CSCreate$_{poly}$.
\label{LemRed_DS_CSCreate1}
\end{lemma}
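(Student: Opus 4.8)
The plan is to give a polynomial-time Karp reduction of the standard component-design form: from an instance $\la G = (V,E), k\ra$ of {\sc Dominating set} I construct an instance $\la R, L_{int}, L_{comp}, c\ra$ of CSCreate$_{poly}$ whose valid component-based software systems are in one-to-one correspondence with length-$k$ sequences of vertices of $G$, and whose requirements force such a system to be working precisely when the vertex set underlying its sequence is a dominating set of $G$. First I dispose of the degenerate case $k > |V|$ (no vertex subset of size exactly $k$ exists, so the answer is ``No''): here I output some fixed no-instance of CSCreate$_{poly}$. So assume $k \le |V|$. I let $L_{int}$ consist of an interface {\tt intBase} declaring {\tt void main(Input I)} together with $k$ interfaces {\tt cond1},\ldots,{\tt condk}, where {\tt condj} declares one function {\tt bool checkj(Input I)}. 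I let $c =$ {\tt Base}, a component that provides {\tt intBase} and requires {\tt cond1},\ldots,{\tt condk}, whose {\tt main} outputs a designated ``ok'' value iff the disjunction {\tt check1(I)}$\lor\cdots\lor${\tt checkk(I)} holds (and outputs a different value otherwise); and for every level $j \in \{1,\ldots,k\}$ and vertex index $x \in \{1,\ldots,|V|\}$ I put in $L_{comp}$ a component {\tt InSetjx} that provides {\tt condj}, requires nothing, and whose {\tt checkj(Input I)} returns True iff the vertex encoded by $I$ lies in the (fixed) complete neighbourhood $N_C(v_x)$ --- a hardwired disjunction over the coordinates of $I$ indexed by $N_C(v_x)$, expressible with the {\tt if-then-else} construct of Dana. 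As depicted in Figure \ref{FigRed2}, the component wiring tree of a valid system then has {\tt Base} at its root and selects, for each level $j$, one implementing component among {\tt InSetj1},\ldots,{\tt InSetj|V|}, say the one with index $x_j$; so valid systems correspond exactly to sequences $(v_{x_1},\ldots,v_{x_k}) \in V^k$ (the level index baked into each component name keeps any component label from recurring on a root-to-leaf path, so there are neither recursive dependencies nor spurious systems).

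For the requirements I set $X = \{x_1,\ldots,x_{|V|}\}$ and $O = \{1,2\}$, designate $1$ as the ``ok'' value, and for each vertex $v_\ell \in V$ include the pair $r_\ell$ whose input is the indicator assignment of $v_\ell$ (that is, $v_{i_\ell}(x_\ell) = True$ and $v_{i_\ell}(x_m) = False$ for $m \neq \ell$) and whose required output is $1$. Then the valid system encoding $(v_{x_1},\ldots,v_{x_k})$, run on the input of $r_\ell$, outputs $1$ iff {\tt checkj} returns True on that input for some $j$, i.e.\ iff $v_\ell \in \bigcup_{j=1}^{k} N_C(v_{x_j})$, i.e.\ iff $v_\ell$ is dominated by $\{v_{x_1},\ldots,v_{x_k}\}$; hence this system is working relative to $R$ iff $\{v_{x_1},\ldots,v_{x_k}\}$ is a dominating set of $G$. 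Correctness of the reduction follows, with only a little exactly-$k$-versus-at-most-$k$ bookkeeping: if $G$ has a dominating set of size $k$, writing it as $\{v_{x_1},\ldots,v_{x_k}\}$ gives a working system, so CSCreate$_{poly}$ answers ``Yes''; conversely a working system yields a sequence whose underlying set is dominating and of size $\le k$, which (since $k \le |V|$) can be padded to a dominating set of size exactly $k$, so {\sc Dominating set} answers ``Yes''. It remains to note that every valid system is loop-free and runs in time polynomial (indeed, linear) in $|V|$ --- {\tt main} makes $k$ calls to {\tt checkj}, each doing $O(|V|)$ work on an assignment over $|V|$ variables --- so the constructed instance genuinely lies in CSCreate$_{poly}$, and that the construction ($k+1$ interfaces, $1 + k|V|$ small components, and $|V|$ requirements over $|V|$ variables, with $k \le |V|$) is computable in time polynomial in $|G|$.

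I expect the only real difficulty to be in the ``engineering'' rather than the combinatorics: one must realize the $k$-fold choose-a-vertex-then-take-the-disjunction gadget purely through the provided/required-interface mechanism and the restricted control flow of Dana, while simultaneously (i) keeping the valid systems in exact bijection with $V^k$ --- no missing systems, no spurious ones, and no inadvertent recursive dependency tripping the no-repeated-label condition --- and (ii) ensuring the hardwired code of {\tt Base} and of the {\tt InSetjx} components computes the domination predicate correctly on every requirement input. Once the gadget is pinned down, the equivalence argument is exactly the routine case analysis sketched above, and the polynomial-time bounds on both the construction and the resulting systems are immediate.
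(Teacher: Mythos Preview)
Your proposal is correct and follows essentially the same approach as the paper: the same $k+1$ interfaces, the same $1 + k|V|$ components, and the same two-level wiring tree in which a valid system corresponds to a choice in $V^k$ and is working iff that choice dominates $G$. The one noteworthy (but inessential) difference is a duality in where the adjacency information lives: the paper encodes each complete neighbourhood $N_C(v_j)$ in the \emph{input} of requirement $r_j$ and lets each {\tt InSetJK} simply return the single bit $v_I(x_K)$, whereas you encode a single vertex in each requirement input and hardwire $N_C(v_x)$ as a disjunction inside {\tt InSetjx}. Both encodings yield the same correctness argument; the paper's choice makes the leaf components marginally simpler (constant-size code independent of vertex degree), while yours makes the requirement inputs simpler and, as you note, explicitly handles the degenerate case $k > |V|$ that the paper implicitly assumes away.
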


\begin{proof}
Given an instance $I = \la G = (V,E), k\ra$ of {\sc Dominating set}, construct 
the following instance $I' = \la R, L_{int}, L_{comp}, c\ra$ of CSCreate$_{poly}$: Let 
$X = \{x_1, x_2, \ldots, x_{|V|}\}$, i.e., there is a unique Boolean variable 
corresponding to each vertex in $V$, and $O = \{0, 1\}$. There
are $|V|$ input-output pairs in $R$ such that for $r_j = 
(i_j, o_j)$, $1 \leq j \leq |V|$, $v_{i_j}(x_k) = True$ if $v_k \in N_C(v_j)$
and is $False$ otherwise and $o_j = 1$.
Let $L_{int}$ consist of $k + 1$ interfaces broken into two groups:

\begin{enumerate}
\item{
A single interface of the form

\begin{center}
\begin{verbatim}
    interface base {
         void main(Input I)
    }
\end{verbatim}
\end{center}
}
\item{
A set of $k$ interfaces of the form

\begin{center}
\begin{verbatim}
    interface condJ {
         Boolean inSetJ(Input I)
    }
\end{verbatim}
\end{center}

\noindent
for $1 \leq {\tt J} \leq k$.
}
\end{enumerate}

\noindent
Let $L_{comp}$ consist of $k|V| + 1$ components broken into two groups:

\begin{enumerate}
\item{
A single component of the form

\begin{center}
\begin{verbatim}
    component Base provides base 
                   requires cond1, cond2, ..., condk {
         void main(Input I) {
             if inSet1(I) then output 1
             elsif inSet2(I) then output 1
                 ...
             elsif inSetk(I) then output 1
             else output 0
         }
    }
\end{verbatim}
\end{center}
}
\item{
A set of $k|V|$ components of the form

\begin{center}
\begin{verbatim}
    component InSetJK provides condJ {
         Boolean inSetJ(Input I) {
             return v_I(x_K) 
         }
    }
\end{verbatim}
\end{center}

\noindent
for $1 \leq {\tt J} \leq k$ and $1 \leq {\tt K} \leq |V|$.
}
\end{enumerate}

\noindent
Note that in $L_{comp}$, there are $|V|$ implementations of each
{\tt cond}-interface.  Finally, let $c$ be component {\tt Base} in $L_{comp}$.
Note that the instance of CSCreate$_{poly}$ described above can be constructed in 
time polynomial in the size of the given instance of {\sc Dominating set};
moreover, as there is only a $(k + 1)$-clause {\tt if-then} statement block and
no loops in the component code and $k \leq |V| < |I|$, any candidate
component-based software system created relative to $L_{int}$, $L_{comp}$, and $c$
runs in time linear in the size of input $I'$.

Let us now verify the correctness of this reduction:

\begin{itemize}
\item{
Suppose that there is a dominating set $D$ of size at most $k$ in the given instance
of {\sc Dominating set}. We can then construct a component-based software system
consisting 
of $c$ and the $|D|$ {\tt InSet}-components corresponding to the vertices
in $D$; the choice of which interface to implement for each vertex is
immaterial, and if there are less than $k$ vertices in $D$, the final
$k - |D|$ required {\tt cond}-interfaces can be implemented relative
to {\tt InSet}-components corresponding to arbitrary vertices in $D$. Observe that for 
each $(i_j,o_j) \in R$, this software system produces output $o_j$ given input $i_j$.
}
\item{
Conversely, suppose that the constructed instance of CSCreate$_{poly}$ has a 
working component-based software system based on $c$ relative to $L_{int}$,
$L_{comp}$, and $R$.  In order to correctly accommodate all 
input-output pairs in $R$, the $k$ {\tt if-then} statements in $c$ must 
implement {\tt InSet}-components whose corresponding vertices form a dominating 
set in $G$ of size at most $k$.
Hence, the existence of a working component-based software system for the
constructed instance of CSCreate$_{poly}$ implies the existence of a
dominating set of size at most $k$ for the given instance  of
{\sc Dominating set}.
}
\end{itemize}

\noindent
This completes the proof.
\end{proof}

\begin{description}
\item[{\bf Result A.2}] For any choice of $Rew()$ and $Env()$, if ESCreate$_{poly}$ is
polynomial-time exact solvable then $P = NP$.
\end{description}
\begin{proof}
Given the $NP$-hardness of {\sc Dominating set}, the reduction in Lemma
\ref{LemRed_DS_CSCreate1} implies that CSCreate$_{poly}$ is $NP$-hard, and hence not 
solvable in polynomial time unless $P = NP$. The polynomial-time
intractability result for ESCreate$_{poly}$
then follows by contradiction from Observation \ref{ObsESCreateSolvability}.
\end{proof}

\begin{description}
\item[{\bf Result A.3}] For any choice of $Rew()$ and $Env()$, if ESCreate$_{poly}$ is
polynomial-time exact promise solvable then $P = NP$.
\end{description}
\begin{proof}
Suppose that for some choice of $Rew()$ and $Env()$, ESCreate$_{poly}$ is 
polynomial-time promise solvable by an algorithm $A$.\footnote{
It may initially seem puzzling why, in light of Observation
\ref{ObsESCreateSolvability}, we here directly evaluate the polynomial-time
promise solvability of ESCreate$_{poly}$. This is necessary because the promise 
solvability of any decision problem such as CSCreate$_{poly}$ is established by the 
trivial constant-time algorithm which always answers ``Yes'' (and hence is 
always correct if a solution exists).
}
Consider
the following algorithm for {\sc Dominating set}:

\begin{enumerate}
\item Given an instance $I = \la G = (V,E), k\ra$ of {\sc Dominating set}, 
       construct an instance $I' = \la R, L_{int}, L_{comp}, $Rew()$, $Env(), 
       c$\ra$ of ESCreate$_{poly}$ using the reduction from {\sc Dominating set} to 
       CSCreate$_{poly}$ described in Lemma \ref{LemRed_DS_CSCreate1} to create
       $R$, $L_{int}$, $L_{comp}$, and $c$.
\item Run $A$ on $I'$ to produce output $O'$ for ESCreate$_{poly}$.
\item As specified in the converse part of the proof of correctness of
       the reduction in Lemma \ref{LemRed_DS_CSCreate1} use the invoked
       {\tt if-then} components in $O'$ to derive a candidate solution $O$
       for the given instance of {\sc Dominating set}.
\item If $O$ is a correct solution for $I$, output ``Yes''; otherwise,
       output ``No'' (as by the definition of promise
       solvability, if the answer was ``Yes'' then $A$ would have had
       to output $O'$ such that $O$ was a correct solution to the given
       instance of {\sc Dominating set}).
\end{enumerate}

\noindent
As all steps in this algorithm run in polynomial time, the above is
a polynomial-time algorithm for {\sc Dominating set}. However,
given the $NP$-hardness of {\sc Dominating set}, this would imply that
$P = NP$, completing the proof.
\end{proof}

\subsubsection{Polynomial-time Approximate Solvability of Restricted Emergent Software Creation}

\label{SectResESC_PTApproxSolv}

We now consider if ESCreate$_{poly}$ is efficiently approximately solvable in either of the three
senses (frequently correct (deterministic), frequently correct (probabilistic), or
approximately optimal) listed at the start of Section \ref{SectResDesSolv}.
As can be seen below, the polynomial-time exact intractability of ESCreate$_{poly}$
proved in the previous section rules out all three of these types of efficient
approximability. 

We start by considering the two types of frequently correct 
approximability.
 
\begin{description}
\item[{\bf Result A.4}] For any choice of $Rew()$ and $Env()$, if ESCreate$_{poly}$ is solvable by 
a polynomial-time algorithm with a polynomial error frequency (i.e., $err(n)$ 
is upper bounded by a polynomial of $n$) then $P = NP$.
\end{description}
\begin{proof}
That the existence of such an algorithm for CSCreate$_{poly}$ implies $P = NP$ follows 
from the $NP$-hardness of CSCreate$_{poly}$ (which is established in the proof of Result
A.1) and Corollary 2.2. in \cite{HW12}. The polynomial-time inapproximability result 
for ESCreate$_{poly}$ then follows by contradiction from Observation \ref{ObsESCreateSolvability}.
\end{proof}

\begin{description}
\item[{\bf Result A.5}] 
For any choice of $Rew()$ and $Env()$, 
if $P = BPP$ and ESCreate$_{poly}$ is polynomial-time solvable by a probabilistic 
algorithm which operates correctly with probability $\geq 2/3$ then $P = NP$.
\end{description}
%
%\textcolor{red}{Ronald says: would it make sense to phrase this
%as a statement of the form ``If X, then $NP = BPP$''?
%Which immediately implies ``If X and $P = BPP$, then $P = NP$''.
%The former seems more informative.
%Or do you want to emphasize the statement as it is written now?}
\begin{proof}
It is widely believed that $P = BPP$ \cite[Section 5.2]{Wig07} where $BPP$ is 
considered the most inclusive class of decision problems that can be efficiently
solved using probabilistic methods (in particular, methods whose probability of
correctness is $\geq 2/3$ and can thus be efficiently boosted to be arbitrarily
close to one). Hence, if CSCreate$_{poly}$ has a probabilistic polynomial-time algorithm
which operates correctly with probability $\geq 2/3$ then CSCreate$_{poly}$ is by
definition in $BPP$. However, if $BPP = P$ and we know that CSCreate$_{poly}$
is $NP$-hard by the proof of Result A.2, this would then imply by the 
definition of $NP$-hardness that $P = NP$. The polynomial-time inapproximability result 
for ESCreate$_{poly}$ then follows by contradiction from Observation \ref{ObsESCreateSolvability}.
\end{proof}

\vspace*{0.1in}

\noindent
To assess cost-approximability, we need the following problem.

\vspace*{0.1in}

\noindent
{\sc Optimal Component-based Software Creation } (CSCreate$^{OPT}$)  \\
{\em Input}: Software system requirements $R$, interface and component libraries
              $L_{int}$ and $L_{comp}$, a base component $c \in 
              L_{comp}$, and a reward function $Rew()$. \\
{\em Output}: A working component-based software system $S$ based on
              $c$ relative to $L_{int}$, $L_{comp}$, and $R$ that 
              has the smallest value of $Rew(S)$ over all working systems
              based on $c$ relative to $L_{int}$, $L_{comp}$, and $R$, if such a system 
              exists, and special symbol $\bot$ otherwise.

\vspace*{0.1in}

\noindent
Let CSCreate$_{poly}^{OPT}$ be the version of CSCreate$^{OPT}$ such that any
component-based system $S$ runs in time polynomial in the input size $|I|$. Note that each 
input for ESCreate$_{poly}$ has a corresponding input to CSCreate$_{poly}^{OPT}$ (namely, the
input to ESCreate$_{poly}$ without $Env()$). Moreover, any algorithm $A$ that
solves ESCreate$_{poly}$ under some $Rew()$ can also be used to solve 
CSCreate$_{poly}^{OPT}$ (namely, return whatever $A$ run on the given input $x$ for 
CSCreate$_{poly}^{OPT}$ produces).  This yields the following useful observation.

\begin{observation}
For any choice of $Rew()$ and $Env()$, if there is an algorithm $A$ of solvability 
type $T$ for ESCreate$_{poly}$ than there is an algorithm $A'$ of solvability
type $T$ for CSCreate$_{poly}^{OPT}$ under $Rew()$.
\label{ObsESCreateOPTSolvability}
\end{observation}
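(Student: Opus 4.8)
The plan is to mirror the argument used for Observation~\ref{ObsESCreateSolvability}, exploiting the fact that CSCreate$_{poly}^{OPT}$ under $Rew()$ is, up to the presence of the environment function, literally a sub-problem of ESCreate$_{poly}$. First I would observe that any instance $x = \la R, L_{int}, L_{comp}, c, Rew()\ra$ of CSCreate$_{poly}^{OPT}$ can be turned into an instance $f(x) = \la R, L_{int}, L_{comp}, c, Rew(), Env()\ra$ of ESCreate$_{poly}$ by adjoining any fixed polynomial-time-computable environment function --- for instance the constant map returning the empty collection of events and metric-values, which is a syntactically non-empty object of the signature required in Section~\ref{SectForm} and hence a legal instance component, while still being ignorable by any algorithm. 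This transformation runs in linear time and leaves $R$, $L_{int}$, $L_{comp}$, $c$, and $Rew()$ untouched.

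Next I would note that the output specifications of the two problems coincide on instances of the form $f(x)$: ESCreate$_{poly}$ asks for a working component-based software system based on $c$ relative to $L_{int}$, $L_{comp}$, and $R$ that minimises $Rew(\cdot)$ over all such systems (and the symbol $\bot$ if none exists), which is exactly the output required by CSCreate$_{poly}^{OPT}$ on $x$. Since the set of valid, and hence of working, software systems depends only on $R$, $L_{int}$, $L_{comp}$, and $c$, the optimal reward value and the reward value of any candidate are identical for $x$ and $f(x)$. Thus the pair of maps $f$ and $g(x,y) = y$ is a polynomial-time Levin reduction from CSCreate$_{poly}^{OPT}$ under $Rew()$ to ESCreate$_{poly}$ under $Rew()$; moreover it is an ``identity-like'' reduction that preserves every structural parameter of the input (number of interfaces in $L_{int}$, number of components in $L_{comp}$, maximum number of components implementing an interface, maximum number of interfaces provided or required by a component, number of components in a software system, and the like). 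Defining $A'$ to be the algorithm that on input $x$ runs $A$ on $f(x)$ and returns $A$'s output, I would then check that $A'$ inherits solvability type $T$ from $A$: for polynomial-time exact solvability and fixed-parameter tractability this is immediate from the polynomial- (resp.\ fp-)time computability of $f$ together with the parameter preservation just noted; for promise solvability it follows because $x$ admits a working system if and only if $f(x)$ does; for the frequently-correct-deterministic notion it follows because $f$ is injective and increases input size only by an additive constant, so the error frequency of $A'$ at size $n$ is at most that of $A$ at size $n + O(1)$, still polynomial; for the frequently-correct-probabilistic notion it follows because $A'$ on $x$ behaves identically to $A$ on $f(x)$, so the success probability is unchanged; and for approximately-optimal solvability it follows because $v_{OPT}$ and the value of the returned solution agree on $x$ and $f(x)$, so any multiplicative approximation guarantee for $A$ carries over verbatim to $A'$.

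The only point requiring care --- and the one I expect to be the main (though minor) obstacle --- is the bookkeeping around the special symbol $\bot$ and the stipulation in the problem definitions that $Env()$ always be present in an instance. One has to confirm that the padding $Env()$ is indeed an admissible instance component (it is, being a polynomial-time-computable function of the prescribed type) and that $A$'s behaviour on no-working-system inputs transfers correctly; since $A$ is assumed to be of type $T$ over \emph{all} of ESCreate$_{poly}$, including the $\bot$-yielding instances, this transfer is automatic. I would close by remarking that this observation is the CSCreate$_{poly}^{OPT}$ analogue of Observation~\ref{ObsESCreateSolvability} and will be used in the same way: to lift the cost-inapproximability and fixed-parameter intractability lower bounds established for CSCreate$_{poly}^{OPT}$ in the following subsections up to ESCreate$_{poly}$.
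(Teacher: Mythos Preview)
Your proposal is correct and follows exactly the same approach as the paper: the paper's justification (given in the text immediately preceding the observation) simply notes that any CSCreate$_{poly}^{OPT}$ input becomes an ESCreate$_{poly}$ input by adjoining an $Env()$, and that $A'$ just returns whatever $A$ outputs on that padded input. Your write-up is considerably more detailed in checking each solvability type and handling the $\bot$ and $Env()$-padding corner cases, but the underlying argument is identical.
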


\noindent
We first give a reduction that will be used to establish 
the cost-inapproximability of ESCreate under $Rew_{\#comp}()$.
This reduction builds on that in Lemma \ref{LemRed_DS_CSCreate1} by
further exploiting the ability of interfaces to be implemented by
multiple components to allow a set of {\tt BaseJ} components that
effectively encode all possible candidate dominating sets of size
1 to $|V|$ in $G$ (see Figure \ref{FigRed3}).

\begin{figure}[t]
\begin{center}
\includegraphics[width=3.5in]{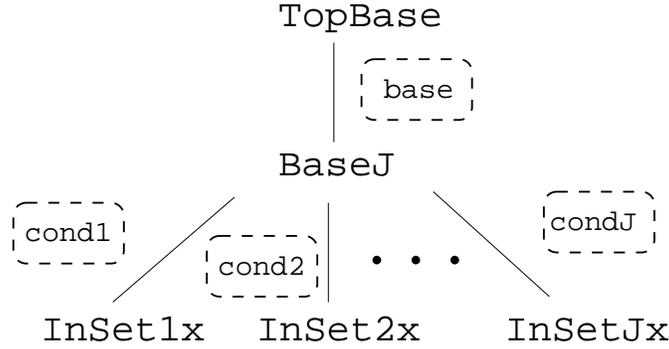}
\end{center}
\caption{General structure of valid software systems created by the reduction 
          in the proof of Lemma \ref{LemRed_DS_CSCreateOPT1}. Note that indices $J$ and $x$ 
          in {\tt InSetJx} are such that $1 \leq J, x \leq |V|$. Following the convention
          in Figure \ref{FigExESS3}, interfaces with multiple implementing
          components are enclosed in dashed boxes.
}
\label{FigRed3}
\end{figure}

\begin{lemma}
{\sc Dominating set}$^{OPT}$ polynomial-time Levin reduces to CSCreate$_{poly}^{OPT}$ under 
$Rew_{\#comp}()$ such that there is a dominating set of size $k$ for the given instance
of {\sc Dominating set}$^{OPT}$ if and only if there is a working component-based 
software system $S$ with reward value $Rew_{\#comp}(S) = k + 2$ for the constructed 
instance of CSCreate$_{poly}^{OPT}$.
\label{LemRed_DS_CSCreateOPT1}
\end{lemma}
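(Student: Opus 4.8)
The plan is to extend the Karp reduction from Lemma~\ref{LemRed_DS_CSCreate1} to a Levin reduction by adding two mechanisms: one that lets the software system ``guess'' how many vertices it will put into the dominating set (i.e., the value of $k$), and one that forces the number of components in any working system to equal that guessed size plus a fixed overhead of two. Concretely, I would keep the \verb|InSetJK| components and the \verb|condJ|-style interfaces essentially as before, but restructure the top of the wiring tree so that the base component \verb|Base| requires a single interface (say \verb|intDom|) that is implemented by a family of components \verb|Base1|, \verb|Base2|, \ldots, \verb|Base|$|V|$, where \verb|BaseJ| requires exactly the $J$ interfaces \verb|cond1|,\ldots,\verb|condJ| and contains a $(J{+}1)$-clause \verb|if-then-else| block testing membership in a $J$-vertex set, outputting $1$ on a hit and $0$ otherwise. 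Then a valid software system consists of: the root \verb|Base|, one \verb|BaseJ|, and $J$ \verb|InSet|-components (one per required \verb|condJ| interface), for a total of $J + 2$ components; the requirements $R$ are built exactly as in Lemma~\ref{LemRed_DS_CSCreate1} (one input-output pair $r_j = (i_j, 1)$ per vertex neighbourhood $N_C(v_j)$), so that a system using \verb|BaseJ| is working iff the $J$ chosen \verb|InSet|-vertices form a dominating set.

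The steps, in order, would be: (1) describe the construction of $I' = \langle R, L_{int}, L_{comp}, Rew_{\#comp}(), c\rangle$ from $I = \langle G=(V,E)\rangle$, noting it is polynomial-time computable and that every candidate system runs in linear time (only bounded \verb|if-then-else| blocks, no loops, and $J \le |V| < |I|$), so $I'$ is a legitimate instance of CSCreate$_{poly}^{OPT}$; (2) exhibit the forward map: given a dominating set $D$ with $|D| = k$ in $G$, build the working system using \verb|Base|$k$ plus the $k$ \verb|InSet|-components indexed by the vertices of $D$ (padding with repeated vertices if one wants a specific smaller $J$ is not needed here since we use exactly $J = |D|$), which has exactly $k+2$ components; (3) exhibit the backward map $g$: given any working system $S$ for $I'$, it uses some \verb|BaseJ| and $J$ \verb|InSet|-components whose vertices must dominate $G$ (by the same argument as the converse direction in Lemma~\ref{LemRed_DS_CSCreate1}), so reading off those vertices yields a dominating set of size $J = Rew_{\#comp}(S) - 2$; (4) observe that since \verb|BaseJ| can be chosen for any $J$ from $1$ to $|V|$, the optimum of $Rew_{\#comp}$ over working systems is exactly (size of minimum dominating set)$\,+\,2$, and more precisely a working system of reward $k+2$ exists iff a dominating set of size $k$ exists, which is the biconditional in the statement; (5) package (2) and (3) as the two polynomial-time functions $f$ and $g$ required by the definition of Levin reduction, verifying that $g$ maps an optimal solution of $I'$ to an optimal solution of $I$.

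The main obstacle — and the point deserving the most care — is making sure the overhead is a \emph{fixed} constant independent of $k$ and of the instance, so that the reward value $k+2$ is an affine function of $|D|$ with slope exactly $1$; this is what later lets the reduction transfer cost-inapproximability thresholds (an $f$-approximation for CSCreate$_{poly}^{OPT}$ would, up to the additive $2$, give an $f'$-approximation for {\sc Dominating set}$^{OPT}$). In particular I must ensure there is no ``dead'' required interface or auxiliary component whose count scales with $k$: each \verb|BaseJ| requires exactly its $J$ \verb|cond|-interfaces and nothing else, the root \verb|Base| requires exactly \verb|intDom|, and no \verb|proc|-style leaf layer (as in the Figure~\ref{FigExESS3} example) is introduced. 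A secondary subtlety is that, unlike Lemma~\ref{LemRed_DS_CSCreate1}, we must not allow the system to pad with extra \verb|InSet|-components ``for free'', since that would let a large dominating set masquerade as a system with a small reward; this is handled automatically because the number of \verb|InSet|-components is pinned to $J$ by the number of required interfaces of the chosen \verb|BaseJ|, and distinct copies of the same component on one path are allowed (no recursive-dependency violation) only if they implement distinct interfaces, which is exactly the situation here. I would close by remarking that the biconditional together with Observation~\ref{ObsESCreateOPTSolvability} is what will, in the following result, yield cost-inapproximability of ESCreate$_{poly}$ under $Rew_{\#comp}()$ from known inapproximability of {\sc Dominating set}.
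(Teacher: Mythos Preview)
Your proposal is correct and follows essentially the same construction as the paper: a single root component (\texttt{TopBase} in the paper, \texttt{Base} in your sketch) requiring one interface that is implemented by a family of components \texttt{BaseJ} for $1\le J\le |V|$, each of which requires exactly $J$ \texttt{cond}-interfaces wired to \texttt{InSet}-components, yielding $J+2$ components in any valid system. Your forward and backward maps, the polynomial-time and linear-runtime checks, and your identification of the crucial fixed additive overhead all match the paper's argument.
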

\begin{proof}
Given an instance $I = \la G = (V,E)\ra$ of {\sc Dominating set}$^{OPT}$, construct 
the following instance $I' = \la R, L_{int}, L_{comp}, c\ra$ of CSCreate$_{poly}^{OPT}$: Let
$R$ be as in the proof of Lemma \ref{LemRed_DS_CSCreate1}.
Let $L_{int}$ consist of $|V| + 2$ interfaces broken into three groups:

\begin{enumerate}
\item{
A single interface of the form

\begin{center}
\begin{verbatim}
    interface topBase {
         void main(Input I)
    }
\end{verbatim}
\end{center}
}
\item{
A single interface of the form

\begin{center}
\begin{verbatim}
    interface base {
         void mainBase(Input I)
    }
\end{verbatim}
\end{center}
}
\item{
A set of $|V|$ interfaces of the form

\begin{center}
\begin{verbatim}
    interface condJ {
         Boolean inSetJ(Input I)
    }
\end{verbatim}
\end{center}

\noindent
for $1 \leq {\tt J} \leq |V|$.
}
\end{enumerate}

\noindent
Let $L_{comp}$ consist of $|V|^2 + |V| + 1$ components broken into three groups:

\begin{enumerate}
\item{
A single component of the form

\begin{verbatim}
    component TopBase provides topBase requires base {
        void main(Input I) {
            mainBase(I)
        }
    }
\end{verbatim}
}
\item{
A set of $|V|$ components of the form

\begin{center}
\begin{verbatim}
    component BaseJ provides base 
                   requires cond1, cond2, ..., condJ {
         void mainBase(Input I) {
             if inSet1(I) then output 1
             elsif inSet2(I) then output 1
                 ...
             elsif inSetJ(I) then output 1
             else output 0
         }
    }
\end{verbatim}
\end{center}

\noindent
for $1 \leq J \leq |V|$.
}
\item{
A set of $|V|^2$ components of the form

\begin{center}
\begin{verbatim}
    component InSetJK provides condJ {
         Boolean inSetJ(Input I) {
             return v_I(x_K) 
         }
    }
\end{verbatim}
\end{center}

\noindent
for $1 \leq {\tt J} \leq |V|$ and $1 \leq {\tt K} \leq |V|$.
}
\end{enumerate}

\noindent
Note that in $L_{comp}$, there are $|V|$ implementations of the 
{\tt base}-interface and $|V|$ implementations of each {\tt cond}-interface.  
Finally, let $c$ be component {\tt TopBase} in $L_{comp}$. Note that the 
instance of CSCreate$_{poly}^{OPT}$ described above can be constructed in 
time polynomial in the size of the given instance of 
{\sc Dominating set}$^{OPT}$.;
moreover, as there is only an at most $(|V| + 1)$-clause {\tt if-then} statement
block and
no loops in the component code and $|V| < |I|$, any candidate
component-based software system created relative to $L_{int}$, $L_{comp}$, and $c$
runs in time linear in the size of input $I'$.

Let us now verify the correctness of this reduction:

\begin{itemize}
\item{
Suppose that there is a dominating set $D$ of size $k$ in the given instance of
{\sc Dominating set}. We can then construct a component-based  software system 
consisting 
of $c$, component {\tt basek}, and the $k$ {\tt Inset}-components corresponding
to the vertices in $D$; the choice of which interface to implement for each 
vertex is immaterial. Observe that for each $(i_j,o_j) \in R$, this software 
system produces output $o_j$ given input $i_j$; moreover, $Rew_{\#comp}(S) = 
k + 2$.
}
\item{
Conversely, suppose that the constructed instance of CSCreate$_{poly}^{OPT}$ has a 
\linebreak working component-based software system based on $c$ relative to 
$L_{int}$, $L_{comp}$, and $R$ such that $Rew_{\#comp}(S) = val$.\footnote{
Note that the existence of at least one such a working system is guaranteed 
for all instances of CSCreate$_{poly}^{OPT}$ constructed as described above (namely, 
the system consisting
of components {\tt TopBase} and {\tt Base(|V|)} and the $|V|$ components
{\tt InSetJJ} for $1 \leq {\tt J} \leq | V|$, which corresponds to the
dominating set consisting of all vertices in $V$). This is necessary for
our reduction, as each instance of {\sc Dominating set}$^{OPT}$ has at
least one dominating set (namely, $V$), and cannot correspond to a 
constructed instance of CSCreate$_{poly}^{OPT}$ whose solution is $\bot$.
}
As $c$ is component {\sc TopBase} which requires a {\sc Base} component and 
this {\tt Base} component requires some number
of {\tt InSet} components, this system is comprised of components {\tt TopBase},
{\tt Base($val - 2$)}, and $val - 2$ {\tt InSet} components. 
In order to correctly 
accommodate all input-output pairs in $R$, the $(val - 2)$ {\tt if-then} 
statements in {\tt Base($val - 2$)} must implement {\tt Inset}-components whose
corresponding vertices form a dominating set in $G$ of size at most $val - 2$.
Hence, the existence of a working component-based software system $S$ such that
$Rew_{\#comp}(S) = val$ for the constructed instance of CSCreate$_{poly}^{OPT}$ implies 
the existence of a dominating set of size $val - 2$ for the given instance  of
{\sc Dominating set}.
}
\end{itemize}

\noindent
To complete the proof, note that the required functions $f()$ and $g()$ in the 
definition of a Levin reduction correspond respectively to the algorithm 
given at the beginning of this proof for constructing an instance of 
CSCreate$_{poly}^{OPT}$ under $Rew_{\#comp}()$ from the given instance of {\sc Dominating
set}$^{OPT}$ and the algorithm implicit in the converse clause of the proof of 
reduction correctness above for constructing a dominating set from a valid 
component-based software system for the constructed instance of 
CSCreate$_{poly}^{OPT}$.
\end{proof}

\vspace*{0.15in}

\noindent
The reduction above can also be used to establish the cost-inapproximability 
of ESCreate under $Rew_{CodeB}()$.

\begin{lemma}
{\sc Dominating set}$^{OPT}$ polynomial-time Levin reduces to CSCreate$_{poly}^{OPT}$ 
under $Rew_{CodeB}()$ such that there is a dominating set of size $k$ for the 
given instance of {\sc Dominating set}$^{OPT}$ if and only if there is a working
component-based software system $S$ with reward value $Rew_{CodeB}(S) = 9k + 16$ 
for the constructed instance of CSCreate$_{poly}^{OPT}$. 
\label{LemRed_DS_CSCreateOPT2}
\end{lemma}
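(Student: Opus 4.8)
The plan is to recycle, without change, the reduction from the proof of Lemma~\ref{LemRed_DS_CSCreateOPT1}: given an instance $I = \la G = (V,E) \ra$ of {\sc Dominating set}$^{OPT}$, build the same $R$, $L_{int}$, $L_{comp}$, and base component $c = {\tt TopBase}$, and take $Rew_{CodeB}()$ as the reward function of the constructed instance of CSCreate$_{poly}^{OPT}$. Since $R$, $L_{int}$, $L_{comp}$ and $c$ are literally the ones built there, every non-reward claim transfers verbatim: the instance is polynomial-time computable, every candidate system runs in linear time (so this is a legitimate instance of CSCreate$_{poly}^{OPT}$), and a system is working relative to $R$ iff it consists of {\tt TopBase}, one component {\tt Base}$J$ ($1 \le J \le |V|$), and one {\tt InSet}-component implementing each of the $J$ {\tt cond}-interfaces required by {\tt Base}$J$, with the $J$ vertices named by those {\tt InSet}-components forming a dominating set of $G$ of size at most $J$. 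So the only new content is the reward value.

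The key step is thus an explicit line count, under the convention that the interfaces ``comprising'' a system are those provided or required by its components and that lines are counted as the interfaces and components are written in the proof of Lemma~\ref{LemRed_DS_CSCreateOPT1}. A working system $S$ corresponding to a dominating set of size $k$ contains: the interfaces {\tt topBase} and {\tt base} ($3$ lines each), the $k$ interfaces {\tt cond1},\ldots,{\tt cond}$k$ ($3$ lines each, so $3k$), the component {\tt TopBase} ($5$ lines), the component {\tt Base}$k$ (a header line, a {\tt void mainBase} line, $k$ {\tt if}/{\tt elsif} clause lines, an {\tt else output 0} line, and two closing braces, so $k+5$ lines), and $k$ {\tt InSet}-components ($5$ lines each, so $5k$). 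Adding up, $Rew_{CodeB}(S) = 3 + 3 + 3k + 5 + (k+5) + 5k = 9k + 16$; equivalently, each vertex placed in the dominating set costs exactly $9$ lines ($3$ for its {\tt cond}-interface, $1$ for the matching {\tt elsif} clause, $5$ for its {\tt InSet}-component) over a fixed overhead of $16$ lines.

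This already gives the claimed equivalence and the Levin-reduction data. The function $f$ is the (unchanged) construction; the function $g$ reads off, from a working system returned for $f(I)$, the set of vertices named by its {\tt InSet}-components, which by the inherited structural analysis is a dominating set of $G$. For the ``$\Rightarrow$'' direction, the system built from a size-$k$ dominating set as in Lemma~\ref{LemRed_DS_CSCreateOPT1} witnesses a working system with $Rew_{CodeB}$-value $9k+16$. For ``$\Leftarrow$'', since $9J+16$ is strictly increasing in $J$, a working system with value $9k+16$ must use {\tt Base}$k$ and exactly $k$ {\tt InSet}-components, hence yields a dominating set of size at most $k$, and because dominating sets are closed under supersets this upgrades to one of size exactly $k$ (when $k \le |V|$; for $k > |V|$ both sides fail). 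The same monotonicity shows $g$ sends an optimal working system to a minimum dominating set, which is what the optimization version needs.

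I expect the one real obstacle to be bookkeeping rather than ideas: this reduction, unlike the component-count one, is sensitive to the exact number of lines in every interface and component, so the constant $16$ and the per-vertex cost $9$ have to be obtained under a single fixed line-counting convention and verified consistently across all of $S$. The structural subtleties --- that a working system cannot be made cheaper by skipping a {\tt Base}$J$ component or by using fewer than $J$ distinct {\tt cond}-interfaces, and that repeated {\tt InSet} choices do not shrink the codebase (the distinct components {\tt InSet}$JK$ and their {\tt cond}$J$ interfaces are still all present) --- are all instances of arguments already made in the proof of Lemma~\ref{LemRed_DS_CSCreateOPT1}, and so require no new work.
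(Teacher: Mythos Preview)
Your proposal is correct and follows essentially the same approach as the paper: reuse the construction of Lemma~\ref{LemRed_DS_CSCreateOPT1} unchanged, count lines to get $5 + (k+5) + 5k + 3 + 3 + 3k = 9k+16$, and note that the correctness proof carries over with only the reward arithmetic modified. Your write-up is in fact more detailed than the paper's, which dispatches the lemma in three sentences by pointing back to Lemma~\ref{LemRed_DS_CSCreateOPT1}.
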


\begin{proof}
In the proof of Lemma \ref{LemRed_DS_CSCreateOPT1}, observe that for a 
dominating set of size $k$ in the given instance of 
{\sc Dominating set}$^{OPT}$, a software system $S$ for the constructed instance
of CSCreate$_{poly}^{OPT}$ consists of components {\tt TopBase} and {\tt Basek}, $k$ 
{\tt InSet} components, interfaces {\tt topBase} and {\tt base}, and $k$ 
{\tt cond} interfaces. The total number of lines of code in this system and 
hence the value of $Rew_{CodeB}(S)$) is therefore $5 + (k + 5) + 5k + 3 + 3 + 3k 
= 9k + 16$. The result then follows by a slight modification to the proof of 
correctness of the reduction described in Lemma \ref{LemRed_DS_CSCreateOPT1}.
\end{proof}

\begin{description}
\item[{\bf Result A.6}] 
For any choice of $Env()$, if ESCreate$_{poly}^{OPT}$ under $Rew_{\#comp}()$ is polynomial-time
\newline
$c$-approximable for any constant $c > 0$ then $P = NP$.
\end{description}
\begin{proof}
Observe that in the proof of the reduction in Lemma 
\ref{LemRed_DS_CSCreateOPT1}, the size $k$ of a dominating set in $G$ in the 
given instance of {\sc Dominating set}$^{OPT}$ is always a linear function of 
the value of $Rew_{\#comp}(S)$ in the constructed instance of CSCreate$_{poly}^{OPT}$, 
i.e., $k = Rew_{\#comp}(S) - 2$.  This means that a polynomial-time 
$c$-approximation algorithm for CSCreate$_{poly}^{OPT}$ under $Rew_{\#comp}()$ for any 
constant $c$, when combined with the reduction from {\sc Dominating set}$^{OPT}$
to CSCreate$_{poly}^{OPT}$ described in the proof of Lemma 
\ref{LemRed_DS_CSCreateOPT1}, implies the existence of a polynomial-time 
$3c$-approximation algorithm for {\sc Dominating set}${}^{OPT}$ (as $c \times 
Rew_{\#comp}(S) = c \times (k + 2) \leq c \times (k + 2k) \leq 3c \times k$ for 
$k \geq 1$). However, if {\sc Dominating set}${}^{OPT}$ has a polynomial-time 
$c$-approximation algorithm for any constant $c > 0$ then $P = NP$ \cite{LY94},
which means that CSCreate$_{poly}^{OPT}$ under $Rew_{\#comp}()$ cannot have a 
polynomial-time $c$-approximation algorithm for any $c > 0$ unless $P = NP$. 
The polynomial-time inapproximability result for ESCreate$_{poly}^{OPT}$ under
$Rew_{\#comp}()$ then follows by 
contradiction from Observation \ref{ObsESCreateOPTSolvability}.
\end{proof}

\begin{description}
\item[{\bf Result A.7}] 
For any choice of $Env()$, if ESCreate$_{poly}^{OPT}$ under $Rew_{CodeB}()$ is polynomial-time 
\newline
$c$-approximable for any constant $c > 0$
then $P = NP$.
\end{description}
\begin{proof}
Observe that in the 
proof of the reduction in Lemma \ref{LemRed_DS_CSCreateOPT2}, the size $k$ of 
a dominating set in $G$ in the given instance of {\sc Dominating set}$^{OPT}$ 
is always a linear function of the value of $Rew_{CodeB}(S)$ in the constructed 
instance of CSCreate$_{poly}^{OPT}$, i.e., $k = \frac{Rew_{CodeB}(S) - 16}{9}$.  This 
means that a polynomial-time $c$-approximation algorithm for CSCreate$_{poly}^{OPT}$ 
under $Rew_{CodeB}()$ for any constant $c$, when combined with the reduction from
{\sc Dominating set}$^{OPT}$ to CSCreate $^{OPT}$ described in the proof of 
Lemma \ref{LemRed_DS_CSCreateOPT2}, implies the existence of a polynomial-time 
$25c$-approximation algorithm for {\sc Dominating set}${}^{OPT}$ (as $c \times 
Rew_{CodeB}(S) = c \times (9k + 16) \leq c \times (9k + 16k) \leq 25c \times k$ 
for $k \geq 1$). However, if {\sc Dominating set}${}^{OPT}$ has a 
polynomial-time $c$-approximation algorithm for any constant $c > 0$ then 
$P = NP$ \cite{LY94}, which means that CSCreate$_{poly}^{OPT}$ under $Rew_{CodeB}()$ 
cannot have a polynomial-time $c$-approximation algorithm for any $c > 0$ 
unless $P = NP$. The polynomial-time inapproximability result for ESCreate$_{poly}^{OPT}$ 
under $Rew_{CodeB}()$ then follows by contradiction from Observation 
\ref{ObsESCreateOPTSolvability}.
\end{proof}

\subsubsection{Fixed-parameter Tractability of Restricted Emergent Software Creation}

\label{SectResESC_FPSolv}

\begin{table}[t]
\caption{
Parameters for emergent software creation problems.
}
\label{TabPrm}
\centering
\begin{tabular}{| c ||  l | }
\hline
Parameter   & Description \\
\hline\hline
$|L_{int}|$ & \# available interfaces \\
\hline
$|L_{comp}|$ & \# available components \\
\hline\hline
$I_{ci}$    & Maximum \# components implementing an interface \\
\hline\hline
$C_{pi}$    & Maximum \# provided interfaces per component \\
\hline
$C_{ri}$    & Maximum \# required interfaces per component \\
\hline\hline
$S_{comp}$  & Maximum \# components in a valid system \\
\hline
$S_{depth}$ & Maximum depth of component wiring tree \\
\hline
\end{tabular} 
\end{table}

Given the plethora of intractability results in the previous three
subsections, we now consider to what extent and relative to which
parameters ESCreate$_{poly}$ is and is not fp-tractable.
In our analyses below, we will focus on parameter-sets $K$ drawn from 
the parameters listed in Table \ref{TabPrm}. These parameters can be divided
into four main groups:

\begin{enumerate}
\item Parameters characterizing interface and component libraries 
       ($|L_{int}|, |L_{comp}|$);
\item Parameters charactering interfaces 
       ($I_{ci}$);
\item Parameters charactering components 
       ($C_{pi}, C_{ri}$); and
\item Parameters characterizing component-based software systems 
       ($S_{comp}, S_{depth}$).
\end{enumerate}

\noindent
We first consider those parameter-sets which yield fp-intractability.

\begin{description}
\item[{\bf Result A.8}] For any choice of $Rew()$ and $Env()$, if $\langle |L_{int}|,
                   C_{pi}, C_{ri}, S_{comp}, S_{depth}\ra$-ESCreate$_{poly}$ is fp-tractable
                   then $FPT = W[2]$.
\end{description}
\begin{proof}
Given the $W[2]$-hardness of $\la k \ra$-{\sc Dominating set}, the reduction in
Lemma \ref{LemRed_DS_CSCreate1} implies that CSCreate$_{poly}$ is $W[2]$-hard when
$C_{pi} = 1$, $S_{depth} = 2$, $C_{ri} = k$, and $|L_{int}| = S_{comp} = k + 1$ and 
hence not fp-tractable relative to these parameters unless $FPT = W[2]$.
The fp-intractability result for ESCreate$_{poly}$ then follows by contradiction 
from Observation \ref{ObsESCreateSolvability}.
\end{proof}

\vspace*{0.15in}

\noindent
The reductions underlying the following three results exploit the tricks 
previously used to such good effect in Lemmas \ref{LemRed_DS_CSCreate1} and 
\ref{LemRed_DS_CSCreateOPT1} as well as other features of our software component model.
The reduction underlying Result A.9 reduces the number of {\tt InSet} components by
invoking a larger encoding of candidate dominating sets and more complex but still
polynomial-time checking computations in the {\tt Base} component.
The reduction underlying Result A.10 reduces the number of interfaces required by
any component to a constant by splitting the creation of the candidate dominating sets
in component {\tt Base} in the reduction in the proof of Result A.9 over multiple
components. Finally, the reduction underlying Result A.11 reduces the number of
components in $L_{comp}$ to 3 by exploiting the ability of components providing
multiple interfaces to provide only the code required by an interface in that
interface's copy of the component.
Readers interested in details can consult the full proofs of these
results in the appendix.

\begin{description}
\item[{\bf Result A.9}] For any choice of $Rew()$ and $Env()$, if $\langle I_{ci}, C_{pi}, 
                   S_{depth}\ra$-ESCreate$_{poly}$ is fp-tractable then 
                   $P = NP$.
\end{description}

\begin{description}
\item[{\bf Result A.10}] For any choice of $Rew()$ and $Env()$, if $\langle I_{ci}, C_{pi}, 
                    C_{ri}\ra$-ESCreate$_{poly}$ is fp-tractable then $P = NP$.
\end{description}

\begin{description}
\item[{\bf Result A.11}] For any choice of $Rew()$ and $Env()$, if  $\langle |L_{comp}|, 
                    I_{ci}, S_{depth}\ra$-ESCreate$_{poly}$ is fp-tractable then 
                    $P = NP$.
\end{description}

\noindent
We now consider those parameter-sets that yield fp-tractability. All
of these results are based on the same brute-force solution enumeration
algorithm relative to different worst-case runtime analyses.

\begin{description}
\item[{\bf Result A.12}] For any choice of $Rew()$ and $Env()$, $\langle I_{ci}, C_{ri}, 
                    S_{depth}\ra$-ESCreate$_{poly}$ is fp-tractable.
\end{description}
\begin{proof}
The largest possible component-based software system relative to a given 
$L_{int}$ and $L_{comp}$ has a component wiring tree 
rooted at base component $c$ with branching factor $C_{ri}$ and depth 
$S_{depth}$. This tree has $(C_{ri})^{S_{depth}} - 2$ non-root vertices, each
corresponding to an interface required by a component. As each of these 
interfaces can be implemented by  at most $I_{ci}$ components, there are at most
$(I_{ci} + 1)^{(C_{ri})^{S_{depth}}}$ possible component-based software systems 
of depth at most $S_{depth}$ based on $c$ (the ``+ 1'' term at the lowest level 
denotes labeling a vertex $v$ with a special symbol that triggers deletion all 
descendent-vertices of $v$). 

Consider the algorithm that exhaustively generates all such 
systems and for each system $S$, (i) determines if $S$ is a working system 
relative to $R$ and, if so, (2) computes reward value $Rew(S)$. The output of this
algorithm is the working system with the lowest or highest reward value,
depending on the intent of $Rew()$. Given the above and our assumption that a 
candidate component-based software system $S$ can be checked against software 
system requirements $R$ in time polynomial in the sizes of $S$ and $R$,
this algorithm runs in fp-time relative to $I_{ci}$, $C_{ri}$, and $S_{depth}$,
completing the proof of this result.
\end{proof}

\begin{description}
\item[{\bf Result A.13}] For any choice of $Rew()$ and $Env()$, $\langle I_{ci}, 
S_{comp}\ra$-ESCreate$_{poly}$ is \linebreak fp-tractable.
\end{description}
\begin{proof}
As no path from the root to a leaf in the wiring component trees for our 
software systems can contain duplicate component vertex-labels,
the length of the longest path in such a tree from base component $c$ is 
bounded by $S_{comp}$; this means that $S_{depth} \leq S_{comp}$. Moreover,
as implementing each required interface adds a component to the software system,
$C_{ri} \leq S_{comp} -1 < S_{comp}$. Given these two observations, this 
result then follows from the algorithm in the proof of Result A.12.
\end{proof}

\begin{description}
\item[{\bf Result A.14}] For any choice of $Rew()$ and $Env()$, $\langle |L_{int}|, 
|L_{comp}|\ra$-ESCreate$_{poly}$ is fp-tractable.
\end{description}
\begin{proof}
As no path from the root to a leaf in the wiring component trees for our software 
systems can contain duplicate component vertex-labels,
the length of the longest path in such a tree from base component $c$ is 
bounded by $|L_{comp}|$; this means that $S_{depth} \leq |L_{comp}|$. Moreover,
as a component cannot require the same interface twice, $C_{ri} \leq |L_{int}|$.
Given these two observations, this 
result then follows from the algorithm in the proof of Result A.12.
\end{proof}

\vspace*{0.15in}

\noindent
Note that for each of the parameter-sets in Results A.12--A.14, ESCreate$_{poly}$
is fp-intractable relative to each non-empty subset of these
parameter-sets. Hence, these fp-tractability results are all minimal,
in the sense that no subsets of the parameters in their associated
parameter-sets yield fp-tractability.

\subsection{Results for Emergent Software Adaptation}

\label{SectResESA}

Many of the results derived in this section for ESAdapt will actually be 
derived relative to the following problem:

\vspace*{0.1in}

\noindent
{\sc Component-based Software Adaptation } (CSAdapt)  \\
{\em Input}: Software system requirements $R$, interface and component libraries
              $L_{int}$ and $L_{comp}$, a working component-based software 
              system $S$ based on component $c \in L_{comp}$ relative to $R$, 
              $L_{int}$, and $L_{comp}$, reward function $Rew()$, and an integer $k$. \\
{\em Question}: Is there a working component-based software system $S'$ based on $c$ 
              relative to $L_{int}$, $L_{comp}$, and $R$ such that $Rew(S') \leq k$?

\vspace*{0.1in}

\noindent
Note that each input for ESAdapt has a corresponding input to CSAdapt (namely, the input 
to ESAdapt without $Env()$). Moreover, any algorithm $A$ that solves 
ESAdapt under some $Rew()$ can be also used to solve CSAdapt (namely, if $A$ run on the 
given input $x$ for CSAdapt produces a working system $S'$ such that $Rew(S') \leq k$, 
output ``Yes'', otherwise output ``No''). This yields the following useful observation.

\begin{observation}
For any choice of $Rew()$ and $Env()$, if there is an algorithm $A$ of solvability 
type $T$ for ESAdapt under $Rew()$ than there is an algorithm $A'$ of solvability
type $T$ for CSAdapt under $Rew()$.
\label{ObsESAdaptSolvability}
\end{observation}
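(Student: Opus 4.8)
The plan is to mirror the argument implicit in Observation \ref{ObsESCreateSolvability}: exhibit a single deterministic polynomial-time transformation turning any algorithm for ESAdapt under $Rew()$ into one for CSAdapt under $Rew()$, and then check that this transformation preserves each solvability type $T$ of interest.

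First I would fix $Rew()$ and $Env()$ and suppose $A$ solves ESAdapt under $Rew()$ with solvability type $T$. Given an instance $\langle R, L_{int}, L_{comp}, S, Rew(), k\rangle$ of CSAdapt, the algorithm $A'$ forms the ESAdapt instance $\langle R, L_{int}, L_{comp}, S, Rew(), Env()\rangle$ (padding with the fixed, nonempty $Env()$ as required), runs $A$ on it to obtain a working system $S'$, evaluates $Rew(S')$ (polynomial in $|S'|$ by assumption), and answers ``Yes'' iff $Rew(S') \leq k$. Since ESAdapt always has at least one working system --- namely the input $S$ --- the system $S'$ returned by $A$ has minimum $Rew$-value over all working systems based on $c$ relative to $L_{int}$, $L_{comp}$, $R$; hence $Rew(S') \leq k$ exactly when some working system of reward at most $k$ exists, which is precisely the CSAdapt question. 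This establishes correctness.

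The remaining work is to verify that $A'$ inherits $T$ from $A$, case by case, as will be needed in Section \ref{SectResESA}. For polynomial-time exact solvability this is immediate, since $A'$ adds only polynomial overhead; polynomial-time promise solvability is subsumed because ESAdapt is always feasible, so a promise algorithm for it is already exact. For frequently-correct solvability (deterministic or probabilistic) the key point is that the transformation is a single-instance-to-single-instance deterministic reduction, so the set of inputs on which $A'$ errs --- respectively, the failure probability of $A'$ --- is controlled by that of $A$ up to the polynomial change in input size; thus a polynomial error-frequency bound, or a success probability $\geq 2/3$, is preserved. For fixed-parameter tractability relative to any parameter-set $K$ drawn from Table \ref{TabPrm}, the transformation leaves $R$, $L_{int}$, $L_{comp}$, and $S$ untouched, so every parameter value is unchanged and the $f(K)|x|^c$ runtime of $A$ carries over to $A'$.

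I do not anticipate a genuine obstacle. The only point needing a little care is making explicit, for each solvability type appearing later in Section \ref{SectResESA}, that padding with $Env()$ and performing the final $Rew$-comparison cost nothing beyond polynomial time and disturb no parameter --- exactly as in the ESCreate case --- so that this observation can subsequently be used (by contradiction) to transfer intractability results uniformly from CSAdapt to ESAdapt.
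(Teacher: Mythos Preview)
Your proposal is correct and follows essentially the same approach as the paper: form the ESAdapt instance by adjoining $Env()$ to the CSAdapt input, run $A$ to obtain an optimal working system $S'$, and answer ``Yes'' iff $Rew(S') \leq k$. The paper states this in a single sentence preceding the observation and does not spell out the case-by-case preservation of each solvability type as you do, so your version is simply a more detailed rendering of the same argument.
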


\noindent
It is very important to note that CSAdapt (unlike CSCreate in Section \ref{SectResESC}) 
explicitly invokes $Rew()$; hence, all of our results proved by invoking Observation
\ref{ObsESAdaptSolvability} (i.e., all results proved in this section) are relative to 
specific $Rew()$, namely, either $Rew_{\#comp}()$ or $Rew_{CodeB}()$. This has some 
interesting consequences, which will be discussed further in Section \ref{SectDisc}.

\subsubsection{Unsolvability of Unrestricted Emergent Software Adaptation}

\label{SectResESAUnsolv}

We start off by considering if problem ESAdapt is solvable in the most
general possible case --- that is, if ESAdapt has an algorithm that always
returns the correct output for an input incorporating any possible choice of
$Env()$ and in which there are no restrictions on the form, size,
or running times of $L_{int}$ and $L_{comp}$, their member interfaces and
components, or any software systems created using $L_{int}$ and $L_{comp}$. 
Analogous to ESCreate in Section \ref{SectResESCUnsolv}, this once again 
turns out not to be the case, though we only show
unsolvability at this time for ESAdapt under either $Rew_{\#comp}()$
or $Rew_{CodeB}()$.

Let us first consider ESAdapt under $Rew_{\#comp}()$.
In the following, we modify $L_{int}$ and $L_{comp}$ in the reduction in the 
proof of Result A.1 such that any working component-based software system has
a new topmost component {\tt topBase} and there are thus now two possible
working component-based software systems, one of which is the given ``twin''
component-based system $S$ with extra component ({\tt Base1a}) that can be 
disallowed as a possible solution by appropriately setting the value of $k$ 
relative to reward function $Rew_{\#comp}()$ 
(see Figure \ref{FigRed6}).

\begin{figure}[t]
\begin{center}
\includegraphics[width=3.5in]{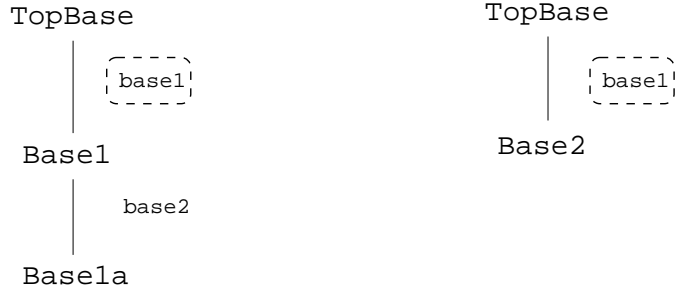}
\end{center}
\caption{General structure of valid software systems created by the reduction 
          in the proof of Result B.1. The ``twin'' component-based system with an extra
          component and extra code that is given as $S$ in the reduction is on the left,
          and is forbidden as output by appropriate values of $k$ under $Rew_{\#comp}()$
          and $Rew_{CodeB}()$ in favor of the component-based system on the right.
          Following the convention
          in Figure \ref{FigExESS3}, interfaces with multiple implementing
          components are enclosed in dashed boxes.
}
\label{FigRed6}
\end{figure}

\begin{description}
\item[{\bf Result B.1}] For any choice of $Env()$, ESAdapt under $Rew_{\#comp}()$ is 
unsolvable.
\end{description}
\begin{proof}
Consider the following polynomial-time Karp reduction from {\sc TM Halting}
to CSAdapt: given an instance $I = \la M, x\ra$ of {\sc TM Halting}, construct 
an instance $I' = \la R, L_{int}, L_{comp}, S, c, Rew() = Rew_{\#comp}(),
k\ra$ of CSAdapt in which $X = \{x_1\}$ and $O = \{1\}$, there
is a single input-output pair $r$ in $R$ such that for $r = (True, 1)$,
$L_{int}$ consists of the three interfaces

\begin{center}
\begin{verbatim}
    interface topBase {
         void main(Input I)
    }

    interface base1 {
         void main1(Input I)
    }

    interface base2 {
         void base1a(Input I)
    }
\end{verbatim}
\end{center}

\noindent
and $L_{comp}$ consists of the four components

\begin{center}
\begin{verbatim}
    component TopBase provides topBase requires base1 {
         void main(Input I) {
             main1(I)
         }
    }

    component Base1 provides base1 requires base2 {
         void main1(Input I) {
             base1a(I)
         }
    }

    component Base1a provides base2 {
         void base1a(Input I) {
             output 1
         }
    }

    component Base2 provides base1 {
         void main1(Input I) {
             <CODEM(x)>
             output 1
         }
    }
\end{verbatim}
\end{center}

\noindent 
where {\tt <CODEM(x)>} is the Dana code simulating the computation of
$M$ on input $x$. As noted previously in the proof of Result A.1,
Dana contains both loops and conditional statements and can readily 
simulate $M$ on input $x$ using code that is of size polynomial
in the sizes of the given descriptions of $M$ and $x$. Finally, let $c$
be component {\tt TopBase} in $L_{comp}$, $S$ be the software system
based on components {\tt TopBase}, {\tt Base1} and {\tt Base1a}, and $k = 2$. 
Note that
the instance of CSAdapt described above can be constructed in time 
polynomial in the size of the given instance of {\sc TM Halting}. To 
conclude the proof, observe that the only possible component-based 
system for the constructed instance of CSAdapt which has only the
two components required by the value of $k$ is that consisting of 
{\tt TopBase} and {\tt Base2}, and that this system satisfies the sole 
input-output constraint in $R$ if and only if $M$ halts on input $x$ for the 
given instance of {\sc TM Halting}.
It is known that {\sc TM Halting} cannot have an algorithm that is correct 
for all given $\langle M, x\rangle$ instances (\cite{Tur36}; see also 
\cite[Section 9.2.4]{HMU01}), and 
hence is unsolvable. The reduction above implies in
turn that CSAdapt under $Rew_{\#comp}()$ cannot have an algorithm either. 
The unsolvability result for ESAdapt under $Rew_{\#comp}()$ then 
follows by contradiction from Observation \ref{ObsESAdaptSolvability}.
\end{proof}

\noindent
We can in turn use a version of the reduction above to prove unsolvability of ESAdapt 
under $Rew_{CodeB}()$ by ``padding'' the code of component {\tt Base1a} to ensure that the 
given component-based system $S$ cannot be a solution relative to an appropriate
value of $k$.

\begin{description}
\item[{\bf Result B.2}] For any choice of $Env()$, ESAdapt under $Rew_{CodeB}()$ is unsolvable.
\end{description}
\begin{proof}
Consider a modification of the reduction in the proof of Result B.1
in which component {Base1a} is instead

\begin{center}
\begin{verbatim}
    component Base1a provides base2 {
         void base1a(Input I) {
             <BLOCK>
             output 1
         }
    }
\end{verbatim}
\end{center}

\noindent
where {\tt <BLOCK>} consists of $|${\tt <CODE(M(x))>}$|$ copies
of the statement {\tt x = 1} and $k =|${\tt <CODE(M(x))>}$| + 16$. To
conclude the proof, observe that the only possible component-based 
system for the constructed instance of CSAdapt which has the 
code-base length required by the value of $k$ is that consisting of 
{\tt TopBase} and {\tt Base2}, and that this system satisfies the sole 
input-output constraint in $R$ if and only if $M$ halts on input $x$ for the 
given instance of {\sc TM Halting}. It is known that {\sc TM Halting} cannot 
have an algorithm that is correct for all given $\langle M, x\rangle$ instances
\cite[Section 9.2.4]{HMU01}, and hence is unsolvable. The reduction above 
implies in turn that CSAdapt under $Rew_{CodeB}()$ cannot have an algorithm 
either. The unsolvability result for ESAdapt under $Rew)_{CodeB}()$ 
then follows by contradiction from Observation \ref{ObsESAdaptSolvability}.
\end{proof}

\vspace*{0.15in}

\noindent
Unlike ESCreate in Section \ref{SectResESAUnsolv}, the two results above only hold relative
to {\em some} choices of $Env{}$ and $Rew()$ --- namely, every possible choice of $Env()$
paired with either of $Rew_{\#comp}()$ or $Rew_{CodeB}()$  (this is because  the proofs of 
these results only ignore $Env()$ and not both $Env()$ and $Rew()$).
That being said, restricted versions of 
ESAdapt under $Rew_{\#comp}()$ and $Rew_{CodeB}()$ may yet have correct and
even efficient algorithms. Thus, in the remainder of this paper, we shall 
assume that ESAdapt and CSAdapt are (as were ESCreate and CSCreate
in Section \ref{SectResESCUnsolv})
restricted such that any component-based software system $S$ and $S'$ 
relative to given $L_{int}$ and $L_{comp}$ run in time polynomial in 
input size  $|I|$ and hence can be checked against the system requirements in 
$R$ in time polynomial in the sizes of $S$ and $R$, i.e., created 
software systems not only operate but can also be verified quickly. We will 
denote these restricted versions as ESAdapt$_{poly}$ and CSAdapt$_{poly}$, respectively.
 
\subsubsection{Polynomial-time Exact Solvability of Restricted Emergent Software Adaptation}

\label{SectResESA_PTExactSolv}

We now consider if ESAdapt$_{poly}$ is efficiently solvable in the first of the senses listed at the
start of Section \ref{SectResDesSolv} --- namely, polynomial-time exact solvability and
polynomial-time exact promise solvability.
As was the case with ESCreate$_{poly}$ in Section \ref{SectResESC_PTExactSolv},
it turns out that ESAdapt$_{poly}$ is also polynomial-time exact intractable,
with the difference
that this intractability is proven relative to specific rather than any
reward functions (namely, $Rew_{\#comp}()$ and $Rew_{CodeB}()$).

\begin{lemma}
{\sc Dominating set} polynomial-time Karp reduces to CSAdapt$_{poly}$ under 
$Rew_{\#comp}()$.
\label{LemRed_DS_CSAdapt1}
\end{lemma}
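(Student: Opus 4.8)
The plan is to adapt the reduction from {\sc Dominating set} to CSCreate$_{poly}$ given in Lemma~\ref{LemRed_DS_CSCreate1} so that it also produces the extra pieces that CSAdapt$_{poly}$ requires, namely a working ``seed'' system $S$ and an integer bound $k$. Recall that in the proof of Lemma~\ref{LemRed_DS_CSCreate1}, from an instance $\la G=(V,E),k_{DS}\ra$ of {\sc Dominating set} one builds $R$ with one input-output pair $r_j$ per vertex $v_j$ (with $v_{i_j}(x_\ell)=True$ iff $v_\ell\in N_C(v_j)$, all outputs $1$), a base component {\tt Base} providing {\tt base} and requiring $k_{DS}$ interfaces {\tt cond1},\dots,{\tt condk}, each implementable by $|V|$ components {\tt InSetJK} (returning $v_I(x_K)$), and $c={\tt Base}$. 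Valid systems correspond to vertex-sets of size $k_{DS}$ and working systems to dominating sets of size at most $k_{DS}$; all such systems have exactly $k_{DS}+1$ components.

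First I would take exactly this construction for $R$, $L_{int}$, $L_{comp}$, and $c$, and set the CSAdapt$_{poly}$ reward parameter to $k = k_{DS}+1$ (the component count of any working system in the image of the reduction). Second, I must supply a working seed system $S$: as noted in Lemma~\ref{LemRed_DS_CSCreateOPT1}'s footnote, the set $V$ itself is always a dominating set, so I can let $S$ be the system consisting of {\tt Base} together with the $k_{DS}$ components {\tt InSet1v}, {\tt InSet2v},\dots obtained by picking, for each required {\tt cond}-interface, an {\tt InSetJK} whose index $K$ ranges over (the first $k_{DS}$ distinct elements of) a fixed dominating set of $G$ --- e.g.\ using $V$ directly if $k_{DS}\ge |V|$, or otherwise noting that any instance with $k_{DS}<|V|$ still has $V$ available but we need a size-$k_{DS}$ witness; to be safe I would instead only invoke the reduction on instances where a size-$k_{DS}$ dominating set is trivially available, or pad: add one further {\tt InSet}-style component that always returns $True$ and is wired into every {\tt cond}-interface of a slightly enlarged {\tt Base}, guaranteeing a canonical working system of exactly $k_{DS}+1$ components regardless of $G$. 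Either way, $S$ is a valid working system with $Rew_{\#comp}(S)=k_{DS}+1=k$, so it is a legal CSAdapt$_{poly}$ input, and the whole instance is clearly polynomial-time computable; since the code is still a single $(k_{DS}+1)$-clause {\tt if-then} block with no loops, every candidate system runs in linear time, so the instance genuinely lies in CSAdapt$_{poly}$.

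Third, correctness: a working system $S'$ with $Rew_{\#comp}(S')\le k=k_{DS}+1$ exists for the constructed instance iff $G$ has a dominating set of size at most $k_{DS}$. The ``if'' direction is the forward construction of Lemma~\ref{LemRed_DS_CSCreate1} (it yields a working system of $\le k_{DS}+1$ components). For ``only if'', any valid system based on $c$ has exactly $k_{DS}+1$ components (one {\tt Base} plus one {\tt InSet}-component per required {\tt cond}-interface), so the $Rew_{\#comp}$ bound is automatically met by every valid system and imposes nothing extra; then the converse argument of Lemma~\ref{LemRed_DS_CSCreate1} --- that a working system's chosen {\tt InSet}-components must have vertices forming a dominating set --- gives a dominating set of size $\le k_{DS}$. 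I expect the only real subtlety (the ``hard part'') to be the bookkeeping around the seed system $S$: ensuring it is genuinely a \emph{valid} and \emph{working} system for every input $G$ while keeping its component count exactly $k_{DS}+1$ so that the single parameter $k$ simultaneously excludes nothing useful and lets the reduction go through cleanly. Everything else is inherited verbatim from Lemma~\ref{LemRed_DS_CSCreate1}, and the $NP$-hardness of CSAdapt$_{poly}$ under $Rew_{\#comp}()$ then follows from the $NP$-hardness of {\sc Dominating set}.
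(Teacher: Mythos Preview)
Your proposal has a genuine gap centered exactly where you flagged the ``hard part'': the seed system $S$. In CSAdapt$_{poly}$ the question is whether \emph{some} working system $S'$ with $Rew_{\#comp}(S')\le k$ exists. If you succeed in building a working seed $S$ with $Rew_{\#comp}(S)=k_{DS}+1=k$, then $S$ itself is such an $S'$, and the constructed CSAdapt$_{poly}$ instance is a ``Yes''-instance \emph{regardless} of whether $G$ has a dominating set of size $k_{DS}$. So the reduction collapses on no-instances. Your three workarounds do not escape this: using $V$ only covers $k_{DS}\ge |V|$; restricting to ``trivially available'' instances is not a Karp reduction; and the always-$True$ padding component both (i) makes the seed a working system of $k_{DS}+1$ components that already meets the bound, and (ii) can be reused in \emph{any} candidate $S'$, so working systems no longer correspond to dominating sets at all. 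Either way the ``only if'' direction fails.

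The paper sidesteps this by \emph{not} reusing Lemma~\ref{LemRed_DS_CSCreate1} but instead the construction of Lemma~\ref{LemRed_DS_CSCreateOPT1}, which has a {\tt TopBase} component and a family {\tt Base1},\dots,{\tt Base}$|V|$ so that working systems can have anywhere from $3$ to $|V|+2$ components. The seed is then taken to be {\tt TopBase} $+$ {\tt Base}$|V|$ $+$ the $|V|$ components {\tt InSetJJ}, i.e., the system corresponding to the trivial dominating set $V$, which is always working and has $|V|+2$ components. With $k'=k+2$ and (w.l.o.g.) $k<|V|$, the seed has reward strictly above $k'$, so it is never itself a witness; any $S'$ with $Rew_{\#comp}(S')\le k'$ must use some {\tt BaseM} with $M\le k$, forcing a size-$\le k$ dominating set. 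The essential idea you are missing is that the seed must have reward \emph{strictly larger} than the bound, which requires a library admitting working systems of varying component counts.
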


\begin{proof}
Given an instance $I = \la G = (V,E), k\ra$ of {\sc Dominating set}, construct 
the following instance $I' = \la R, L_{int}, L_{comp}, S, c, Rew() = Rew_{\#comp}(),
k'\ra$ of CSAdapt$_{poly}$: 
Without loss of generality, assume $k < |V|$. Let $R$, $L_{int}$, $L_{comp}$, 
and $c$ be as in the proof of Lemma \ref{LemRed_DS_CSCreateOPT1} and $S$ be the
software system based on {\tt TopBase}, {\tt Base(|V|)}, and the components 
{\tt InSetJJ} for $1 \leq {\tt J} \leq |V|$. Finally, let $k' = k + 2$. Note 
that this instance CSAdapt$_{poly}$ can be constructed in time polynomial in the size 
of the given instance of {\sc Dominating set}; moreover, as there are only a 
$(k + 1) \leq (|V| + 1) < (|I| + 1)$-clause {\tt if-then} statement block and 
two single-level loops in the component code that each execute at most $|V| < 
|I|$ times, any component-based software system created relative to $L_{int}$, 
$L_{comp}$, and $Ec$ runs in time linear in the size of input $I'$.

Let us now verify the correctness of this reduction:

\begin{itemize}
\item{
Suppose that there is a dominating set $D$ of size $k$ in the given instance of 
{\sc Dominating set}. We c an then construct a component-based software system $S'$ 
consisting of $c$ and component {\sc Basek}, in which the $k$ 
{\tt cond}-interfaces in {\tt Basek} are implemented by {\tt Inset} components
corresponding to the vertices in $D$.  Observe that for each $(i_j, o_j) \in 
R$, $S'$ produces output $o_j$ given input $i_j$; moreover, $Rew_{\#comp}(S') = 
k + 2 \leq k'$.
}
\item{
Conversely, suppose that the constructed instance of CSAdapt$_{poly}$ has a \linebreak 
working component-based software system $S'$ based on $c$ relative to $L_{int}$,
$L_{comp}$, and $R$ such that $Rew_{\#comp}(S') \leq k' = k + 2$. Such a system 
$S'$ cannot include component {\tt Base(|V|)} as in $S$, because in that case,
$Rew_{\#comp}(S')$ would have the value $|V| + 2 \not\leq k'$. Therefore, $S'$ 
must include one of the component {\tt BaseJ} for $1 \leq {\tt J} \leq k$; let 
us call this component {\tt BaseM}. As in the proof of Lemma 
\ref{LemRed_DS_CSCreateOPT1}, 
in order to correctly accommodate all
other input-output pairs in $R$, the $M$ {\tt if-then} statements in 
{\sc BaseM} must implement {\tt InSet}-components whose corresponding vertices 
form a dominating set in $G$ of size at most $k$ in $G$. Hence, the existence of
a working component-based software system for the constructed instance of 
CSCreate$_{poly}$ implies the existence of a dominating set of size at most $k$ for the 
given instance  of {\sc Dominating set}.
}
\end{itemize}

\noindent
This completes the proof.
\end{proof}

\begin{lemma}
{\sc Dominating set} polynomial-time Karp reduces to CSAdapt$_{poly}$ under 
$Rew_{CodeB}()$.
\label{LemRed_DS_CSAdapt2}
\end{lemma}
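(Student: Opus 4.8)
The plan is to reuse the construction from the proof of Lemma \ref{LemRed_DS_CSAdapt1} essentially verbatim, swapping the component-count accounting for codebase-size accounting exactly as was done in passing from Lemma \ref{LemRed_DS_CSCreateOPT1} to Lemma \ref{LemRed_DS_CSCreateOPT2}. Given an instance $I = \la G = (V,E), k\ra$ of {\sc Dominating set}, I would assume without loss of generality that $k < |V|$ (otherwise $V$ is itself a dominating set of size at most $k$ and $I$ maps to any fixed ``Yes''-instance of CSAdapt$_{poly}$). Construct the instance $I' = \la R, L_{int}, L_{comp}, S, c, Rew() = Rew_{CodeB}(), k'\ra$ of CSAdapt$_{poly}$ in which $R$, $L_{int}$, $L_{comp}$, and $c$ (being component {\tt TopBase}) are exactly as in the proof of Lemma \ref{LemRed_DS_CSCreateOPT1}, $S$ is the working system based on {\tt TopBase}, {\tt Base(|V|)}, and the components {\tt InSetJJ} for $1 \leq {\tt J} \leq |V|$ (this is a working system, since it corresponds to the dominating set $V$), and $k' = 9k + 16$. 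As argued in Lemma \ref{LemRed_DS_CSAdapt1}, this instance is constructible in polynomial time and every valid component-based software system relative to $L_{int}$ and $L_{comp}$ runs in time linear in $|I'|$, so $I'$ is a legitimate instance of CSAdapt$_{poly}$.

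For correctness I would run the same two-direction argument as in Lemmas \ref{LemRed_DS_CSAdapt1} and \ref{LemRed_DS_CSCreateOPT2}. Forward direction: a dominating set $D$ with $|D| \leq k$ yields the working system $S'$ consisting of {\tt TopBase}, component {\tt Base} of arity $|D|$, and the $|D|$ {\tt InSet}-components whose vertices are those of $D$ (the choice of which {\tt cond}-interface each implements being immaterial); the line tally already computed in the proof of Lemma \ref{LemRed_DS_CSCreateOPT2}, namely $5 + (|D|+5) + 5|D| + 3 + 3 + 3|D| = 9|D| + 16$, gives $Rew_{CodeB}(S') = 9|D| + 16 \leq 9k + 16 = k'$. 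Converse direction: if $S'$ is a working system with $Rew_{CodeB}(S') \leq k'$, then since {\tt TopBase} requires a {\tt base}-component and that component in turn requires its {\tt cond}-interfaces, $S'$ consists of {\tt TopBase}, exactly one component {\tt BaseM} for some $1 \leq M \leq |V|$, and exactly $M$ {\tt InSet}-components (one per {\tt cond}-interface required by {\tt BaseM}, these requiring no further interfaces so that the wiring tree terminates), together with the interfaces {\tt topBase}, {\tt base}, and the $M$ relevant {\tt cond}-interfaces; hence $Rew_{CodeB}(S') = 9M + 16$, forcing $M \leq k$ (which in particular excludes $S$ itself, as $Rew_{CodeB}(S) = 9|V| + 16 > 9k + 16$ by $k < |V|$). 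Exactly as in the converse part of the proof of Lemma \ref{LemRed_DS_CSCreateOPT1}, for $S'$ to produce output $o_j$ on every input $i_j \in R$ the $M$ {\tt if-then} guards in {\tt BaseM} must invoke {\tt InSet}-components whose vertices form a dominating set of $G$ of size at most $M \leq k$. Since both the construction and the extraction of a dominating set from $S'$ run in polynomial time, this is the desired Karp reduction.

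The only delicate point is the converse structural claim: that no working system can beat the codebase bound $9M + 16$ by sharing components more cheaply. This is handled by the tree shape of the wiring rooted at {\tt TopBase}, which pins the system to one {\tt Base}-component of some arity $M$ plus exactly $M$ {\tt InSet}-components, so that the codebase size is exactly $9M + 16$ with $M$ the arity of the chosen {\tt Base}-component; everything else reduces to the already-established correctness of Lemma \ref{LemRed_DS_CSCreateOPT1}. I expect the paper's own proof to be essentially a single sentence citing ``a slight modification of the proof of Lemma \ref{LemRed_DS_CSAdapt1} analogous to the modification of Lemma \ref{LemRed_DS_CSCreateOPT1} used to obtain Lemma \ref{LemRed_DS_CSCreateOPT2}'', and I would present the argument at roughly the level of detail above, deferring the remaining bookkeeping to those two proofs.
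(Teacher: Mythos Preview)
Your proposal is correct and matches the paper's own proof essentially verbatim: the paper reuses the construction of Lemma~\ref{LemRed_DS_CSAdapt1}, computes $Rew_{CodeB}(S') = 5 + (k+5) + 5k + 3 + 3 + 3k = 9k + 16$ for a system corresponding to a size-$k$ dominating set, and then defers to ``a slight modification to the proof of correctness of the reduction described in Lemma~\ref{LemRed_DS_CSAdapt1}.'' Your prediction about the paper's level of detail is exactly right.
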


\begin{proof}
In the proof of Lemma \ref{LemRed_DS_CSAdapt1}, observe that for a dominating 
set of size $k$ in the given instance of {\sc Dominating set}$^{OPT}$, a 
software system $S'$ for the constructed instance of CSAdapt$_{poly}$ consists of 
components {\tt TopBase} and {\tt Basek}, $k$ {\tt InSet} components, 
interfaces {\tt topBase} and {\tt base}, and $k$ {\tt cond} interfaces. The 
total number of lines of code in this system and hence the value of 
$Rew_{CodeB}(S')$) is therefore $5 + (k + 5) + 5k + 3 + 3 + 3k = 9k + 16$. The 
result then follows by a slight modification to the proof of correctness of 
the reduction described in Lemma \ref{LemRed_DS_CSAdapt1}.
\end{proof}

\begin{description}
\item[{\bf Result B.3}] For any choice of $Env()$, if ESAdapt$_{poly}$ under $Rew_{\#comp}()$
is polynomial-time exact solvable then $P = NP$.
\end{description}
\begin{proof}
Given the $NP$-hardness of {\sc Dominating set}, the reduction in Lemma
\ref{LemRed_DS_CSAdapt1} implies that CSAdapt$_{poly}$ under $Rew_{\#comp}()$ is $NP$-hard,
and hence not solvable in polynomial time unless $P = NP$. The polynomial-time
intractability result for ESAdapt$_{poly}$ under $Rew_{\#comp}()$ then follows by 
contradiction from Observation \ref{ObsESAdaptSolvability}.
\end{proof}

\begin{description}
\item[{\bf Result B.4}] For any choice of $Env()$, if ESAdapt$_{poly}$ under $Rew_{CodeB}()$ is
polynomial-time exact solvable then $P = NP$.
\end{description}
\begin{proof}
Given the $NP$-hardness of {\sc Dominating set}, the reduction in Lemma
\ref{LemRed_DS_CSAdapt2} implies that CSAdapt$_{poly}$ under $Rew_{CodeB}()$is $NP$-hard, 
and hence not solvable in polynomial time unless $P = NP$. The polynomial-time
intractability result for ESAdapt$_{poly}$ under $Rew_{CodeB}()$ then follows by 
contradiction from Observation \ref{ObsESAdaptSolvability}.
\end{proof}

\vspace*{0.15in}

\noindent
Note that, unlike for ESCreate$_{poly}$, evaluating the polynomial-time exact promise
solvability of ESAdapt$_{poly}$ is not possible. This is because (as noted 
previously in Section \ref{SectForm}) any version of ESAdapt$_{poly}$ promising a working 
software system for the given input will always have at least one such system --- namely, $S$.

\subsubsection{Polynomial-time Approximate Solvability of Restricted Emergent Software Adaptation}

\label{SectResESA_PTApproxSolv}

We now consider if ESAdapt$_{poly}$ is efficiently approximately solvable in either of the three
senses (frequently correct (deterministic), frequently correct (probabilistic), or
approximately optimal) listed at the start of Section \ref{SectResDesSolv}.
Once again, as with ESAdapt$_{poly}$, this turns out not to be the case.
The proofs of Results B.5--B.8 below are analogous to the proofs for Results 
A.2 and A.3 in Section \ref{SectResESC_PTExactSolv}, only this time relative to
the proofs of Results B.3 and B.4 and Observation \ref{ObsESAdaptSolvability}.

\begin{description}
\item[{\bf Result B.5}] For any choice of $Env()$, if ESAdapt$_{poly}$ under $Rew_{\#comp}()$
is solvable by a polynomial-time algorithm with a polynomial error frequency 
(i.e., $err(n)$ is upper bounded by a polynomial of $n$) then $P = NP$.
\end{description}

\begin{description}
\item[{\bf Result B.6}] For any choice of $Env()$, if ESAdapt$_{poly}$ under $Rew_{CodeB}()$
is solvable by a polynomial-time algorithm with a polynomial error frequency 
(i.e., $err(n)$ is upper bounded by a polynomial of $n$) then $P = NP$.
\end{description}

\begin{description}
\item[{\bf Result B.7}] For any choice of $Env()$, if $P = BPP$ and ESAdapt$_{poly}$ under 
$Rew_{\#comp}()$ is polynomial-time solvable by a probabilistic 
algorithm which operates correctly with probability $\geq 2/3$ then $P = NP$.
\end{description}

\begin{description}
\item[{\bf Result B.8}] For any choice of $Env()$, if $P = BPP$ and ESAdapt$_{poly}$ under
$Rew_{CodeB}()$ is polynomial-time solvable by a probabilistic 
algorithm which operates correctly with probability $\geq 2/3$ then $P = NP$.
\end{description}
%\textcolor{red}{Similarly to my comment under Result A.5, would it make sense
%to phrase the above two results as statements of the form ``If X then $NP = BPP$''?}

\noindent
To assess cost-approximability, we need the following problem.

\vspace*{0.1in}

\noindent
{\sc Optimal Component-based Software Adaptation } (CSAdapt$^{OPT}$)  \\
{\em Input}: Software system requirements $R$, interface and component libraries
              $L_{int}$ and $L_{comp}$, a working component-based software
              system $S$ based on component $c \in L_{comp}$ relative to $R$,
              $L_{int}$, and $L_{comp}$, and a reward function $Rew()$. \\
{\em Output}: A working component-based software system $S'$ based on
               $c$ relative to $L_{int}$, $L_{comp}$, and $R$ that has the
               smallest possible value of $Rew(S)$ over all working systems based on $c$
               relative to $L_{int}$, $L_{comp}$, and $R$.

\vspace*{0.1in}

\noindent
Let CSAdapt$_{poly}^{OPT}$ be the version of CSAdapt$^{OPT}$ such that given
and candidate component-based systems $S$ and $S'$ run in time polynomial in the input 
size $|I|$. Note that each input for ESAdapt$_{poly}$ has a corresponding input to 
CSAdapt$_{poly}^{OPT}$ (namely, the input to ESAdapt$_{poly}$ without $Env()$). 
Moreover, any algorithm $A$ that solves ESAdapt$_{poly}$ under some $Rew()$ 
can also be used to solve CSAdapt$_{poly}^{OPT}$ (namely, return whatever $A$ run on the 
given input $x$ for CSAdapt$_{poly}^{OPT}$ produces). This yields the following useful 
observation.

\begin{observation}
For any choice of $Rew()$ and $Env()$, if there is an algorithm $A$ of solvability type
$T$ for ESAdapt$_{poly}$ under $Rew()$ than there is an algorithm $A'$ of solvability type
$T$ for CSAdapt$_{poly}^{OPT}$ under $Rew()$.
\label{ObsESAdaptOPTSolvability}
\end{observation}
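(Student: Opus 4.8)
The plan is to mirror the argument already sketched in the paragraph immediately preceding the statement, while making explicit why the solvability type $T$ is preserved. Fix any $Rew()$ and any $Env()$. Given an instance $y = \la R, L_{int}, L_{comp}, S, c, Rew()\ra$ of CSAdapt$_{poly}^{OPT}$ under $Rew()$, define the transformation $h$ that adjoins the (fixed) function $Env()$, producing the instance $h(y) = \la R, L_{int}, L_{comp}, S, c, Rew(), Env()\ra$ of ESAdapt$_{poly}$. The algorithm $A'$ is then simply: on input $y$, compute $h(y)$, run $A$ on $h(y)$, and return whatever $A$ outputs. Correctness is immediate, because the output specifications of ESAdapt$_{poly}$ and of CSAdapt$_{poly}^{OPT}$ under $Rew()$ are identical --- a working component-based software system $S'$ based on $c$ relative to $L_{int}$, $L_{comp}$, $R$ of minimum $Rew$-value --- and, as noted for ESAdapt in Section \ref{SectForm}, every instance of either problem admits at least one working system (namely the given $S$), so the degenerate $\bot$ case never arises and need not be handled.

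It then remains to check that $A'$ inherits solvability type $T$ from $A$. The transformation $h$ is trivial: it copies $y$ and appends the description of $Env()$, so it runs in time linear in $|y| + |Env()|$ and enlarges the instance only additively. Hence if $A$ is an exact polynomial-time (resp.\ fixed-parameter tractable relative to some parameter-set, resp.\ frequently-correct deterministic or probabilistic, resp.\ polynomial-time $c$-approximation) algorithm for ESAdapt$_{poly}$ under $Rew()$, then $A'$ is an algorithm of the same type for CSAdapt$_{poly}^{OPT}$ under $Rew()$: the composed runtime is still of the required $c_1|x|^{c_2}$ or $f(K)|x|^c$ form; for the parameterized case the parameters are left unchanged by $h$, so $h$ together with the identity map on parameters is (vacuously) a parameterized reduction; and for the approximation cases the optimum value is preserved exactly, since both problems measure cost by the same $Rew()$, so any solution within a multiplicative factor of optimal for $h(y)$ is within that same factor for $y$. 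For the Levin-reduction-style uses of this observation, $h$ together with the identity function $g(y,z) = z$ on outputs form the required pair of polynomial-time functions.

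The only point requiring a moment's care --- the closest thing here to an obstacle --- is the frequently-correct cases and any approximation guarantee stated as a function of input size: since $h$ can enlarge the instance by $|Env()|$, one must observe that this enlargement is bounded and that the relevant guarantees ($err(n)$ polynomially bounded, $c$ a fixed constant, success probability $\geq 2/3$) are all preserved under such a bounded change of instance size; this is immediate. This completes the plan.
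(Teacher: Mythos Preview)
Your proposal is correct and follows exactly the approach the paper uses: the paper justifies this observation only by the sentence preceding it (adjoin $Env()$ to a CSAdapt$_{poly}^{OPT}$ instance, run $A$, and return its output), without a formal proof. Your write-up simply fleshes out the preservation of each solvability type $T$, which the paper leaves implicit.
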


\noindent
Observe that the reductions in the proof of Lemmas \ref{LemRed_DS_CSAdapt1} and
\ref{LemRed_DS_CSAdapt2} based on that in the proof of Lemma 
\ref{LemRed_DS_CSCreateOPT1} only add a working software component-based 
software system $S$ and do not change any other details of the constructed $R$,
$L_{int}$, $L_{comp}$, and $c$. Hence, the following hold by slight 
modifications to the proofs of Lemmas \ref{LemRed_DS_CSCreateOPT1} and
\ref{LemRed_DS_CSCreateOPT2} and Results A.6 and A.7
 
\begin{lemma}
{\sc Dominating set}$^{OPT}$ polynomial-time Levin reduces to CSAdapt$_{poly}^{OPT}$ 
under $Rew_{\#comp}()$ such that there is a dominating set of size $k$ for the 
given instance of {\sc Dominating set}$^{OPT}$ if and only if there is a 
working component-based software system $S$ with reward value 
$Rew_{\#comp}(S) = k + 2$ for the constructed instance of CSAdapt$_{poly}^{OPT}$.
\label{LemRed_DS_CSAdaptOPT1}
\end{lemma}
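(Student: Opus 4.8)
The plan is to reuse, essentially unchanged, the Levin reduction from {\sc Dominating set}$^{OPT}$ to CSCreate$_{poly}^{OPT}$ under $Rew_{\#comp}()$ established in Lemma \ref{LemRed_DS_CSCreateOPT1}, and to adapt it to CSAdapt$_{poly}^{OPT}$ by supplying the one extra ingredient that the latter problem requires over the former, namely a working component-based software system $S$ given as part of the input. Concretely, given an instance $I = \la G = (V,E)\ra$ of {\sc Dominating set}$^{OPT}$, I would first construct $R$, $L_{int}$, $L_{comp}$, and $c$ exactly as in the proof of Lemma \ref{LemRed_DS_CSCreateOPT1}, set $Rew() = Rew_{\#comp}()$, and then take $S$ to be the software system built from {\tt TopBase}, {\tt Base(|V|)}, and the $|V|$ components {\tt InSetJJ} for $1 \leq {\tt J} \leq |V|$ --- i.e., the system corresponding to the trivial dominating set $V$, whose existence as a working system is already noted in the footnote to Lemma \ref{LemRed_DS_CSCreateOPT1}. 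Since $V$ is always a dominating set of $G$, this $S$ is always a legitimate working system, so the constructed instance is a bona fide instance of CSAdapt$^{OPT}$; and since the code of {\tt Base(|V|)} contains only an at-most-$(|V|+1)$-clause {\tt if-then} block with no loops, $S$ (like every candidate system here) runs in linear time in $|I'|$, so the instance in fact lies in CSAdapt$_{poly}^{OPT}$.

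The key point for correctness is that CSAdapt$_{poly}^{OPT}$ and CSCreate$_{poly}^{OPT}$ have exactly the same output semantics --- return a working system based on $c$ minimizing $Rew()$ --- and the libraries $L_{int}$, $L_{comp}$, the requirements $R$, and the base component $c$ are identical in the two constructed instances, so the set of working component-based software systems and their $Rew_{\#comp}()$ values coincide. Consequently both directions of the claimed equivalence carry over verbatim from the proof of Lemma \ref{LemRed_DS_CSCreateOPT1}: from a dominating set $D$ of size $k$ one builds the working system on {\tt TopBase}, {\tt Base$k$}, and the $k$ {\tt InSet} components naming the vertices of $D$, which has $Rew_{\#comp}(S) = k+2$; conversely, a working system with $Rew_{\#comp}(S) = val$ must consist of {\tt TopBase}, {\tt Base$(val-2)$}, and $val-2$ {\tt InSet} components, whose named vertices must (to satisfy all input--output pairs in $R$) form a dominating set of size $val-2$. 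The presence of $S$ in the input plays no role in this argument; it merely makes the instance well-formed.

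Finally, I would observe that the two functions required by the definition of a Levin reduction are, as in Lemma \ref{LemRed_DS_CSCreateOPT1}, the (clearly polynomial-time) construction algorithm $f()$ just described and the polynomial-time algorithm $g()$ that reads off the {\tt InSet} components wired into the {\tt Base} component of a returned working system and outputs the corresponding vertex set. I do not expect any genuine obstacle: the only things to verify beyond what Lemma \ref{LemRed_DS_CSCreateOPT1} already supplies are that the chosen $S$ is genuinely valid and working and that augmenting the input with $S$ neither introduces new working systems nor alters any reward value --- both immediate, since the component and interface libraries are untouched. The analogous statement for $Rew_{CodeB}()$ (Lemma \ref{LemRed_DS_CSAdaptOPT2}) would then follow by the same line-counting bookkeeping ($Rew_{CodeB}(S) = 9k+16$) used in Lemma \ref{LemRed_DS_CSCreateOPT2}.
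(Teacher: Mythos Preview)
Your proposal is correct and matches the paper's own approach essentially verbatim: the paper likewise constructs $R$, $L_{int}$, $L_{comp}$, and $c$ exactly as in Lemma \ref{LemRed_DS_CSCreateOPT1}, supplies the trivial working system $S$ built from {\tt TopBase}, {\tt Base(|V|)}, and the {\tt InSetJJ} components (as already done in Lemma \ref{LemRed_DS_CSAdapt1}), and observes that the correctness and the $k \leftrightarrow k+2$ correspondence carry over unchanged because adding $S$ to the input leaves the set of working systems and their reward values untouched.
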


\begin{lemma}
{\sc Dominating set}$^{OPT}$ polynomial-time Levin reduces to CSAdapt$_{poly}^{OPT}$ 
under $Rew_{CodeB}()$ such that there is a dominating set of size $k$ for the 
given instance of {\sc Dominating set}$^{OPT}$ if and only if there is a working
component-based software system $S$ with reward value $Rew_{CodeB}(S) = 9k + 16$ 
for the constructed instance of CSAdapt$_{poly}^{OPT}$. 
\label{LemRed_DS_CSAdaptOPT2}
\end{lemma}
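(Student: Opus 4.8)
The plan is to re-use verbatim the reduction from {\sc Dominating set}$^{OPT}$ to CSAdapt$_{poly}^{OPT}$ given in Lemma \ref{LemRed_DS_CSAdaptOPT1}, changing only the reward function under consideration from $Rew_{\#comp}()$ to $Rew_{CodeB}()$. This is legitimate because that reduction builds its $R$, $L_{int}$, $L_{comp}$, base component $c$, and given working system $S$ (the one based on {\tt TopBase}, {\tt Base(|V|)}, and the components {\tt InSetJJ} for $1 \leq {\tt J} \leq |V|$) entirely independently of which reward function is to be optimized --- exactly as Lemma \ref{LemRed_DS_CSCreateOPT2} re-used the reduction of Lemma \ref{LemRed_DS_CSCreateOPT1}.

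First I would recall from the proof of Lemma \ref{LemRed_DS_CSAdaptOPT1} the correspondence between a size-$k$ dominating set in $G$ and a working component-based software system $S'$ built from {\tt TopBase}, {\tt Basek}, the $k$ {\tt InSet} components picking out the dominating vertices, interfaces {\tt topBase} and {\tt base}, and $k$ of the {\tt cond} interfaces. Then I would count lines of code in $S'$ precisely as in the proof of Lemma \ref{LemRed_DS_CSCreateOPT2}: $5 + (k+5) + 5k + 3 + 3 + 3k = 9k+16$, establishing the claimed linear relationship $Rew_{CodeB}(S') = 9k+16$. The forward direction of correctness then follows immediately by appending this codebase-size observation to the forward direction of Lemma \ref{LemRed_DS_CSAdaptOPT1}, and the Levin-reduction functions $f()$ and $g()$ are read off exactly as there.

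The converse direction is where I would be most careful, and hence the main (though still mild) obstacle. Given a working system $S'$ with $Rew_{CodeB}(S') = 9k+16$, I must argue that $S'$ uses some {\tt BaseJ} component with ${\tt J} \leq k$ rather than the larger {\tt Base(|V|)} of the given system $S$ (whose codebase is the strictly larger value $9|V|+16$), and then that its {\tt J} {\tt if-then} branches invoke {\tt InSet} components whose vertices dominate $G$. The structural fact to nail down is that each {\tt BaseJ} requires exactly {\tt J} of the {\tt cond} interfaces and therefore forces exactly {\tt J} {\tt InSet} components and {\tt J} {\tt cond} interfaces into any system using it, so that the total codebase of such a system is the strictly increasing function $9{\tt J}+16$ of {\tt J}; hence $Rew_{CodeB}(S') = 9k+16$ pins {\tt J} down to $k$, and the dominating-set extraction argument of Lemmas \ref{LemRed_DS_CSCreateOPT1} and \ref{LemRed_DS_CSAdaptOPT1} applies to yield a dominating set of size at most $k$. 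Assembling these observations gives the stated Levin reduction.
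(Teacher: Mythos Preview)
Your proposal is correct and follows essentially the same approach as the paper, which does not give an explicit proof but simply notes that the result holds by slight modifications to the proofs of Lemmas~\ref{LemRed_DS_CSCreateOPT1} and~\ref{LemRed_DS_CSCreateOPT2} (the CSAdapt$_{poly}^{OPT}$ construction being just the CSCreate$_{poly}^{OPT}$ construction of Lemma~\ref{LemRed_DS_CSCreateOPT1} with the trivial working system $S$ adjoined). Your write-up in fact spells out the codebase monotonicity in {\tt J} for the converse direction more carefully than the paper does.
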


\begin{description}
\item[{\bf Result B.9}] 
For any choice of $Env()$, if  ESAdapt$^{OPT}$ under $Rew_{\#comp}()$ is polynomial-time
\newline
$c$-approximable for any constant $c > 0$ then $P = NP$.
\end{description}

\begin{description}
\item[{\bf Result B.10}] 
For any choice of $Env()$, if ESAdapt$^{OPT}$ under $Rew_{CodeB}()$ is polynomial-time 
\newline
$c$-approximable for any constant $c > 0$ then $P = NP$.
\end{description}

\subsubsection{Fixed-parameter tractability of Restricted Emergent Software Adaptation}

\label{SectResESA_FPSolv}

Given the plethora of intractability results in the previous three
subsections, we now consider to what extent and relative to which
parameters ESAdapt$_{poly}$ is and is not fp-tractable.
Our results in this section are derived relative to the parameters listed
in Table \ref{TabPrm} for ESCreate$_{poly}$ in Section \ref{SectResESC_FPSolv} and use 
versions of the proofs of Results A.8, A.9, A.10, and A.11 as modified by the 
``twinning'' and ``padding'' tricks used in the proofs of Results B.1 and B.2. 
The former trick works relative to the results listed above derived by 
reductions from
{\sc Dominating set}, as any instance of {\sc Dominating set} always has a 
trivial dominating set consisting of the set $V$ of all vertices in the given 
graph $G$ which can be used to structure the ``twin'' given component-based 
software system $S$.  We show how this is done relative to the proof of 
Result A.8 for ESAdapt$_{poly}$ under $Rew_{\#comp}()$ and $Rew_{CodeB}()$ and 
leave the details of the other proofs to the reader.

\begin{description}
\item[{\bf Result B.11}] For any choice of $Env()$, if $\langle |L_{int}|, C_{pi}, C_{ri}, 
                   S_{comp}, S_{depth}\ra$-ESAdapt$_{poly}$ under $Rew_{\#comp}()$ is 
                   fp-tractable then $FPT = W[2]$.
\end{description}
\begin{proof}
Consider the following polynomial-time Karp reduction from 
{\sc Dominating set} to CSAdapt$_{poly}$ under $Rew_{\#comp}()$: given an instance
$I = \la G = (V,E), k\ra$ of {\sc Dominating set}, construct an 
instance $I' = \la R, L_{int}, L_{comp}, S, c, Rew() =$ \linebreak $Rew_{\#comp}(), k'\ra$ of
CSAdapt$_{poly}$ in which $X = \{x_1, x_2, \ldots, x_{|V|}\}$, i.e., there is a 
unique Boolean variable corresponding to each vertex in $V$, and $O = 
\{0, 1\}$. There are $|V|$ input-output pairs in $R$ such that for $r_i
= (i_j, o_j)$, $1 \leq j \leq |V|$, $v_{i_j}(x_k) = True$ if $v_k \in 
N_C(v_j)$ and is $False$ otherwise and $o_j = 1$.
Let $L_{int}$ consist of $k + 3$ interfaces broken into four groups:

\begin{enumerate}
\item{
A single interface of the form

\begin{center}
\begin{verbatim}
    interface topBase {
         void main(Input I)
    }
\end{verbatim}
\end{center}
}
\item{
A single interface of the form

\begin{center}
\begin{verbatim}
    interface base1 {
         void main1(Input I)
    }
\end{verbatim}
\end{center}
}
\item{
A single interface of the form

\begin{center}
\begin{verbatim}
    interface base2 {
         void base1a(Input I)
    }
\end{verbatim}
\end{center}
}
\item{
A set of $k$ interfaces of the form

\begin{center}
\begin{verbatim}
    interface condJ {
         Boolean inSetJ(Input I)
    }
\end{verbatim}
\end{center}

\noindent
for $1 \leq {\tt J} \leq k$.
}
\end{enumerate}

\noindent
Let $L_{comp}$ consist of $k|V| + 4$ components broken into five groups:

\begin{enumerate}
\item{
A single component of the form

\begin{center}
\begin{verbatim}
    component TopBase provides topBase requires base1 {
         void main(Input I) {
             main1(I)
         }
    }
\end{verbatim}
\end{center}
}
\item{
A single component of the form

\begin{center}
\begin{verbatim}
    component Base1 provides base1 requires base2 {
         void main1(Input I) {
             base1a(I)
         }
    }
\end{verbatim}
\end{center}
}
\item{
A single component of the form

\begin{center}
\begin{verbatim}
    component Base1a provides base2 {
         void base1a(Input I) 
                   requires cond1, cond2, ..., condk {
             if inSet1(I) then output 1
             elsif inSet2(I) then output 1
                 ...
             elsif inSetk(I) then output 1
             else output 1
         }
    }
\end{verbatim}
\end{center}
}
\item{
A single component of the form

\begin{center}
\begin{verbatim}
    component Base2 provides base1 
                   requires cond1, cond2, ..., condk {
         void main1(Input I) {
             if inSet1(I) then output 1
             elsif inSet2(I) then output 1
                 ...
             elsif inSetk(I) then output 1
             else output 0
         }
    }
\end{verbatim}
\end{center}
}
\item{
A set of $k|V|$ components of the form

\begin{center}
\begin{verbatim}
    component InSetJK provides condJ {
         Boolean inSetJ(Input I) {
             return v_I(x_K) 
         }
    }
\end{verbatim}
\end{center}

\noindent
for $1 \leq {\tt J} \leq k$ and $1 \leq {\tt K} \leq |V|$.
}
\end{enumerate}

\noindent
Note that in $L_{comp}$, there are $|V|$ implementations of each
{\tt cond}-interface.  Finally, let $c$ be component {\tt TopBase} in 
$L_{comp}$, $S$ be the component-based software system composed of
{\tt TopBase}, {\tt Base1}, {\tt Base1A}, and {\tt InSetJJ} for
$1 \leq J \leq k$, and $k' = k + 2$.
Note that the instance of CSAdapt$_{poly}$ described above can be constructed in 
time polynomial in the size of the given instance of {\sc Dominating set};
moreover, as there is only a $(k + 1)$-clause {\tt if-then} statement block and
no loops in the component code and $k \leq |V| < |I|$, any candidate
component-based software system created relative to $L_{int}$, $L_{comp}$, and $c$
runs in time linear in the size of input $I'$.

Let us now verify the correctness of this reduction:

\begin{itemize}
\item{
Suppose that there is a dominating set $D$ of size at most $k$ in the given 
instance of {\sc Dominating set}. We can then construct a component-based 
software system $S'$ consisting of $c$, {\tt Base2}, and the $|D|$ 
{\tt InSet}-components corresponding to the vertices in $D$; the choice of which
interface to implement for each vertex is immaterial, and if there are less than
$k$ vertices in $D$, the final $k - |D|$ required {\tt cond}-interfaces can be 
implemented relative to {\tt InSet}-components corresponding to arbitrary 
vertices in $D$. Observe that for each $(i_j,o_j) \in R$, this software system 
produces output $o_j$ given input $i_j$ and $Rew_{\#comp}(S') = k + 2$.
}
\item{
Conversely, suppose that the constructed instance of CSAdapt$_{poly}$ has a 
\linebreak working 
component-based software system $S'$ based on $c$ relative to $L_{int}$,
$L_{comp}$, and $R$ such that $Rew_{\#comp}(S') \leq k' = k + 2$.  This 
system cannot incorporate components {\tt Base1} and {\tt Base1a}, as this
would result in $Rew_{\#comp}(S') = k + 4 \not\leq k' = k + 2$. Hence,
$S'$ must include component {\tt Base2}. In order to correctly accommodate all
input-output pairs in $R$, the $ \leq k$ {\tt if-then} statements in {\tt Base2}
must implement {\tt InSet}-components whose corresponding vertices form a 
dominating set in $G$ of size at most $k$.
Hence, the existence of a working component-based software system $S'$ for the
constructed instance of CSAdapt$_{poly}$ under $Rew_{\#comp}()$ such that
$Rew_{\#comp}(S') \leq k'$  implies the existence of a
dominating set of size at most $k$ for the given instance  of
{\sc Dominating set}.
}
\end{itemize}

\noindent
This completes the proof of correctness of the reduction.
Given the $W[2]$-hardness of $\la k \ra$-{\sc Dominating set}, this
reduction implies that CSAdapt$_{poly}$ under $Rew_{\#comp()}$ is $W[2]$-hard when
$C_{pi} = 1$, $S_{depth} = 4$, $C_{ri} = k$, and $|L_{int}| = S_{comp} = k + 3$ and 
hence not fp-tractable relative to these parameters unless $FPT = W[2]$.
The fp-intractability result for ESAdapt$_{poly}$ under $Rew_{\#comp}()$ then 
follows by contradiction from Observation \ref{ObsESAdaptSolvability}.
\end{proof}

\begin{description}
\item[{\bf Result B.12}] For any choice of $Env()$, if $\langle |L_{int}|, C_{pi}, C_{ri}, 
                   S_{comp}, S_{depth}\ra$-ESAdapt$_{poly}$ under \\ $Rew_{CodeB}()$ is 
                   fp-tractable then $FPT = W[2]$.
\end{description}
\begin{proof}
Observe that in the reduction in the proof of Result B.11, if
$k' = 6k + 15$, the only possible working component-based software
system is that including {\tt TopBase} and {\tt Base2}. The result
then holds by a modified version of the proof of Result B.11. 
\end{proof}

\begin{description}
\item[{\bf Result B.13}] For any choice of $Env()$, if $\langle I_{ci}, C_{pi}, 
                    S_{depth}\ra$-ESAdapt$_{poly}$ under \linebreak $Rew_{\#comp}()$ is fp-tractable then 
                    $P = NP$.
\end{description}

\begin{description}
\item[{\bf Result B.14}] For any choice of $Env()$, if $\langle I_{ci}, C_{pi}, 
                    S_{depth}\ra$-ESAdapt$_{poly}$ under \linebreak
                    $Rew_{CodeB}()$ is fp-tractable then $P = NP$.
\end{description}

\begin{description}
\item[{\bf Result B.15}] For any choice of $Env()$, if $\langle I_{ci}, C_{pi}, 
                    C_{ri}\ra$-ESAdapt$_{poly}$ under
                   $Rew_{\#comp}()$ is fp-tractable then $P = NP$.
\end{description}

\begin{description}
\item[{\bf Result B.16}] For any choice of $Env()$, if $\langle I_{ci}, C_{pi}, 
                    C_{ri}\ra$-ESAdapt$_{poly}$ under
                   $Rew_{CodeB}()$ is fp-tractable then $P = NP$.
\end{description}

\begin{description}
\item[{\bf Result B.17}] For any choice of $Env()$, if $\langle |L_{comp}|, I_{ci}, 
                    S_{depth}\ra$-ESAdapt$_{poly}$ under \linebreak
                    $Rew_{\#comp}()$ is fp-tractable then $P = NP$.
\end{description}

\begin{description}
\item[{\bf Result B.18}] For any choice of $Env()$, if $\langle |L_{comp}|, I_{ci}, 
                    S_{depth}\ra$-ESAdapt$_{poly}$ under \linebreak
                    $Rew_{CodeB}()$ is fp-tractable then $P = NP$.
\end{description}

\noindent
The next three results follow from the algorithms in Results A.12--A.14,
respectively, in which all candidate working systems are generated and
hence can be evaluated under any $Rew()$ of interest that is computable in
time polynomial in the size of given system $S$.

\begin{description}
\item[{\bf Result B.19}] For any choice of $Rew()$ and $Env()$, $\langle I_{ci}, C_{ri}, 
                    S_{depth}\ra$-ESAdapt$_{poly}$ is fp-tractable.
\end{description}

\begin{description}
\item[{\bf Result B.20}] For any choice of $Rew()$ and $Env()$, $\langle I_{ci}, 
                    S_{comp}\ra$-ESAdapt$_{poly}$ is \linebreak fp-tractable.
\end{description}

\begin{description}
\item[{\bf Result B.21}] For any choice of $Rew()$ and $Env()$, $\langle |L_{int}, 
                    |L_{comp}|\ra$-ESAdapt$_{poly}$ is \linebreak fp-tractable.
\end{description}

\noindent
Given the fp-intractability results we have at this time, none of these 
fp-tractability results are known to be minimal in the sense described at the 
end of Section \ref{SectResESC_FPSolv}. 

\section{Discussion}

\label{SectDisc}

In this section, we will first summarize our results for the problems 
ESCreate and ESAdapt (Section \ref{SectDiscSumm}). We then discuss the 
implications of these results for real-world implementations of emergent 
software systems (Section \ref{SectDiscImpRW}). Several
points raised in this discussion motivate the inclusion of new results;
so as not to disturb the flow of the discussion, proofs when required are
given in the appendix. Finally, we conclude with some general thoughts and 
caveats on how best to interpret and use the results of computational
complexity analyses such as ours within software engineering (Section
\ref{SectDiscCav}).

\subsection{Summary of Results}

\label{SectDiscSumm}

Our results establish that neither of our investigated problems ESCreate or
ESAdapt is solvable exactly by any algorithm when no restrictions are
placed on the structure or operation of the derived software systems 
(Results A.1, B.1, and B.2). This intractability
still holds for both problems relative to polynomial-time exact and exact
promise solvability (Results A.2, A.3, B.3, and B.4) and all three of the
considered forms of polynomial-time approximability (Results A.4--A.7 and 
B.5--B.10) even when software systems are restricted to run in time
polynomial in the sizes of their inputs. Moreover, both problems when so
restricted remain
fixed-parameter intractable relative to all parameters listed in Table
\ref{TabPrm} (Results A.8--A.11 and B.11--B.18), both individually and in 
many combinations and even when many parameters have values that are small
constants. That being said, there are several combinations of parameters
that yield fp-tractability for our problems (see Tables 
\ref{TabPrmResESC}--\ref{TabPrmResESA2}).

\begin{table}[t]
\caption{
A Detailed Summary of Our Parameterized Complexity Results for ESCreate$_{poly}$ and
ESAdapt$_{poly}$ under $Rew_{\#comp}()$ and $Rew_{CodeB}()$. a) Summary for ESCreate$_{poly}$.
Each column in this table is a result which holds relative to the parameter-set 
consisting of all parameters with a @-symbol in that column. If the result 
holds when a particular parameter has a constant value $c$, that is 
indicated by $c$ replacing @  for that parameter in that result's column.
Fp-intractability results are given first and fp-tractability results (shown 
in bold) are given last. 
}
\label{TabPrmResESC}
\vspace*{0.1in}
\centering
\begin{tabular}{ | p{1.1cm}  || p{0.6cm}  p{0.6cm} p{0.6cm}  p{0.6cm}  p{0.6cm}  p{0.6cm}  p{0.6cm}  | }
\hline
& A.8 & A.9 & A.10 
& A.11 & {\bf A.12} & {\bf A.13} & {\bf A.14} \\
\hline\hline
$|L_{int}|$ & @  & -- & -- 
            & -- & {\bf --} & {\bf --} & {\bf @} \\
\hline
$|L_{comp}|$ & -- & -- & -- 
            & 3 & {\bf --} & {\bf --} & {\bf @} \\
\hline\hline
$I_{ci}$    & -- & 2 & 2 
            & 2 & {\bf @} & {\bf @} & {\bf --} \\
\hline\hline
$C_{pi}$    & 1 & 1 & 1 
            & -- & {\bf --} & {\bf --} & {\bf --} \\
\hline
$C_{ri}$    & @ & -- & 2 
            & -- & {\bf @} & {\bf --} & {\bf --} \\
\hline\hline 
$S_{comp}$  & @ & -- & -- 
            & -- & {\bf --} & {\bf @} & {\bf --} \\
\hline
$S_{depth}$ & 2 & 3 & -- 
            & 2 & {\bf @} &  {\bf --} & {\bf --} \\
\hline
\end{tabular}
\end{table}

\begin{table}[t]
\caption{
A Detailed Summary of Our Parameterized Complexity Results for ESCreate$_{poly}$ and
ESAdapt$_{poly}$ under $Rew_{\#comp}()$ and $Rew_{CodeB}()$ (Cont'd). b) Summary for 
ESAdapt$_{poly}$ under $Rew_{\#comp}()$.
}
\vspace*{0.1in}
\label{TabPrmResESA1}
\centering
\begin{tabular}{ | p{1.1cm}  || p{0.6cm}  p{0.6cm} p{0.6cm}  p{0.6cm}  p{0.6cm}  p{0.6cm}  p{0.6cm}  | }
\hline
& B.11 & B.13 & B.15 
& B.17 & {\bf B.19} & {\bf B.20} & {\bf B.21} \\
\hline\hline
$|L_{int}|$ & @  & -- & --  
            & -- & {\bf --} & {\bf --} & {\bf @} \\
\hline
$|L_{comp}|$ & -- & -- & --  
            & 5 & {\bf --} & {\bf --} & {\bf @} \\
\hline\hline
$I_{ci}$    & -- & 2 & 2 
            & 2 & {\bf @} & {\bf @} & {\bf --} \\
\hline\hline
$C_{pi}$    & 1 & 1 & 1  
            & -- & {\bf --} & {\bf --} & {\bf --} \\
\hline
$C_{ri}$    & @ & -- & 2 
            & -- & {\bf @} & {\bf --} & {\bf --} \\
\hline\hline 
$S_{comp}$  & @ & -- & -- 
            & -- & {\bf --} & {\bf @} & {\bf --} \\
\hline
$S_{depth}$ & 3 & 4 & -- 
            & 3 & {\bf @} &  {\bf --} & {\bf --} \\
\hline
\end{tabular}
\end{table}

\begin{table}[t]
\caption{
A Detailed Summary of Our Parameterized Complexity Results for ESCreate$_{poly}$ and
ESAdapt$_{poly}$ under $Rew_{\#comp}()$ and $Rew_{CodeB}()$ (Cont'd). c) Summary for 
ESAdapt$_{poly}$ under $Rew_{CodeB}()$.
}
\vspace*{0.1in}
\label{TabPrmResESA2}
\centering
\begin{tabular}{ | p{1.1cm}  || p{0.6cm}  p{0.6cm} p{0.6cm}  p{0.6cm}  p{0.6cm}  p{0.6cm}  p{0.6cm}  | }
\hline
& B.12 & B.14 & B.16 
& B.18 & {\bf B.19} & {\bf B.20} & {\bf B.21} \\
\hline\hline
$|L_{int}|$ & @  & -- & --  
            & -- & {\bf --} & {\bf --} & {\bf @} \\
\hline
$|L_{comp}|$ & -- & -- & --  
            & 5 & {\bf --} & {\bf --} & {\bf @} \\
\hline\hline
$I_{ci}$    & -- & 2 & 2 
            & 2 & {\bf @} & {\bf @} & {\bf --} \\
\hline\hline
$C_{pi}$    & 1 & 1 & 1  
            & -- & {\bf --} & {\bf --} & {\bf --} \\
\hline
$C_{ri}$    & @ & -- & 2 
            & -- & {\bf @} & {\bf --} & {\bf --} \\
\hline\hline 
$S_{comp}$  & @ & -- & -- 
            & -- & {\bf --} & {\bf @} & {\bf --} \\
\hline
$S_{depth}$ & 3 & 4 & -- 
            & 3 & {\bf @} &  {\bf --} & {\bf --} \\
\hline
\end{tabular}
\end{table}

Our intractability results have a surprisingly broad applicability. This
is because all results (excluding polynomial-time exact promise unsolvability and
cost-inapproximability) are derived relative to two underlying problems --- namely,
CSCreate (which asks for any working software system $S$ relative to the given
interface and component libraries $L_{int}$ and $L_{comp}$ and software system requirements 
$R$) and CSAdapt (which asks for a working software system $S'$ relative to given $L_{int}$,
$L_{comp}$, $R$, and $S$ such that $Rew(S') \leq k$ for some given $k$). The use of
CSCreate and CSAdapt has two consequences. First, as CSCreate does not invoke
$Env()$ or $Rew()$, intractability results derived relative to CSCreate hold for
ESCreate relative to each possible choice of $Env()$ and $Rew()$ and hence ESCreate in
general.  By analogous reasoning, as CSAdapt
does not invoke $Env()$ but does explicitly invoke $Rew()$ relative to given bound $k$,
intractability results derived relative to CSAdapt hold for ESAdapt relative to 
only some possible choices of $Env()$ and $Rew()$ --- namely, any possible $Env()$ and
one of either $Rew_{\#comp}()$ or $Rew_{CodeB}()$. However, as these cases must
be solved by any algorithm that solves ESAdapt with arbitrary input $Env()$ and $Rew()$,
these intractability results also hold for ESAdapt in general. Second, as neither CSCreate 
nor CSAdapt optimizes derived software systems relative to $Rew()$, the observed forms
of intractability for ESCreate and ESAdapt cannot be attributed to optimization of
derived software systems relative to $Rew()$.

\subsection{Implications for Real-world Emergent Software Systems}

\label{SectDiscImpRW}

The foregoing is all well and good for our problems ESAdapt and ESCreate. However, how applicable
are our results to real-world emergent software system creation and adaptation? Given
the use of CSCreate (which ignores $Rew()$) to derive results for ESCreate, all of our 
results, both for tractability and intractability, apply directly to real-world emergent 
system creation. The situation is the same for ESAdapt relative to real-world emergent
system adaptation if one is only given the information in the stated input to ESAdapt.
However, this is not the case in the real-world emergent software system described
in \cite{PG+16,FP17}, where the adaptation process is also given the complete list of working
software systems created relative to $L_{int}$, $L_{comp}$, and $R$ (see Section 
\ref{SectForm}). If this list
is small enough (i.e., of size polynomial in the sizes of $L_{int}$, $L_{comp}$, and $R$),
then a linear scan of this list can determine a candidate with improved (indeed,
optimal) performance relative to $Rew()$ and adaptation can be done in time polynomial 
the sizes of $L_{int}$, $L_{comp}$, $R$, and $S$. 
The issue of whether an intractable problem can have some fixed part of its input (e.g., 
$L_{int}$, $L_{comp}$, $R$, and $c$) preprocessed (probably in non-polynomial time) to create 
information of polynomial size that can be used to solve subsequent instances of the
problem with a varying part (e.g., $S$, $Rew()$, and $k$) in polynomial time is addressed
by the complexity framework given in \cite{CD+02}. At present, relative to an
admittedly artificial and problem-specific reward function, we have
the following result.

\begin{description}
\item[Result B.22]:] If ESAdapt$_{poly}$  can have $L_{int}$, $L_{comp}$, $R$, and $c$ 
                      preprocessed to create polynomial-size information that can be used
                      to solve ESAdapt$_{poly}$  instances of arbitrary $S$, $Rew()$, and 
                      $k$ in polynomial time then the Polynomial Hierarchy $PH$ collapse, 
                      i.e., $PH = \Sigma^p_2$.
\end{description}

\noindent
There is good reason to believe that this result holds, as it is widely believed within
computer science that the Polynomial Hierarchy does not collapse, let alone to a level
as low in the Hierarchy as $\Sigma^p_2$ \cite{BDG88,Gol08,RST15}. Note that this 
result applies to {\em any} polynomial-length information obtained by preprocessing, and 
not just as in \cite{FP17,PG+16} a list of working software systems relative to $L_{int}$,
$L_{comp}$, $c$, and $R$. Though Result B.22 does not invalidate the adaptation 
strategy employed in \cite{FP17,PG+16},
it does suggest that it is not universally applicable to all instances of emergent
system adaptation, and that further complexity-theoretic work is necessary to determine
those situations in which it does and does not work.

Let us now consider some additional implications of our results:

\begin{itemize}
\item{
All of our polynomial-time and fixed-parameter results for runtime-restricted
ESCreate and ESAdapt hold when valid software systems are restricted to run (
and hence can be verified against system requirements) in time polynomial (and indeed,
as noted in our proofs, linear) in the size of the given input. This suggests
that the basic computational difficulty of these problems is tied not so much
to V\&V as the acts of creating working software systems from interface and
component libraries relative to a set of functional system requirements (in the case of
ESCreate) and attempting to merely improve (and not necessarily 
optimize) the performance of an existing working system (in the case of
ESAdapt). 
}
\item{
Our fixed-parameter intractability and tractability results for both of our 
problems relative to the parameter-set $\{I_{ci}, C_{pi}, C_{ri}, S_{depth}, 
S_{comp}\}$ are complete,  in the sense that the fp-status of our problems 
relative to each of the subsets of this parameter set is known (see Tables
\ref{TabSPCA_ESCreate}--\ref{TabSPCA_ESAdapt2}).
Moreover, our fp-tractability results are not only all based on different
runtime analyses of the same brute-force candidate software system
algorithm but are all minimal, in the sense that no subset of the
parameters invoked in any of these fp-tractability results yields
fp-tractability. This suggests that, short of additional restrictions,
the proof-of-concept brute-force enumeration
algorithm proposed in \cite{FP17,PG+16} for creating valid software systems 
may in fact be the best possible.
}
\item{
As all of our intractability results hold when component interfaces support 
specialized rather than universal composability, components are small, 
and components comply with the single-responsibility design pattern, claims that these 
attributes help tame the state space explosion when creating valid software systems from
interface and component libraries (\cite[pages 4 and 5]{FP17} and 
\cite[page 2]{PF19},
respectively) should be seen as incomplete, in the sense that additional
restrictions may be required to account for observed efficient operation.
The same holds relative to claims of the proposed 
linear bandit learning 
algorithm working with high probability \cite[page 340]{PG+16}
given our polynomial-time probabilistic inapproximability results for ESAdapt.
}
\end{itemize}

\noindent
Though the situation may change in
future as additional results are derived, the above does suggest that emergent
software system adaptation may indeed be computationally easier than
emergent software system creation, and that efficient adaptation (after
an initial computationally costly but unavoidable system creation) may be possible in 
emergent software systems. 

The above (in  particular, the second and third bullet-points)
very much begs the question of under what additional restrictions such 
efficient adaptation or creation might be possible. Answering this question will aid
both designing the best possible emergent software systems and explaining when and why 
existing systems do perform well (e.g., the claims listed in point (3) above). Known sets of
aspect restrictions
that yield fp-tractability like those reported in this paper are a start, but given the 
observed prevalence of fp-intractability, more aspects need to be analyzed. A good source
for these would be restrictions that ``break'' our intractability reductions, e.g.,
restricting the number of instances of required interfaces in a component that can be 
implemented by multiple components. One should also not underestimate the value of 
additional broad restrictions
like our requirement that derived software systems run in polynomial time. That
such restrictions are critical is demonstrated by the three results below, which follow
from the reductions in the proofs of Results A.1, B.1, and B.2 (in which the
values of $|L_{int}|, |L_{comp}|, I_{ci}, C_{pi}, C_{ri}, S_{comp}$, and $S_{depth}$ are
all very small constants) and the definition of fp-tractability.

\begin{description}
\item[{\bf Result A.15}] For any choice of $Rew()$ and $Env()$, 
$\la |L_{int}|, |L_{comp}|, I_{ci}, C_{pi}, C_{ri}, S_{comp}, \linebreak S_{depth}\ra$-ESCreate
is unsolvable.
\item[{\bf Result B.23}] For any choice of $Env()$, 
$\la |L_{int}|, |L_{comp}|, I_{ci}, C_{pi}, C_{ri}, S_{comp}, \linebreak S_{depth}\ra$-ESAdapt
under $Rew_{\#comp}()$ is unsolvable.
\item[{\bf Result B.24}] For any choice of $Env()$, 
$\la |L_{int}|, |L_{comp}|, I_{ci}, C_{pi}, C_{ri}, S_{comp}, \linebreak S_{depth}\ra$-ESAdapt
under $Rew_{CodeB}()$ is unsolvable.
\end{description}

\noindent
Another demonstration is the above-noted solvability of ESAdapt in polynomial time
if polynomial-size preprocessed information is available. Additional possible restrictions of
this type that might be useful could be on the forms of environment or reward functions or
the internal code structure and/or maximum code-length of components. It might also be of 
use to consider alternative problem formulations. For example, requiring that ESAdapt only
derive a new system whose $Rew()$-performance is a constant-valued improvement over
that of the given system $S$ (cf., the absolute rather than $S$-relative performance
bound currently implemented by $k$ in CSAdapt) will not lower the computational complexity
of ESAdapt relative to $Rew_{\#comp}()$ (as the twinning trick used to derive these 
intractability results means that working systems differ in at most one component and hence 
in $Rew_{\#comp}()$-value by at most one) but it may well work relative to other reward 
functions.

\begin{table}[t]
\caption{Systematic Parameterized Complexity Analysis of ESCreate$_{poly}$
          and ESAdapt$_{poly}$ Relative to the Parameter-set $\{I_{ci}, C_{pi}, C_{ri},
          S_{depth}, S_{comp}\}$. a) Analysis for ESCreate$_{poly}$. In this table, a 
          $\surd$ (X) symbol indicates that ES-create relative to the 
          parameter-set composed of the union of the row and column
          parameter-sets indexing that entry is fp-(in)tractable.
          The original results are subscripted with the result-number
          followed by a star; all other results are subscripted by
          number of the result they are derived from relative to
          Lemmas \ref{LemPrmProp1} and \ref{LemPrmProp2}.
}
\label{TabSPCA_ESCreate}
\vspace*{0.1in}
\centering
\begin{tabular}{| c || c | c | c | c |}
\hline
    & --- & $S_{comp}$ & $S_{depth}$ & $S_{comp}, S_{depth}$ \\
\hline\hline
---                      & NPh & X$_{A.8}$ & X$_{A.8}$ & X$_{A.8}$ \\
\hline
$I_{ci}$                 & X$_{A.9}$ & $\surd_{A.13*}$ & X$_{A.9}$ & $\surd_{A.13}$ \\
\hline
$C_{pi}$                 & X$_{A.8}$ & X$_{A.8}$ & X$_{A.8}$ & X$_{A.8}$ \\ 
\hline
$C_{ri}$                 & X$_{A.8}$ & X$_{A.8}$ & X$_{A.8}$ & X$_{A.8}$ \\ 
\hline
$I_{ci}, C_{pi}$         & X$_{A.9}$ & $\surd_{A.13}$ & X$_{A.9*}$ & $\surd_{A.13}$ \\
\hline
$I_{ci}, C_{ri}$         & X$_{A.10}$ & $\surd_{A.13}$ & $\surd_{A.12*}$ & $\surd_{A.12}$ \\
\hline
$C_{pi}, C_{ri}$         & X$_{A.8}$ & X$_{A.8}$ & X$_{A.8}$ & X$_{A.8*}$ \\ 
\hline
$I_{ci}, C_{pi}, C_{ri}$ & X$_{A.10*}$ & $\surd_{A.13}$ & $\surd_{A.12}$ & $\surd_{A.12}$ \\
\hline
\end{tabular}
\end{table}

\begin{table}[t]
\caption{Systematic Parameterized Complexity Analysis of ESCreate$_{poly}$
          and ESAdapt$_{poly}$ Relative to the Parameter-set $\{I_{ci}, C_{pi}, C_{ri},
          S_{depth}, S_{comp}\}$ (Cont'd). b) Analysis for ESAdapt$_{poly}$ under
          $Rew_{\#comp}()$.
}
\label{TabSPCA_ESAdapt1}
\vspace*{0.1in}
\centering
\begin{tabular}{| c || c | c | c | c |}
\hline
    & --- & $S_{comp}$ & $S_{depth}$ & $S_{comp}, S_{depth}$ \\
\hline\hline
---                      & NPh & X$_{B.11}$ & X$_{B.11}$ & X$_{B.11}$ \\
\hline
$I_{ci}$                 & X$_{B.13}$ & $\surd_{B.20*}$ & X$_{B.13}$ & $\surd_{B.20}$ \\
\hline
$C_{pi}$                 & X$_{B.11}$ & X$_{B.11}$ & X$_{B.11}$ & X$_{B.11}$ \\ 
\hline
$C_{ri}$                 & X$_{B.11}$ & X$_{B.11}$ & X$_{B.11}$ & X$_{B.11}$ \\ 
\hline
$I_{ci}, C_{pi}$         & X$_{B.13}$ & $\surd_{B.20}$ & X$_{B.13*}$ & $\surd_{B.20}$ \\
\hline
$I_{ci}, C_{ri}$         & X$_{B.15}$ & $\surd_{B.20}$ & $\surd_{B.19*}$ & $\surd_{B.19}$ \\
\hline
$C_{pi}, C_{ri}$         & X$_{B.11}$ & X$_{B.11}$ & X$_{B.11}$ & X$_{B.11*}$ \\ 
\hline
$I_{ci}, C_{pi}, C_{ri}$ & X$_{B.15*}$ & $\surd_{B.20}$ & $\surd_{B.19}$ & $\surd_{B.19}$ \\
\hline
\end{tabular}
\end{table}

\begin{table}[t]
\caption{Systematic Parameterized Complexity Analysis of ESCreate$_{poly}$
          and ESAdapt$_{poly}$ Relative to the Parameter-set $\{I_{ci}, C_{pi}, C_{ri},
          S_{depth}, S_{comp}\}$ (Cont'd). c) Analysis for ESAdapt$_{poly}$ under
          $Rew_{CodeB}()$.
}
\label{TabSPCA_ESAdapt2}
\vspace*{0.1in}
\centering
\begin{tabular}{| c || c | c | c | c |}
\hline
    & --- & $S_{comp}$ & $S_{depth}$ & $S_{comp}, S_{depth}$ \\
\hline\hline
---                      & NPh & X$_{B.12}$ & X$_{B.12}$ & X$_{B.12}$ \\
\hline
$I_{ci}$                 & X$_{B.14}$ & $\surd_{B.20*}$ & X$_{B.14}$ & $\surd_{B.20}$ \\
\hline
$C_{pi}$                 & X$_{B.12}$ & X$_{B.12}$ & X$_{B.12}$ & X$_{B.12}$ \\ 
\hline
$C_{ri}$                 & X$_{B.12}$ & X$_{B.12}$ & X$_{B.12}$ & X$_{B.12}$ \\ 
\hline
$I_{ci}, C_{pi}$         & X$_{B.14}$ & $\surd_{B.20}$ & X$_{B.14*}$ & $\surd_{B.20}$ \\
\hline
$I_{ci}, C_{ri}$         & X$_{B.16}$ & $\surd_{B.20}$ & $\surd_{B.19*}$ & $\surd_{B.19}$ \\
\hline
$C_{pi}, C_{ri}$         & X$_{B.12}$ & X$_{B.12}$ & X$_{B.12}$ & X$_{B.12*}$ \\ 
\hline
$I_{ci}, C_{pi}, C_{ri}$ & X$_{B.16*}$ & $\surd_{B.20}$ & $\surd_{B.19}$ & $\surd_{B.19}$ \\
\hline
\end{tabular}
\end{table}

\subsection{Computational Complexity Analysis and \\ Software Engineering}

\label{SectDiscCav}

Quite aside from the utility of intractability results for ESCreate and
ESAdapt in
answering questions about the operation of existing emergent software 
systems and the design of new ones, the mere existence of such results 
relative to standard types of intractability such as Turing unsolvability
and $NP$- and $W$-hardness has additional far-reaching consequences. This is
because of structural complexity theory, the subdiscipline of theoretical computer
science that investigates the (non)inclusion relationships between various complexity
classes \cite{BDG88,BDG90,CompZoo,Joh90}. As complexity classes are typically
defined relative to particular types of algorithms (e.g., $P$, $EXP$, and $PSPACE$
are the classes of problems solvable in polynomial time, exponential time, and
polynomial space, respectively), standard intractability results typically imply many other 
intractability results. For example, given the imminent availability
of functional quantum computers and the accompanying hoopla \cite{GI19}, one might think 
that polynomial-time intractability (in particular, $NP$-hardness) results like those
given here are moot. However, it is widely believed in computer science that $NP 
\not\subseteq BQP$, where $BQP$ (Bounded-error Quantum Polynomial Time) is considered the 
largest class of problems solvable by usable quantum computer algorithms \cite{BV97}. If any 
$NP$-hard problem is in $BQP$ then this conjecture is false \cite{BV97}, rendering such 
quantum solvability of $NP$-hard problems like ours very unlikely. 

The above, in combination
with the points raised and implications described earlier in Section
\ref{SectDisc}, highlights why
computational complexity analyses are useful in software engineering and moreover the 
context in which they can be used most productively --- namely, as part of an ongoing 
dialogue between software engineers and theoretical computer scientists, in which questions 
raised by members of one group, e.g., 

\begin{itemize}
\item Why do our systems (not) work as well as they do? 
\item Does this restriction on our systems matter, and if so, how?
\item How do we formulate relevant problems for analysis? 
\item What types of algorithms do (not) exist for our problems?
\end{itemize}

\noindent
inspire both investigations by and new questions from the other.

Before we close out this subsection, two final caveats are in order. First,
it is important to note that the brute-force search algorithms underlying all 
of our fp-tractability results are not immediately usable in real-world 
emergent software systems because the running times of these algorithms are 
(to be blunt, ludicrously) exorbitant. This difficulty is also not as bad as it
initially seems because it is well known within the parameterized complexity 
research community that once fp-tractability is proven relative to a 
parameter-set, surprisingly effective fp-algorithms can often be subsequently 
developed \cite{CF+15, FL+19}. This may involve either applying
advanced fp-algorithm design techniques or adding additional parameters to 
minimal parameter-sets, and typically results in algorithm runtimes
with greatly diminished non-polynomial terms and polynomial terms that are 
additive rather than multiplicative.

Second, and perhaps more importantly, the fixed-parameter (and indeed all
of the theoretical) analyses in this paper are only 
intended to sketch out what types of efficient algorithms
do and do not exist for our problems of interest and are not intended to be 
of immediate use. Given this, one may be tempted to conclude that our results 
and by proxy theoretical analyses in general are irrelevant. We believe
that such a view is short-sighted at best and dangerous at worst. Not knowing 
the precise conditions under which existing emergent software system creation 
and adaptation algorithms work 
well may have serious consequences, e.g., drastically slowed software creation 
time and/or unreliable software operation, if these conditions are violated. 
These consequences would be particularly damaging in the case of the fully 
automatic operations underlying emergent software systems. Given that reliable 
software operation is crucial and efficient software creation and adaptation is
at the very least desirable, the acquisition of such knowledge via a combination
of rigorous empirical and theoretical analyses should be a priority. With 
respect to theoretical analyses, it is our hope that the techniques and results
in this paper comprise a useful first step.

\section{Conclusions and Future Work}

\label{SectConc}

In this paper, we have applied computational complexity analysis to
evaluate algorithmic options for emergent software system creation and
adaptation relative to several popular types of exact and approximate
efficient solvability. We have shown that neither problem is correctly
and exactly solvable for all inputs when no restrictions are placed
on the structure and operation of valid software systems.
This intractability continues to hold 
relative to all examined types of efficient exact and approximate solvability
when valid software systems are restricted to run in times polynomial in
their input sizes. Moreover, both of our problems remain intractable under
a variety of additional restrictions on valid software system structure,
both individually and in many combinations. That being said, we give
sets of additional restrictions that do yield tractability for both
problems, as well as circumstantial evidence that emergent software
system adaptation is computationally easier than emergent software
system creation.

There are three promising directions for future research. The first of
these is to extend our current analyses by (1) establishing the fp-status
of our problems relative to all parameter-combinations from Table
\ref{TabPrm} and (2) considering additional parameters for and restrictions 
on emergent software systems. A productive source for the latter might
be restrictions which ``break'' the reduction-tricks used in our
intractability proofs. 
The second research direction is to consider additional computational problems
associated with emergent software systems. Of particular interest here is the
automated synthesis of both variants of existing components and new 
components from input-output examples \cite{FWP19,MWP18}. Analyses of these
problems may also give insights into the more general problem of program
synthesis \cite{BG+17,Gul10,GPS17}.
%The second research direction is to consider
%additional types of computational complexity analyses. Given that all
%of tractability results to date invoke exact fixed-parameter algorithms, it might
%of particular interest to consider techniques that address options for
%fixed-parameter approximate \cite{Mar08}, probabilistic \cite{Kwi15}, and 
%evolutionary \cite{KN13} solvability. 
The third research direction involves 
not so much emergent software systems but self-adaptive and component-based
software systems in general. Invoking the terminology in the taxonomy
of component models given in \cite{CS+11}, emergent software systems
have operation-based interfaces, provide horizontal and partial vertical 
component binding during composition, and explicitly distinguish between the 
required and provided interfaces of a component. These characteristics
are not only exploited by our intractability result reductions but are also
typical of many other component models \cite[Table 2]{CS+11}. Hence, it
would be of great interest to see if our analyses for emergent software systems
could be extended to software system creation and adaptation problems
arising relative to other proposed types of self-adaptive software systems 
\cite{CdL+09,dLG+17,dLG+13} and within component-based software engineering in 
general \cite{GMS02,HC00,VC+16}.

\section*{Acknowledgements}
The authors thank Barry Porter for invaluable discussions about
and comments on previous versions of this paper; though we have not been
able to address all of his concerns, we very much appreciate his efforts
in helping us create a better and more relevant paper.
TW would also like to thank Antonina Kolokolova for useful discussions on 
Rice's Theorem and Ulrike Stege for
introducing him to the research area of self-adaptive software systems.
TW was supported by National Science and Engineering
Council (NSERC) Discovery Grant 228104-2015.

% Generated by IEEEtran.bst, version: 1.13 (2008/09/30)

\appendix

\section{Proofs of Selected Results}

Let us first prove Result A.9.
In the reduction below, instead of encoding a dominating set of size $k$ implicitly in the 
components implementing procedures {\tt InSet1(), InSet2(), \ldots, InSetk()} as in the
reductions in the proofs of Lemmas \ref{LemRed_DS_CSCreate1} and 
\ref{LemRed_DS_CSCreateOPT1}, a candidate dominating set
is encoded explicitly in binary-valued vector {\tt vS} in component {\tt Base} in the
components used to implement the procedures {\tt vertexStatus1(), vertexStatus2(),
\ldots,} {\tt  vertexStatus(|V|()}. Moreover, checks to see if this candidate is 
an actual dominating set of size $k$ in $G$ are explicitly coded in component {\tt Base}
(see Figure \ref{FigRed4}. This allows us to reduce the maximum number of interfaces 
provided by any component and the maximum number of components implemented by any interface 
to 1 and 2, respectively, while maintaining a maximum component depth of 3.

\begin{figure}[t]
\begin{center}
\includegraphics[width=4.2in]{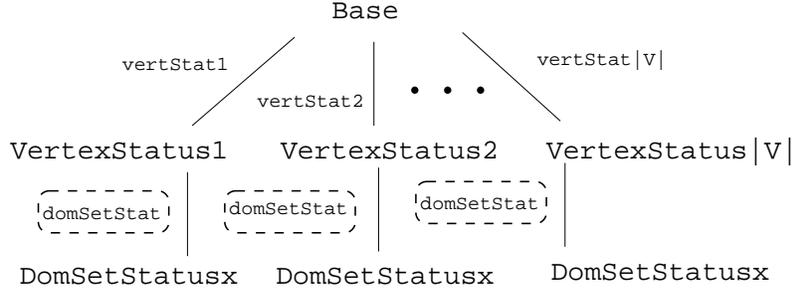}
\end{center}
\caption{General structure of valid software systems created by the reduction 
          in the proof of Result A.9. Note that index $x$ 
          in {\tt DomSetStatusx} is such that $x \in \{0,1\}$. Following the convention
          in Figure \ref{FigExESS3}, interfaces with multiple implementing
          components are enclosed in dashed boxes.
}
\label{FigRed4}
\end{figure}

\begin{description}
\item[{\bf Result A.9}] For any choice of $Rew()$ and $Env()$, if $\langle I_{ci}, C_{pi}, 
                   S_{depth}\ra$-ESCreate$_{poly}$ is fp-tractable then 
                   $P = NP$.
\end{description}
\begin{proof}
Consider the following polynomial-time Karp reduction from {\sc Dominating
set} to CSCreate$_{poly}$:
given an instance $I = \la G = (V,E), k\ra$ of {\sc Dominating set}, construct 
an instance $I' = \la R, L_{int}, L_{comp}, c\ra$ of CSCreate$_{poly}$ in which
$R$ is the same as that given in the reduction in the proof of Lemma
\ref{LemRed_DS_CSCreate1}. Let $L_{int}$ consist of $|V| + 2$
interfaces broken into three groups:

\begin{enumerate}
\item{
A single interface of the form

\begin{center}
\begin{verbatim}
    interface base {
         void main(Input I)
    }
\end{verbatim}
\end{center}
}
\item{
A set of $|V|$ interfaces of the form

\begin{center}
\begin{verbatim}
    interface vertStatJ {
         int vertexStatusJ()
    }
\end{verbatim}
\end{center}

\noindent
for $1 \leq {\tt J} \leq |V|$.
}
\item{
An interface of the form

\begin{center}
\begin{verbatim}
    interface domSetStat {
         int domSetStatus()
    }
\end{verbatim}
\end{center}
}
\end{enumerate}

\noindent
Let $L_{comp}$ consist of $|V| + 3$ components broken into three groups:

\begin{enumerate}
\item{
A single component of the form

\begin{center}
\begin{verbatim}
    component Base provides base 
                   requires vertStat1, vertStat2, ..., 
                            vertStat|V| {
         void main(Input I) {

             create integer array vS of length |V|

             vS[1] = vertexStatus1()
             vS[2] = vertexStatus2()
                 ...
             vS[|V|] = vertexStatus|V|()

             numFound = 0
             for i = 1 to |V| do
                 if vS[j] == 1 then
                       numFound = numFound + 1
             if numFound == k then
                 isCandidatekDomSet = True
             else
                 isCandidatekDomSet = False

             if isCandidatekDomSet then
                 numFound = 0
                 for i = 1 to |V| do
                     if v_I(x_i) and vS[i] == 1 then
                         numFound = numFound + 1

                 if numFound > 0 then
                     output 1
                 else
                     output 0
              else
                 output 0
         }
    }
\end{verbatim}
\end{center}
}
\item{
A set of $|V|$ components of the form

\begin{center}
\begin{verbatim}
    component VertexStatusJ provides vertStatJ {
                          requires domSetStat
         int vertexStatusJ() {
             return domSetStatus()
         }
    }
\end{verbatim}
\end{center}

\noindent
for $1 \leq {\tt J} \leq |V|$.
}
\item{
Two components of the form

\begin{center}
\begin{verbatim}
    component DomSetStatus0 provides domSetStat {
         int domSetStatus() {
             return 0
         }
    }

    component DomSetStatus1 provides domSetStat {
         int domSetStatus() {
             return 1
         }
    }
\end{verbatim}
\end{center}
}
\end{enumerate}

\noindent
Observe that 
in component {\tt Base}, if entry $i$ of array {\tt vS} has value
$1$, then vertex $i$ is in the candidate 
dominating set for $G$ encoded in {\tt vS}. Finally, let $c$ be component {\tt Base}
in $L_{comp}$. Note that the instance of CSCreate$_{poly}$ described
above can be constructed in time polynomial in the size of the given
instance of {\sc Dominating set};
moreover, as there is only a $|V|$-length assignment statement block and
two single-level loops in the component code that
each execute at most $|V| < |I|$ times, any component-based software system 
created relative to $L_{int}$, $L_{comp}$, and $c$
runs in time linear in the size of input $I'$.

Let us now verify the correctness of this reduction:

\begin{itemize}
\item Suppose there is a dominating set $D$ of size at most $k$ in the given
       instance of {\sc Dominating set}; if $|D| < k$, augment $D$ with
       $k - |D|$ arbitrary other vertices from $G$ such that $|D| = k$. 
       Construct a component-based
       software system consisting of $c$ and all $|V|$ components
       {\tt VertexStatusJ} for $1 \leq {\tt J} \leq |V|$ in which the
       {\tt domSetStat} interface in component {\tt VertexStatusJ} is
       implemented by component {\tt DomSetStatus1} if vertex {\tt J}
       is in $D$ and by component {\tt DomSetStatus0}
       otherwise. Observe that for each $(i_j, o_j) \in R$, this
       software system produces output $o_j$ given input $i_j$.
\item Conversely, suppose that the constructed instance of CSCreate$_{poly}$
       has a working component-based software system based on $c$
       relative to $L_{int}$, $L_{comp}$, and $R$. In order to
       accommodate all input-output pairs in $R$, the $|V|$ 
       {\tt vertexStatus} components must implement their
       {\tt domSetStat} interfaces such that the $1$-values in array 
       {\tt vS} specify a dominating set of size $k$ in $G$. Hence,
       the existence of a working component-based software system
       for the constructed instance of CSCreate$_{poly}$ implies the existence
       of a dominating set of size $k$ for the given instance of
       {\sc Dominating set}.
\end{itemize}

\noindent
The reduction above is thus correct. Given the $NP$-hardness of {\sc Dominating set}, 
this reduction implies that CSCreate$_{poly}$ is $NP$-hard when $I_{ci} = 2$, $C_{pi} = 1$, and 
$S_{depth} = 3$ and hence by Lemma \ref{LemPrmProp3} not fp-tractable relative to these 
parameters unless $P = NP$. The fp-intractability result for ESCreate$_{poly}$ then 
follows by contradiction from Observation \ref{ObsESCreateSolvability}.
\end{proof}

\begin{figure}[t]
\begin{center}
\includegraphics[width=3.5in]{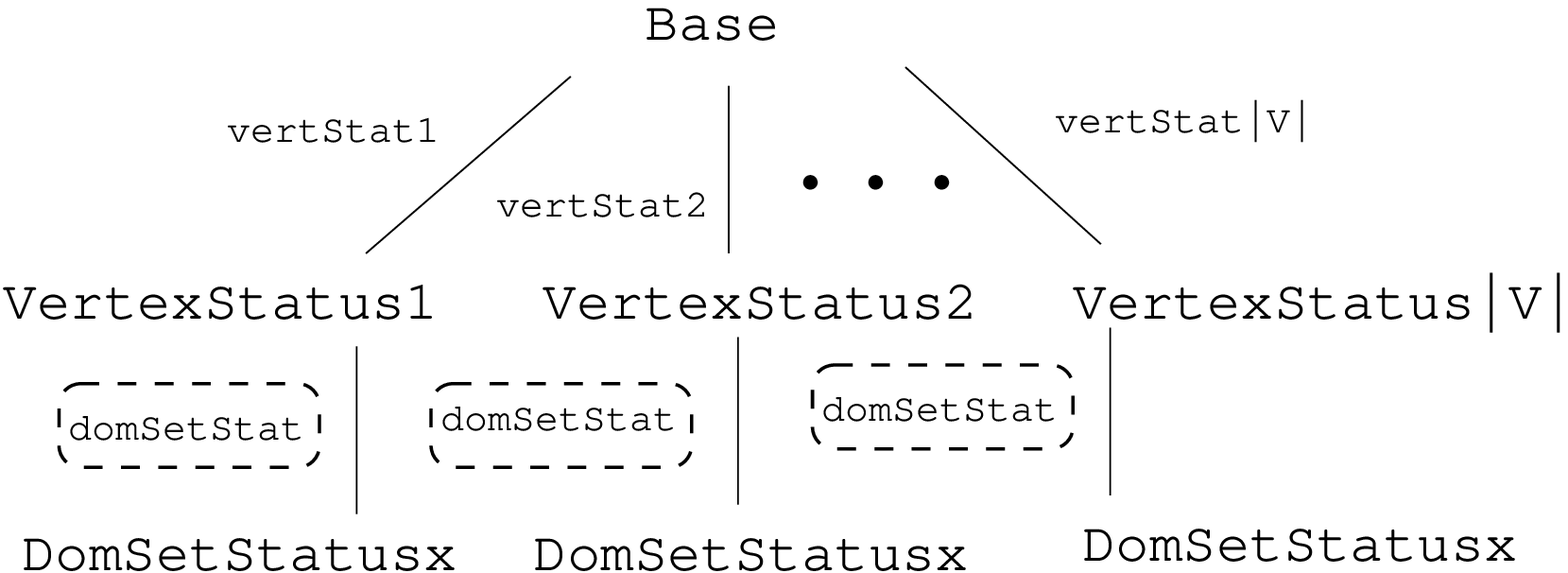}
\end{center}
\caption{General structure of valid software systems created by the reduction 
          in the proof of Result A.10. Note that index $x$ 
          in {\tt DomSetStatusx} is such that $x \in \{0,1\}$. Following the convention
          in Figure \ref{FigExESS3}, interfaces with multiple implementing
          components are enclosed in dashed boxes.
}
\label{FigRed5}
\end{figure}

\vspace*{0.15in}

Let us now prove Result A.10.
In the reduction below, we modify the reduction in the proof of Result A.9 to split the
call to procedures {\tt vertexStatus1(), vertexStatus2(), \ldots,} 
{\tt vertexStatus|V|()} in
component {\tt Base} over $|V|$ base-components (see Figured \ref{FigRed5}). This still 
allows us to have the values of the maximum number of interfaces provided by any component 
and the maximum number of components implemented by any interface as 1 and 2, respectively, 
as in the previous reduction but trades off constant-valued component depth for a 
constant-valued maximum number of interfaces required by any component.

\begin{description}
\item[{\bf Result A.10}] For any choice of $Rew()$ and $Env()$, if $\langle I_{ci}, C_{pi}, 
                    C_{ri}\ra$-ESCreate$_{poly}$ is fp-tractable then $P = NP$.
\end{description}
\begin{proof}
Consider the following polynomial-time Karp reduction from {\sc Dominating
set} to CSCreate$_{poly}$:
given an instance $I = \la G = (V,E), k\ra$ of {\sc Dominating set}, construct 
an instance $I' = \la R, L_{int}, L_{comp}, c\ra$ of CSCreate$_{poly}$ in which
$R$ is the same as that given in the reduction in the proof of Lemma
\ref{LemRed_DS_CSCreate1}. Let $L_{int}$ consist of $2|V| + 1$
interfaces broken into four groups:

\begin{enumerate}
\item{
A single interface of the form

\begin{center}
\begin{verbatim}
    interface base {
         void main(Input I)
    }
\end{verbatim}
\end{center}
}
\item{
A set of $|V| - 1$ interfaces of the form

\begin{center}
\begin{verbatim}
    interface baseJ {
         void callBaseJ(Input I, int[] vS)
    }
\end{verbatim}
\end{center}

\noindent
for $2 \leq {\tt J} \leq |V|$.
}
\item{
A set of $|V|$ interfaces of the form

\begin{center}
\begin{verbatim}
    interface vertStatJ {
         int vertexStatusJ()
    }
\end{verbatim}
\end{center}

\noindent
for $1 \leq {\tt J} \leq |V|$.
}
\item{
An interface of the form

\begin{center}
\begin{verbatim}
    interface domSetStat {
         int domSetStatus()
    }
\end{verbatim}
\end{center}
}
\end{enumerate}

\noindent
Let $L_{comp}$ consist of $2|V| + 2$ components broken into five groups:

\begin{enumerate}
\item{
A single component of the form

\begin{center}
\begin{verbatim}
    component Base provides base 
                   requires vertStat1, base2 {
         void main(Input I) {

             create integer array vS of length |V|

             vS[1] = vertexStatus1()

             callBase2(I, vS)
         }
    }
\end{verbatim}
\end{center}
}
\item{
A set of $|V| - 2$ components of the form

\begin{center}
\begin{verbatim}
    component BaseJ provides baseJ 
                   requires vertStatJ, base(J + 1) {
         void callBaseJ(Input I, int[] vS) {

             vS[J] = vertexStatusJ()

             callBase(J + 1)(I, vS)
         }
    }
\end{verbatim}
\end{center}

\noindent
for $2 \leq {\tt J} \leq |V| - 1$.
}
\item{
A single component of the form

\begin{center}
\begin{verbatim}
    component Base|V| provides base|V| 
                   requires vertStat|V| {
         void callBase|V|(Input I, int[] vS) {

             vS[|V|] = vertexStatus|V|()

             numFound = 0
             for j = 1 to |V| do
                 if vS[i] == i then
                    numFound = numFound + 1
             if numFound = k then
                 isCandidatekDomSet = True
             else
                 isCandidatekDomSet = False

             if isCandidatekDomSet then
                 numFound = 0
                 for i = 1 to |V| do
                     if v_I(x_i) and vS[i] == 1 then
                         numFound = numFound + 1

                 if numFound > 0 then
                     output 1
                 else
                     output 0
             else
                 output = 0
         }
    }
\end{verbatim}
\end{center}
}
\item{
A set of $|V|$ components of the form

\begin{center}
\begin{verbatim}
    component VertexStatusJ provides vertStatJ {
                          requires domSetStat
         int vertexStatusJ() {
             return domSetStatus()
         }
    }
\end{verbatim}
\end{center}

\noindent
for $1 \leq {\tt J} \leq |V|$.
}
\item{
Two components of the form

\begin{center}
\begin{verbatim}
    component DomSetStatus0 provides domSetStat {
         int domSetStatus() {
             return 0
         }
    }

    component DomSetStatus1 provides domSetStat {
         int domSetStatus() {
             return 1
         }
    }
\end{verbatim}
\end{center}
}
\end{enumerate}

\noindent
Finally, let $c$ be component {\tt Base} in $L_{comp}$. Note that the instance 
of CSCreate$_{poly}$ described above can be constructed in time polynomial in the size 
of the given instance of {\sc Dominating set};
moreover, as there are only single-level loops in the component code that
each execute at most $|V| < |I|$ times, any component-based software system 
created relative to $L_{int}$, $L_{comp}$, and $c$
runs in time linear in the size of input $I'$.

Observe that interfaces
{\tt base}, {\tt vertStatJ} for $1 \leq {\tt J} \leq |V|$, and {\tt domSetStat}
and components {\tt VertexStatusJ} for $1 \leq {\tt J} \leq |V|$, 
{\tt DomSetStatus0}, and {\tt DomSetStatus1} are the same as in the
reduction in the proof of Result A.9. Moreover, interfaces 
{\tt base} and {\tt baseJ}, and components {\tt Base} and {\tt callBaseJ}, 
$2 \leq {\tt J} \leq |V|$, effectively simulate interface {\tt base} and
component {\tt Base} in the same reduction. Hence, the proof of
correctness of the reduction in the proof of Result A.9 can, with
slight modifications, also proves the correctness of the reduction described above.
Given the $NP$-hardness of {\sc Dominating set}, this reduction 
implies that CSCreate$_{poly}$ is $NP$-hard when $I_{ci} = 2$, $C_{pi} = 1$, and $C_{ri} = 2$, 
and hence by Lemma \ref{LemPrmProp3} not fp-tractable relative to these parameters 
unless $P = NP$. The fp-intractability result for ESCreate$_{poly}$ then follows by 
contradiction from Observation \ref{ObsESCreateSolvability}.
\end{proof}

\begin{figure}[t]
\begin{center}
\includegraphics[width=4.2in]{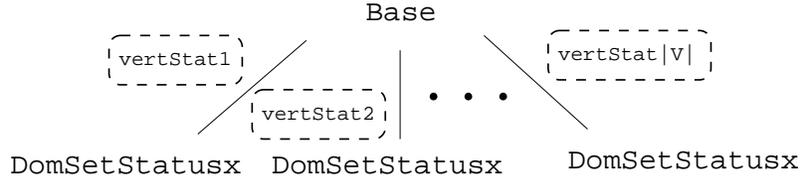}
\end{center}
\caption{General structure of valid software systems created by the reduction 
          in the proof of Result A.11. Note that index $x$ 
          in {\tt DomSetStatusx} is such that $x \in \{0,1\}$. Following the convention
          in Figure \ref{FigExESS3}, interfaces with multiple implementing
          components are enclosed in dashed boxes.
}
\label{FigRed7}
\end{figure}

\vspace*{0.15in}

Let us now prove Result A.11.
The reduction below modifies that given in the proof of Result A.9 such that there 
are still only two components {\tt DomSetStatus0} and {\tt DomSetStatus1} but now, each of 
these components contains customized versions of procedures {\tt vertexStatus1(), 
vertexStatus2(), \ldots, vertexStatus|V|()} instead of placing these procedures separately in
components
{\tt VertexStatus1, \linebreak VertexStatus2, \ldots, VertexStatus|V|} (see Figure \ref{FigRed7}).
By invoking the abilities of different interfaces to implement different copies of the
same component and implementing components to provide to an interface only that 
code which is used by that interface, not only is the maximum component depth reduced
by one but the size of the component library $L_{comp}$ is reduced to a constant
(namely, 3).

\begin{description}
\item[{\bf Result A.11}] For any choice of $Rew()$ and $Env()$, if  $\langle |L_{comp}|, 
                    I_{ci}, S_{depth}\ra$-ESCreate$_{poly}$ is fp-tractable then 
                    $P = NP$.
\end{description}
\begin{proof}
Consider the following polynomial-time Karp reduction from {\sc Dominating
set} to CSCreate$_{poly}$:
given an instance $I = \la G = (V,E), k\ra$ of {\sc Dominating set}, construct 
an instance $I' = \la R, L_{int}, L_{comp}, c\ra$ of CSCreate$_{poly}$ in which
$R$ is the same as that given in the reduction in the proof of Lemma
\ref{LemRed_DS_CSCreate1}. Let $L_{int}$ consist of $|V| + 1$
interfaces broken into two groups:

\begin{enumerate}
\item{
A single interface of the form

\begin{center}
\begin{verbatim}
    interface base {
         void main(Input I)
    }
\end{verbatim}
\end{center}
}
\item{
A set of $|V|$ interfaces of the form

\begin{center}
\begin{verbatim}
    interface vertStatJ {
         int vertexStatusJ()
    }
\end{verbatim}
\end{center}

\noindent
for $1 \leq {\tt J} \leq |V|$.
}
\end{enumerate}

\noindent
Let $L_{comp}$ consist of $3$ components broken into three groups:

\begin{enumerate}
\item{
A single component of the form

\begin{center}
\begin{verbatim}
    component Base provides base 
                   requires vertStat1, vertStat2, ..., 
                            vertStat|V| {
         void main(Input I) {

             create integer array vS of length |V|

             vS[1] = vertexStatus1()
             vS[2] = vertexStatus2()
                 ...
             vS[|V|] = vertexStatus|V|()

             numFound = 0
             for i = 1 to |V| do
                 if vS[i] == i then
                    numFound = numFound + 1
             if num_found == k then
                isCandidatekDomSet = True
             else
                isCandidatekDomSet = False

             if isCandidatekDomSet then
                 numFound = 0
                 for i = 1 to |V| do
                     if v_I(x_i) and vS[i] == 1 then
                         numFound = numFound + 1

                 if numFound > 0 then
                     output 1
                 else
                     output 0
              else
                 output 0
         }
    }
\end{verbatim}
\end{center}
}
\newpage
\item{
Two components of the form

\begin{center}
\begin{verbatim}
    component DomSetStatus0 provides vertStat1, vertStat2, ...
                                     vertStat|V| {

         int vertexStatus1() {
             return 0
         }

         int vertexStatus2() {
             return 0
         }

         ...

         int vertexStatus|V|() {
             return 0
         }
    }

    component DomSetStatus1 provides vertStat1, vertStat2, ...
                                     vertStat|V| {

         int vertexStatus1() {
             return 1
         }

         int vertexStatus2() {
             return 1
         }

         ...

         int vertexStatus|V|() {
             return 1
         }
    }
\end{verbatim}
\end{center}
}
\end{enumerate}

\noindent
Observe that 
in component {\tt Base}, if entry $i$ of array {\tt vS} has value
$1$ then vertex $i$ is in the candidate 
dominating set for $G$ encoded in {\tt vS}. Finally, let $c$ be component {\tt Base}
in $L_{comp}$. Note that the instance of CSCreate$_{poly}$ described
above can be constructed in time polynomial in the size of the given
instance of {\sc Dominating set};
moreover, as there are only a single $|V|$-length assignment block and two
single-level loops in the component code that
each execute at most $|V| < |I|$ times, any component-based software system 
created relative to $L_{int}$, $L_{comp}$, and $c$
runs in time linear in the size of input $I'$.

Let us now verify the correctness of this reduction:

\begin{itemize}
\item Suppose there is a dominating set $D$ of size at most $k$ in the given
       instance of {\sc Dominating set}; if $|D| < k$, augment $D$ with
       $k - |D|$ arbitrary other vertices from $G$ such that $|D| = k$. 
       Construct a component-based
       software system consisting of $c$ and $|V|$ copies of the
       {\tt domSetStatus} components
       in which the
       {\tt vertStatJ} interface in component {\tt Base} is
       implemented by component {\tt DomSetStatus1} if vertex {\tt J}
       is in $D$ and by component {\tt DomSetStatus0}
       otherwise. Observe that for each $(i_j, o_j) \in R$, this
       software system produces output $o_j$ given input $i_j$.
\item Conversely, suppose that the constructed instance of CSCreate$_{poly}$
       has a working component-based software system based on $c$
       relative to $L_{int}$, $L_{comp}$, and $R$. In order to
       accommodate all input-output pairs in $R$, the $|V|$ 
       {\tt vertStatJ} interfaces in component {\tt Base} must be
       implemented by {\tt domSetStatus} components
       such that the produced values in array 
       {\tt vS} specify a dominating set of size $k$ in $G$. Hence,
       the existence of a working component-based software system
       for the constructed instance of CSCreate$_{poly}$ implies the existence
       of a dominating set of size $k$ for the given instance of
       {\sc Dominating set}.
\end{itemize}

\noindent
The reduction above is thus correct.
Given the $NP$-hardness of {\sc Dominating set}, this reduction 
implies that CSCreate$_{poly}$ is $NP$-hard when
$|L_{comp}| = 3$, $I_{ci} = 2$, and $S_{depth} = 2$ and
hence by Lemma \ref{LemPrmProp3} not fp-tractable relative to these parameters unless 
$P = NP$. The fp-intractability result for ESCreate$_{poly}$ then follows by 
contradiction from Observation \ref{ObsESCreateSolvability}.
\end{proof}

\vspace*{0.15in}

Finally, let us prove Result B.22. This result
addresses the feasibility of preprocessing a fixed portion of the input to an 
intractable problem to create polynomial-size information that can be used to solve 
future instances of that problem relative to a varying portion of the problem input 
in polynomial time. This will be evaluated using the computational complexity 
framework described in \cite{CD+02}. The input to a problem of interest will be 
considered as a pair $\la x, y\ra$, where $x$ is the fixed part and $y$ is the 
varying part. As we will focus here on decision problems like CSCreate$_{poly}$ in 
the main text whose solution is either ``Yes'' or ``No'', each such problem will 
be considered a language of pairs. An example of such a problem is 

\begin{center}
{\sc C-SAT} $= \{\la (f), (p)\ra ~|~ p$ can be extended to a truth assignment satisfying $f\}$
\end{center}

\noindent
where fixed part $f$ is a formula in conjunctive normal form (a set $C = \{C_1, C_2,
\ldots C_n\}$ of clauses AND-ed together where each clause is a set of negated or
unnegated Boolean variables, e.g., $(v_1 OR \neg v_2 OR v_4) AND (v_2 OR v_3)$)
over a set of Boolean variables $V = \{v_1, v_2, \ldots, v_m\}$ and varying part $p$
is a partial truth assignment to the variables in $V$. Consider the following
reducibility between such problems.

\begin{definition}
\cite[Definition 2.8]{CD+02} A $||{\leadsto}$ reduction between two languages of
pairs $A$ and $B$ is a triple$f_1, f_2, g)$ where $f_1$ and $f_2$ are poly-size 
unary functions and $g$ is a binary polynomial-time function such that for any pair 
of strings $\la x, y\ra$ it holds that:

\begin{center}
$\la x, y\ra \in A$ if and only if $\la f_1(x, |y|), g(f_2(x, |y|), y)\ra \in B$
\end{center}
\end{definition}

\noindent
It is known that {\sc C-SAT} is hard under $||{\leadsto}$ reductions relative to a 
class ${||{\leadsto}}NP$ \cite[Section 3.1]{CD+02} and hence not preprocessable in the sense 
above unless the Polynomial Hierarchy $PH$ collapses, i.e., $PH = \Sigma^r_2$ \cite[Theorem 2.12]{CD+02}.\footnote{
Aficionados of \cite{CD+02} will protest that by Theorem 2.12, our result implies that
the Polynomial Hierarchy only collapses to $\Sigma^p_3$. However, when that theorem is
combined with the Karp-Lipton Theorem \cite{KL82} (the most typical version of
which states that if $NP \subseteq P/Poly$ then $PH = \Sigma^p_2$), we can strengthen
Theorem 2.12 to give our stated result.
}
We will now prove the result by a reduction based on that in the proof of
Result A.9 (see Figure \ref{FigRed8}).

\begin{figure}[t]
\begin{center}
\includegraphics[width=4.2in]{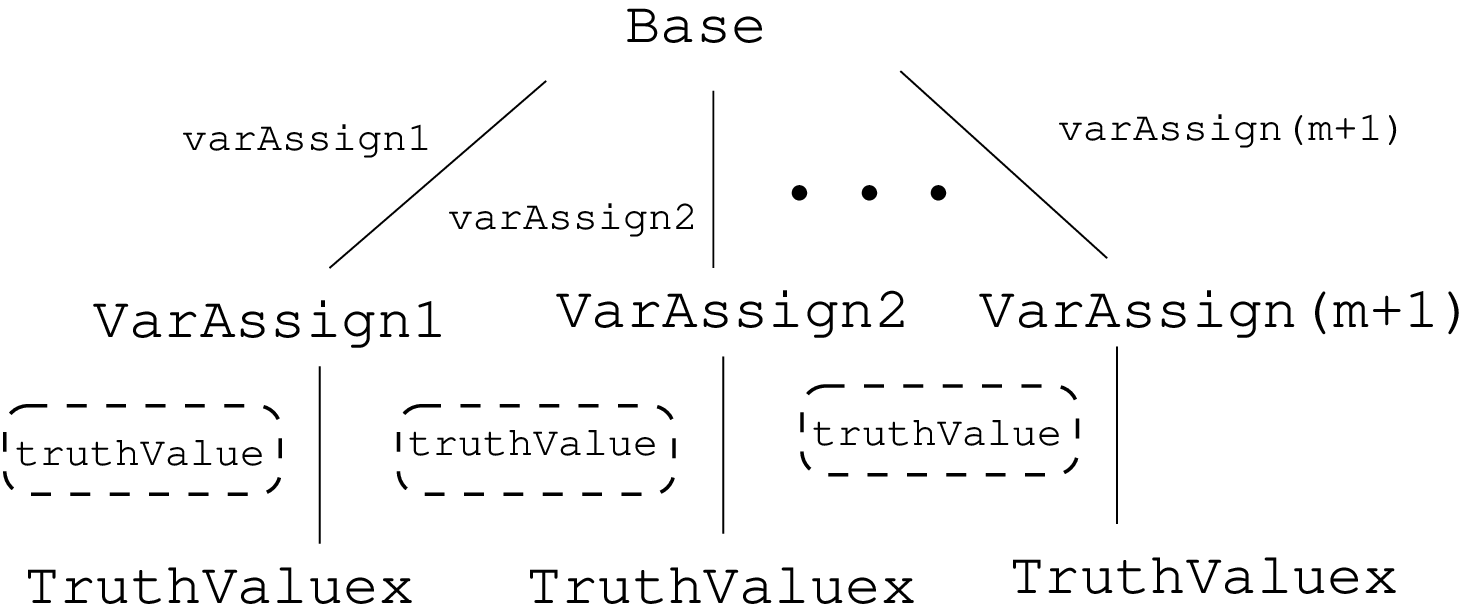}
\end{center}
\caption{General structure of valid software systems created
          by the reduction in the proof of Result B.22. Note that index $x$ in 
          {\tt TruthValuex} is such that $x \in \{True,False\}$. Following the 
          convention in Figure \ref{FigExESS3}, interfaces with multiple implementing
          components are enclosed in dashed boxes.
}
\label{FigRed8}
\end{figure}

\begin{description}
\item[{\bf Result B.22}:] If ESAdapt$_{poly}$ can have $L_{int}$, $L_{comp}$, $R$, and $c$ 
                    preprocessed to create polynomial-size information that can be 
                    used to solve ESAdapt$_{poly}$ instances of arbitrary $S$, 
                    $Rew()$, and $k$ in polynomial time then the Polynomial Hierarchy
                    $PH$ collapses, i.e., $PH = \Sigma^p_2$.
\end{description}
\begin{proof}
Consider the following $||{\leadsto}$ reduction from {\sc C-SAT} to CSAdapt$_{poly}$:
given an instance $I = \la (f), (p)\ra$ of {\sc C-SAT}, construct an instance 
$I' = \la (R, L_{int}, L_{comp}, c), \linebreak (S, Rew(), k)\ra$ of CSAdapt$_{poly}$ in which
$R$ has a single input-output pair $\{True, 1\}$. Let $f'$ be the version of $f$
in which there is a new Boolean variable $x_{m+1}$ that is OR-ed into every
clause and $p'$ be the version of $p$ such that $x_{m+1} = False$. Let 
$L_{int}$ consist of $m + 3$ interfaces broken into three groups:

\begin{enumerate}
\item{
A single interface of the form

\begin{center}
\begin{verbatim}
    interface Base {
         void main(Input I)
    }
\end{verbatim}
\end{center}
}
\item{
A set of $m + 1$ interfaces of the form

\begin{center}
\begin{verbatim}
    interface varAssignJ {
         Boolean varAssignJ()
    }
\end{verbatim}
\end{center}

\noindent
for $1 \leq {\tt J} \leq m + 1$.
}
\item{
An interface of the form

\begin{center}
\begin{verbatim}
    interface truthValue {
         Boolean truthValue()
    }
\end{verbatim}
\end{center}
}
\end{enumerate}

\noindent
Let $L_{comp}$ consist of $m + 4$ components broken into three groups:

\begin{enumerate}
\item{
A single component of the form

\begin{center}
\begin{verbatim}
    component Base provides base 
                   requires varAssign1, varAssign2, ..., 
                            varAssignm {
         void main(Input I) {

             create Boolean array vA of length m

             vA[1] = varAssign1()
             vA[2] = varAssign2()
                 ...
             vA[m+1] = varAssign(m+1)()

             if truth assignment vA satisfies f' then
                 output 1
             else
                 output 0
         }
    }
\end{verbatim}
\end{center}
}
\item{
A set of $m + 1$ components of the form

\begin{center}
\begin{verbatim}
    component VarAssignJ provides varAssignJ {
                         requires truthValue
         Boolean varAssignJ() {
             return truthValue()
         }
    }
\end{verbatim}
\end{center}

\noindent
for $1 \leq {\tt J} \leq m + 1$.
}
\item{
Two components of the form

\begin{center}
\begin{verbatim}
    component TruthValueFalse provides truthValue {
         Boolean truthValue() {
             return False
         }
    }

    component TruthValueTrue provides truthValue {
         Boolean truthValue() {
             return True
         }
    }
\end{verbatim}
\end{center}
}
\end{enumerate}

\noindent
Observe that in component {\tt Base}, if entry $i$ of array {\tt vA} 
has value $True$ ($False$), then variable $i$ in the candidate truth-assignment for 
$f$ encoded in {\tt vA} has value $True$ ($False$). Let $c$ be component 
{\tt Base} in $L_{comp}$ and $S$ be the component-based software system consisting
of {\tt Base} and {\tt VarAssign1}, {\tt VarAssign2},
\ldots, {\tt VarAssign(m+1)} such that {\tt varAssign(m+1)()} is implemented by 
{\tt TruthValueTrue} and all other \linebreak {\tt varAssignx()}, $1 \leq {\tt x} \leq m$, have
a random implementation relative to \linebreak {\tt TruthValueFalse} and {\tt TruthValueTrue}.
Observe that courtesy of the truth-setting of $v_{m+1}$ encoded in $S$, each clause
in $f'$ is satisfied and $S$ will output 1 on any input. Finally, let $Rew(S'')$ 
return the number of variable-assignments in the truth-assignment for
$f'$ encoded in component-based system $S''$ that differ from those in the partial
truth-assignment $p'$ and set $k = 0$. Note that the instance of CSAdapt$_{poly}$ 
described above can be constructed in time polynomial in the size of the given 
instance of {\sc C-SAT} and that this construction encodes the required functions
$f_1()$, $f_2()$, and $g()$ in the definition of $\leq_{comp}$-reducibility; 
moreover, any candidate software system created relative $L_{int}$, $L_{comp}$, and 
$c$ runs in time linear in the size of input $I'$.

Let us now verify the correctness of this reduction:

\begin{itemize}
\item Suppose there is a satisfying assignment $p''$ that extends $p$ and
       satisfies $f$. Construct a component-based
       software system $S'$ consisting of $c =$ {\tt Base} and
       {\tt VarAssign1}, {\tt VarAssign2}, \ldots, {\tt VarAssign(m+1)} such
       that {\tt varAssignx()}, $1 \leq {\tt x} \leq m$, implements the
       {\tt TruthVValue} component corresponding to the value of $v_x$ in $p''$
       and {\tt varAssign(m+1)()} implements {\tt TruthValueFalse}. Observe that 
       for the only requirement $r \in R$, this software system produces output $1$ 
       given input $True$; moreover, $Rew(S') = 0 \leq k = 0$.  Hence, the existence
       of a satisfying truth-assignment that extends $p$ and satisfies $f$ in
       the given instance of {\tt C-SAT} implies the existence of a working 
       component-based software system $S'$ for the constructed instance of 
       CSAdapt$_{poly}$ such that $Rew(S') \leq k$ 
\item Conversely, suppose that the constructed instance of CSAdapt$_{poly}$
       has a working component-based software system $S'$ based on $c$
       relative to $L_{int}$, $L_{comp}$, and $R$ such that $Rew(S') \leq k$. As
       $k = 0$, by the definition of $Rew()$, this means that the 
       truth-assignment to the variables in $V$ encoded in $S'$ cannot differ
       from any of the variable assignments in the partial truth-assignment $p'$.
       As $v_{m+1} = False$ in $p'$, this means that the truth-assignments to
       the remaining variables in $V$ (i) are an extension of $p$ and (ii)
       encode a truth assignment that satisfies $f$. Hence, the existence of a 
       working component-based software system for the constructed instance of 
       CSAdapt$_{poly}$ such that $Rew(S') \leq k$ implies the existence
       of a satisfying truth-assignment that extends $p$ and satisfies $f$ in
       the given instance of {\tt C-SAT}.
\end{itemize}

\noindent
The reduction above is thus correct. Given the ${||{\leadsto}}NP$-hardness of 
{\sc C-SAT}, this reduction implies that CSAdapt$_{poly}$ is ${||{\leadsto}}NP$-hard
as well and not preprocessable in the sense of \cite{CD+02} unless the
Polynomial Hierarchy $PH$ collapses, i.e., $PH = \Sigma^p_2$ (see the discussion
immediately prior to this proof). The same
non-preprocessability result for ESAdapt$_{poly}$
then follows by contradiction from Observation \ref{ObsESCreateSolvability}.
\end{proof}

\end{document}